\newenvironment{proofsketch}{\noindent\textit{Proof Sketch.}}{\hfill$\square$}
\newtheorem{assumption}{Assumption}
\long\def\comment#1{}
\newtheorem{lemma}{Lemma}
\newtheorem{claim}{Claim}
\newtheorem{theorem}{Theorem}
\newtheorem{definition}{Definition}
\newtheorem{corollary}{Corollary}
\newcommand{\norm}[1]{\left\lVert#1\right\rVert}
\newcommand{\Br}{\mathbb{R}}
\newcommand{\one}{\mathbf{1}}
\newcommand{\zero}{\mathbf{0}}
\def\MDP{\textbf{MDP}}
\def\EE{\mathbb{E}}
\def\QQ{\mathbb{Q}}
\def\Dd{\mathcal{D}}
\def\Ss{\mathcal{S}}
\def\Mm{\mathcal{M}}
\def\Kk{\mathcal{K}}
\def\Ii{\mathcal{I}}
\def\Ll{\mathcal{L}}
\def\Cc{\mathcal{C}}
\def\eE{\mathbf{e}}
\def\bB{\mathbf{b}}
\def\xX{\mathbf{x}}
\def\yY{\mathbf{y}}
\def\pP{\mathbf{p}}
\def\qQ{\mathbf{q}}
\def\lmb{\boldsymbol{\lambda}}
\def\gma{\boldsymbol{\gamma}}
\def\Lmb{\mathbf{\Lambda}}
\def\op{\mathbf{P}}
\def\opti{\mathbf{OP}}
\def\OP{(\mathbf{P})}
\def\AOP{\mathbf{\widetilde{P}}}
\def\QQ{\mathbf{Q}}
\def\rR{\mathbf{r}}
\def\sS{\mathbf{s}}
\newcommand{\ba}{\begin{array}}
\newcommand{\ea}{\end{array}}
\newcommand{\argminE}{\mathop{\mathrm{argmin}}} 
\newcommand{\argmaxE}{\mathop{\mathrm{argmax}}}
\def\MDP{\textsc{MDP}}
\def\DP{\textsc{DP}}
\begin{document}

\title{A QoS Framework for Service Provision in Multi-Infrastructure-Sharing Networks}

\author{Quang Minh Nguyen}
\email{nmquang@mit.edu}
\affiliation{%
  \institution{Massachusetts Institute of Technology}
   \country{USA}}

\author{Eytan Modiano}
\email{modiano@mit.edu}
\affiliation{%
  \institution{Massachusetts Institute of Technology}
  \country{USA}}


\begin{abstract}
 We propose a framework for resource provisioning with QoS guarantees in shared infrastructure networks. Our novel framework provides tunable probabilistic service guarantees for throughput and delay. Key to our approach is a Modified Dirft-plus-Penalty (MDP) policy that ensures long-term stability while capturing short-term probabilistic service guarantees using linearized upper-confidence bounds. We characterize the feasible region of service guarantees and show that our MDP procedure achieves mean rate stability and an optimality gap that vanishes with the frame size over which service guarantees are  provided. Finally,  empirical simulations validate our theory and demonstrate the favorable performance of our algorithm in handling QoS in   multi-infrastructure networks. 
\end{abstract}

\begin{CCSXML}
<ccs2012>
   <concept>
       <concept_id>10003033.10003079.10011672</concept_id>
       <concept_desc>Networks~Network performance analysis</concept_desc>
       <concept_significance>500</concept_significance>
       </concept>
   <concept>
       <concept_id>10003033.10003034.10003035</concept_id>
       <concept_desc>Networks~Network design principles</concept_desc>
       <concept_significance>500</concept_significance>
       </concept>
   <concept>
       <concept_id>10003033.10003079.10003080</concept_id>
       <concept_desc>Networks~Network performance modeling</concept_desc>
       <concept_significance>500</concept_significance>
       </concept>
   <concept>
       <concept_id>10003033.10003068.10003073.10003074</concept_id>
       <concept_desc>Networks~Network resources allocation</concept_desc>
       <concept_significance>500</concept_significance>
       </concept>
   <concept>
       <concept_id>10003033.10003068.10003073.10003075</concept_id>
       <concept_desc>Networks~Network control algorithms</concept_desc>
       <concept_significance>500</concept_significance>
       </concept>
   <concept>
       <concept_id>10003033.10003068.10003073.10003077</concept_id>
       <concept_desc>Networks~Network design and planning algorithms</concept_desc>
       <concept_significance>500</concept_significance>
       </concept>
 </ccs2012>
\end{CCSXML}

\ccsdesc[500]{Networks~Network performance analysis}
\ccsdesc[500]{Networks~Network design principles}
\ccsdesc[500]{Networks~Network performance modeling}
\ccsdesc[500]{Networks~Network resources allocation}
\ccsdesc[500]{Networks~Network control algorithms}
\ccsdesc[500]{Networks~Network design and planning algorithms}

\keywords{Quality-of-Service, Infrastructure-Sharing Networks, Resource Allocation, Network Control}


\maketitle

\section{Introduction}
\label{sec:intro}

Next generation (NextG) networks \cite{nextg_overview, quang_mobihoc, sdn_wireless, quang_dist_sdn} are  emerging due to the rapid  advancements of 5G/6G infrastructures 
and in response to the stringent Quality-of-Service (QoS) requirements of modern data-intensive applications.
A key attribute  of NextG networks is the architectural design that permits seamless integration of heterogeneous Infrastructure Providers (InP) to  offer  over-the-top (OTT) services to end clients \cite{bai_ott} (See Figure \ref{fig: network_overview}). Such  utilization of underlying shared infrastructure not only enhances  overall network capacity, but also facilitates robust service provisioning.
These advances in network capabilities call for the  development of efficient infrastructure sharing mechanisms that can meet emerging service requirements.


In this paper, we focus on designing efficient algorithms for  QoS-based service provision in shared multi-infrastructure networks.
Efficient service provision is essential for maintaining large-scale multi-infrastructure network operation in order to maintain utilization at the InPs, while meeting clients' service requirements  \cite{nextg_ulti1}. From the system perspective, the implementation of a network orchestrator to facilitate the  coordination of  infrastructure sharing  has been the de facto practice in NextG network deployment \cite{nextg_overview}. Therefore, the design of infrastructure sharing for networked systems has garnered significant research interest, encompassing both  generalized  resource allocation 
and specific contexts such as spectrum sharing \cite{spectrum_stable1} and network slicing \cite{slice_routing1}. 
 Most of the  previous work on QoS modeling focused on capturing  simple requirements such as throughput \cite{timevarying_network1} or energy level \cite{6387341}, which can be cast as long-term linear constraints. Unfortunately, more sophisticated requirements such as delay and reliability involve non-linear probabilistic constraints that are difficult to model and provision. While   there has been a rich literature on modeling real-world  QoS  requirements (see  \cite{qos1}
 and the references therein), modeling realistic QoS requirements may result in non-linear  problems whose solutions are often  NP-hard \cite{qos_approx_bad1} and where relaxation techniques result in  drastic sub-optimality \cite{qos_hard1}. 
 Moreover, there has been very  limited work on QoS-based OTT service provision  for  multi-infrastructure networks, mostly focusing on empirical evaluation  \cite{ott_qos2}. 
\begin{figure}[htbp]
\vspace{-0.25cm}
\centerline{\includegraphics[scale=0.35]{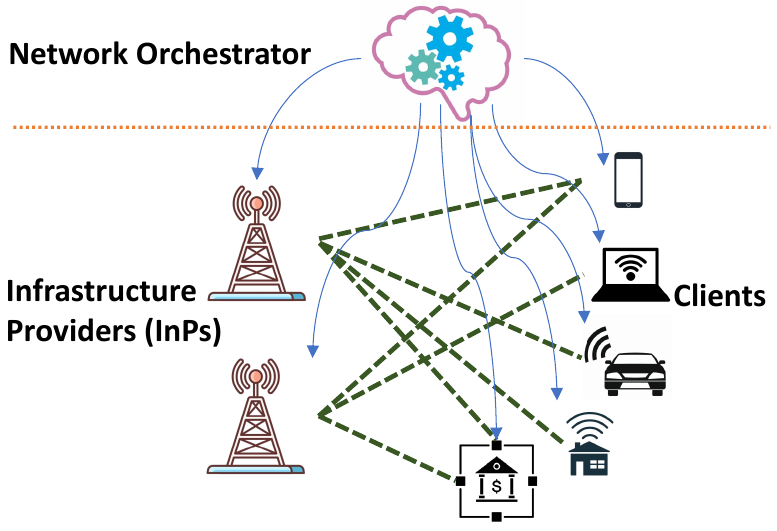}}
\caption{Infrastructure sharing in NextG networks.}
\label{fig: network_overview}
\vspace{-0.25cm}
\end{figure}

To exemplify a realistic shared multi-infrastructure network, consider a setting of two 5G cellular InPs and five clients running various applications, including  video streaming, environmental monitoring  and data backup, as in Figure \ref{fig: network_overview}. 
The InPs dynamically schedule links to serve the clients every 1 ms, which is the supported switching frequency of 5G cellular \cite{5g_speed1}.
Operating at a slower time-scale,  the network orchestrator  gathers the clients' job requests and computes the scheduling decisions for the InPs to serve the clients every time-frame.
A common approach in the literature is to treat each time-frame independently
to enforce QoS within that frame  \cite{periodic_qos}. For example, to support real-time video streaming with  delay requirement of 150 ms and negligible failure probability tolerance \cite{video_latency1}, the scheduler can set the time-frame size to be 150 ms and provision full delivery of the required throughput per frame. 
However, this approach neglects inter-frame QoS. 
Specifically, environmental monitoring  can allow delays of seconds to minutes, which correspond to thousands of orchestration time-frames, and does not demand immediate throughput delivery \cite{minutes_delay1}, while data backup typically has no hard deadline requirement,  in which the
data merely needs to be served eventually. 
The design of  QoS-based infrastructure sharing mechanisms  thus requires  a new QoS framework for OTT services that is powerful enough to ensure  broad applicability yet analytically tractable enough to enable the provisioning of service guarantees.



To address the aforementioned challenges, we propose a  novel QoS framework that captures delay and probabilistic  service guarantees for OTT services. The framework is specified by three parameters: time-frame size, guaranteed delivery throughput  and reliability. Specifically, our service provisioning is within a fixed time-frame, whose length  $T_s$ can be chosen to
capture stringent  latency requirements. For every client $i$ with  type-$c$ jobs, our framework specifies a guaranteed delivery throughput $\gamma^c_i$ that is satisfied with probability at least  $q^c_i$ in each time-frame. Moreover, our model also takes into account    queue stability of job requests to ensure long-term QoS for applications such as data backup, where timely but eventual service fulfillment is sufficient.
 Closest to our work are \cite{timevarying_network1, qos1}. In particular, \cite{timevarying_network1} formulated a carefully designed network utility maximization problem (NUM) with stability constraints to jointly provision fairness and throughput guarantees and \cite{qos1} provides QoS in the form of long-term guaranteed delivery ratio of deadline-constrained packets with probability 1. Our model extends NUM by incorporating both stability constraints and our proposed QoS framework. By setting $q^c_i = 0$  for all clients and jobs, our model reduces to the classical NUM  formulation in \cite{timevarying_network1}. For the required $\gamma^c_i$ and $q^c_i \to 1$, our model aligns with \cite{qos1} in providing guaranteed delivery ratio, while additionally satisfying queue stability. 
 Furthermore, our QoS model with appropriately chosen $\gamma^c_i$ and $q^c_i$ provides multi-time-frame delay guarantees. 
 This  enables the jobs to be served over multiple time-frames to  model applications with soft delay requirements (e.g. average delay) such as environmental monitoring and traditional Internet services. Finally, our framework's modeling flexibility in tuning  $T_s, \gamma^c_i$ and $q^c_i$ allows for various combinations of all these models to provide heterogeneous QoS for different classes of applications. 
To this end, we present a new QoS-based infrastructure sharing paradigm for NextG  networks. Our contributions are as follows:
 \begin{itemize}
     \item We present a novel QoS model that guarantees a specified OTT service within a bounded time-frame, which can be designed to capture hard delay requirements, with a pre-determined probability. 
     We demonstrate  the  application of our QoS model to a number of  realistic scenarios.
     \item We formulate the infrastructure sharing problem to ensure  feasibility of the QoS requirements and stability of the queues. 
     We use  upper-confidence bounds to linearize the probabilistic QoS constraints, and characterize the trade-offs between optimality of the service guarantees and the frame duration over which service is provisioned.
     \item Based on the linearization technique, we develop the efficient algorithm, which  minimizes the quadratic Lyapunov drift-plus-penalty, hence named Modified Drift-plus-Penalty (MDP), for the infrastructure sharing problem. 
     The MDP policy provably supports a stability region that is arbitrarily close to the original stability region, as the length of the time-frame  increases. Moreover, we prove that the MDP policy attains an optimality gap that vanishes with the frame length.  Empirical simulations demonstrate the efficiency of the MDP policy and its ability to handle realistic QoS requirements.
 \end{itemize}

The rest of the paper is organized as follows. We present our system model in Section \ref{sec: system_design}. In Section \ref{sec_qos_infras_sharing}, we develop our new QoS framework and the  formulation of the infrastructure sharing problem. In Section \ref{sec_algos}, we propose the efficient MDP policy to solve the problem. The results on the optimality and stability guarantees of MDP are presented in Section \ref{sec_guarantees_MDP}. In Section \ref{sec_qos_application}, we demonstrate the application of our QoS framework in  realistic  scenarios and  conduct empirical simulation of MDP to validate its ability to handle QoS constraints. We conclude the paper in Section \ref{sec_conclude}. 






\section{Preliminaries}
\label{sec: system_design}

\subsection{ Notations}
We use bold font for vectors (e.g. $\xX$) and write $(x)^+ = \max\{0, x\}$ for brevity. For $1 \leq p \leq \infty$, let  $\norm{\cdot}_p$  be the $\ell_p$-norm of  vector. In particular, we  have $\|\xX\|_1 = \sum_i |x_i|$ and $\|\xX\|_\infty = \max_i \{|x_i|\}$.   The vectors of all zeros and all ones are respectively denoted by $\zero$ and $\one$. For brevity, we denote $[n] = \{0, 1, .., n-1\}$ given an integer $n$.
\subsection{System Model}
\label{sec_sys_model}

\textbf{Network Stakeholders:} We consider a resource allocation scenario with  multiple Infrastructure Providers (InP) represented by $\Kk = \{1, 2, ..., K\}$ and multiple clients represented by $\Ii = \{1, 2, ..., I \}$. 
The InPs serve the clients to fulfill different types of jobs, indexed by $c\in \Cc = \{1, 2, ... ,C \}$. 
 InP $k$ is connected to and thus can serve a set $N(k)\subseteq \Ii$ of clients. Similarly,  client $i$ is connected to and thus can be served by a set $N(i) \subseteq \Kk$ of InPs. Let $E = \{(k, i): k\in \Kk, i \in N(k)\}$ be the set of all the links. 
We consider an uplink  system where the InPs receive jobs from the clients, which models a variety of real-world applications like video broadcasting or sensor monitoring; however, an analogous problem can be formulated for downlink system, and our results can be readily generalized to  setting  with both  uplink and downlink transmissions.

\textbf{Network Orchestrator:} The network orchestrator (NeO) is connected to all the InPs and clients to facilitate a centralized coordination over the network. From the system perspective, the NeO can be implemented via the Software-Defined Networking (SDN) paradigm, which separates the control plane from the data plane and provides a programmable management layer on top of the network; this  approach of placing a SDN centralized controller on top of the underlying network for flexible control has been extensively utilized in resource sharing architectures \cite{sdn_ran1, slice_routing1}. In our system model, the NeO helps facilitate the Service-Level Agreement (SLA) and meet QoS requirements by making scheduling decisions for the InPs.

\textbf{System Time-Scales:} We consider two time-scales in this system model (see Figure \ref{fig: event_sequence}):
\begin{enumerate}[(i)]
    \item \textit{the orchestration time-frame $t\in [0, T-1]$} over which the NeO gathers information such as incoming job requests and  computes the scheduling decisions for the InPs  to decide which clients and job types to serve. The time horizon $T$ captures the entire lifespan of the system. The  length of each time-frame $T_s$ can be chosen to capture the  latency requirement among  applications of interest.
    \item \textit{the network operational time slot $\tau\in [0, T_s-1]$} over which the InPs perform the actual scheduling  within every orchestration time-frame $t$. Specifically, each time-frame $t$ is divided into $T_s$ equal operational time slots. The actual length of each operational time slot should be chosen in accordance to the transmission scheduling intervals of the InPs; for example, 5G InPs support sub-millisecond scheduling  \cite{5g_speed1}, i.e. changing the schedule every 0.5 ms.  
\end{enumerate}
\begin{figure}[htbp]
\vspace{-0.23cm}
\centerline{\includegraphics[scale=0.75]{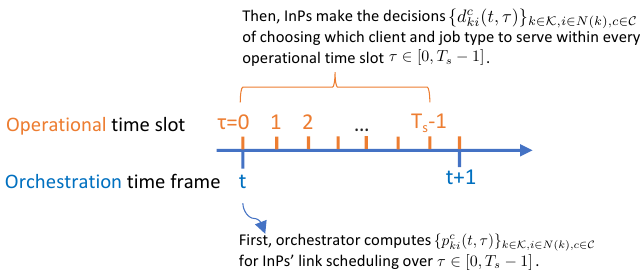}}
\caption{Sequence of events in our two time-scales model.}
\label{fig: event_sequence}
\end{figure}
As the supported scheduling intervals for  NextG network hardware are typically much smaller than the latency requirements of applications, we are interested in the regime of large $T_s$ (of the order of hundreds as shown in Section \ref{sec_example}).

{\textbf{Policy Parameterization and Channel State Model:} }
At every operational time slot, each InP executes a randomized scheduling protocol that chooses the client to be connected to and the type of job to be served. To model the randomized scheduling procedure to be implemented at the InPs, we use $p^c_{ki}(t, \tau)$ to denote the probability that  InP $k$ decides to serve type-$c$ jobs from client $i$ in the  operational time slot $\tau$ within the orchestration time-frame $t$, i.e. $\pP_k(t, \tau) = \{p^c_{ki}(t, \tau)\}_{ i\in N(k), c\in \Cc}$ forms the prior of link activation probabilities for the InP $k$:
\begin{align} 
    &\sum_{c\in \Cc} \sum_{i\in N(k)} p^c_{ki}(t, \tau) = 1,  \forall k\in\Kk,  \tau\in [T_s].\label{prob_constraint}
\end{align}
For brevity, we write $\pP(t, \tau) = \{\pP_k(t, \tau)\}_{k\in \Kk}$ to  denote the link activation probabilities of all the InPs at the current  time slot. 
At the start of  time-frame $t$, the NeO computes prior probabilities $\pP(t) = \{\pP(t, \tau)\}_{\tau \in[0, T_s-1]}$ for  scheduling in  the next $T_s$ network operational time slots. 
Then, at every subsequent operational time slot $\tau$, each InP $k$ (independently of other InPs)  activates at random a link $(k, i)$ for serving jobs of type $c$ with probability $p^c_{ki}(t, \tau)$, which is modeled by:
\begin{align*}
d^c_{ki}(t, \tau) = \mathbbm{1}\{ \text{InP $k$ chooses to serve type-$c$ jobs from client $i$} \}
\end{align*}
with $p^c_{ki}(t, \tau) = P(d^c_{ki}(t, \tau)  = 1)  $. Note that as only one client with the corresponding job type is chosen to be served by InP $k$ in every operational time slot, we also have $ \sum_{c\in \Cc} \sum_{i\in N(k)} d^c_{ki}(t, \tau) = 1$. 
In this model, we assume that each client can connect to  multiple InPs at every time slot $\tau$, a capability commonly supported in NextG networks through carrier aggregation \cite{downlink_ca1, uplink_ca2}. However, the model can be easily modified to restrict clients to one InP per slot.
 When InP $k$ activates the link $(k, i)$ to serve type-$c$ jobs from client $i$ at time slot $\tau$, i.e. $ d^c_{ki}(t, \tau)=1$, the jobs are successfully served with probability $r_{ki} \in (0, 1)$ and a transmission error occurs with probability $1-r_{ki}$. We thus denote the effective service by:
\begin{align*}
y^c_{ki}(t, \tau) = \mathbbm{1}\{ \text{InP $k$ successfully serves type-$c$ jobs from client $i$} \}
\end{align*}
with $P(y^c_{ki}(t, \tau)  = 1) = p^c_{ki}(t, \tau) r_{ki} $. Let $r_{max}= \max_{k \in \Kk, i\in \Ii} \{ r_{ki}\}$ be the statistic of the most reliable link.
Figure \ref{fig: event_sequence} summarizes the sequence of events across the two time-scales. For any policy $\pi$, we use the superscript $\pi$ to denote  that a variable is associated with $\pi$.


\textbf{Queue of Requests:} For any job entering the system, if it has not been served yet, it will be queued to be served later. We maintain such queues of requests at the orchestration time-scale. 
For simplicity of technical exposition, we assume that all the links $(k, i)$ have unit link capacity per operational time slot, during which one unit of traffic can be served.
At the start of orchestration time-frame $t$, each client $i$ receives $a^c_i(t)$ incoming new type-$c$ packets with the arrival rate of $\lambda^c_i = \EE[a^c_i(t)]$. 
The number of incoming new packets is bounded by a finite number $a^c_i(t) \leq A_{max}$ at every time-frame $t$.
Under the unit link capacity, in one time-frame of $T_s$ time slots, the $|\Kk|$ InPs choose  $|\Kk| T_s$ clients and their respective jobs in total to provide service. Thus, $\delta(\lmb) = \min_{i\in \Ii, c\in \Cc: \lambda^c_i > 0} \big\{\frac{\lambda^c_i}{|\Kk| T_s} \big\}$ is the minimum load factor  of the system. We assume that $ \delta(\lmb) \geq \delta_{min}  >0$, i.e. the system is non-empty.
The total number of times the client $i$ has received service for type-$c$ packets from InP $k$ within the  time-frame $t$ is $ y^c_{ki}(t) = \sum_{\tau=0}^{T_s-1} y^c_{ki}(t, \tau),  \forall k\in N(i)$.
Then, the total number of times it has received service for type-$c$ over all InPs is:
\begin{equation}
    \mu^c_i(t) = \sum_{k\in N(i)}  y^c_{ki}(t) =  \sum_{k\in N(i)}  \sum_{\tau=0}^{T_s-1} y^c_{ki}(t, \tau), \label{service_unit1}
\end{equation}
which is  the  effective  type-$c$ service for client $i$  in  time-frame $t$.
The queue of type-$c$ job requests at client $i$ then evolves as:
 \begin{equation}
        Q^c_i(t+1) = \big( Q^c_i(t) + a^c_i(t) -  \mu^c_i(t)\big)^+, \forall i\in \Ii, c\in \Cc. \label{queue_dynamics}
    \end{equation}

\subsection{ System Requirements}
\label{sec_resource_allo}
In this section, we first present the constraints that we impose on our decision variables  $\pP(t)$ to model the  system requirements. 
Note that under unit link capacity setting, the link activation probability $r_{ki} p^c_{ki}(t, \tau) = \EE[y^c_{ki}(t, \tau) ]$ is also the expected amount of type-$c$ traffic that  InP $k$ serves  client $i$ in the current time slot. The  management of the NeO, i.e. the control of prior probabilities $\pP(t)$ for scheduling, must take into account available resources at the InPs and clients, and ensure the  stability of the system. 


\textbf{Transmission Rates of Clients:}
At each operational time slot $\tau$ within  time-frame  $t$,  every client $i$ transmits to InP $k$ for type-$c$ job service with probability $p^c_{ki}(t, \tau) $. 
Thus, the client would have to make $ \sum_{c\in \Cc} \sum_{k\in  N(i)}  p^c_{ki}(t, \tau)$ transmissions on average, which should not exceed the  transmission  rate $U_i > 0$  of the client:
\begin{align}
 \sum_{c\in \Cc} \sum_{k\in  N(i)}  p^c_{ki}(t, \tau)  \leq U_i,  \forall i\in \Ii, \tau\in [T_s]. \label{pmax1}
\end{align}
Let $U_{max} = \max_{i\in \Ii} \{ U_i\} $ be the maximum transmission rate of the clients.
This constraint \eqref{pmax1} can  be used to capture power control or bandwidth management in infrastructure sharing of  networks. 

\textbf{Stability of Queues of Requests:}  Stability has been an integral requirement in many communication networked systems \cite{timevarying_network1} and  infrastructure sharing networks \cite{spectrum_stable1}. We require  the scheduling decisions to satisfy  mean rate stability of the system \cite{neelybook1}:
\begin{align}
    \lim_{T\to \infty}\frac{\EE\big[ \sum_{i\in\Ii} \sum_{c\in \Cc}Q^c_i(T) \big]}{T} =0, \label{queue_stability1}
\end{align}
or equivalently $\EE[\|\QQ(T)\|_1]=o(T)$ with $\QQ(T) = \{Q^c_i(T)\}_{i\in \Ii, c\in \Cc}$ being the vector of the queue lengths. The constraint \eqref{queue_stability1} requires that the   job requests at all clients should eventually be served.

\textbf{Service-Level Agreement:}
Prior to the system deployment requiring interactions between various network stakeholders, a service-level agreement (SLA) is established to determine the  QoS  provided by the InPs to the clients. This phase of SLA is a common  practice in infrastructure sharing \cite{sla1}. The SLA phase in our model is used to specify all the parameters for system requirements and QoS (Section \ref{qos_model}) to both the InPs and the clients. These parameters  will then be given to the NeO to coordinate the system deployment.



\subsection{Modeling QoS}
The modeling of system requirements, as presented above, can be well captured by linear constraints in the formulation. 
More generally, this approach of leveraging linear constraints is heavily used by the vast literature on resource allocation for infrastructure sharing in network \cite{ spectrum_stable1} thanks to its tractability. In fact, it can be readily extended to also capture certain simplistic QoS  requirements (e.g. throughput) that can be cast in linear form \cite{slice_routing1}. However, restricting the problem formulation to such linear constraints fails to capture  QoS  requirements such as delay, reliability or guaranteed service. 
In the next Section, we  present our novel QoS framework that captures delay and probabilistic service guarantees and is specified by three parameters: time-frame size ($T_s$), guaranteed delivery throughput ($\gma$) and reliability metric ($\qQ$). The time-frame size ($T_s$) over which service is provisioned can be chosen to meet the stringent  hard latency requirements of real-time applications. Our QoS model specifies  the guaranteed delivery throughput ($\gma$) analogous to the average throughput constraint of some previous works \cite{thruput_qos1}, yet only requires it to be satisfied with some specified probability ($\qQ$).  
Consequently, our framework  allows the jobs to be served in multiple time-frames, thereby  modeling  applications with soft delay requirements (e.g. environmental monitoring), and provides   long-term delay guarantees (in terms of time-frames) for such  applications. At the same time, our framework can ensure hard delay guarantees of up to the time-frame size  ($T_s$) by setting the reliability metric ($\qQ$) close to 1. Such hard delay requirements and delivery ratios have been considered in the literature \cite{qos1}. 





\section{QoS based Infrastructure Sharing Formulation}
\label{sec_qos_infras_sharing}

In this Section, we present our QoS based infrastructure sharing formulation.
We first propose our new QoS framework that can simultaneously capture delay, reliability and guaranteed service in Section \ref{qos_model} and then provide the full formulation for the infrastructure sharing control problem in Section \ref{full_problem}. .

\subsection{Model for QoS}
\label{qos_model}

First, the QoS constraint is specified within every time-frame $t$, i.e. over the $T_s$ operational time slots therein. From the system point of view, the NeO can choose $T_s$ such that the actual length of one orchestration time-frame is upper bounded by the most stringent delay requirements of all relevant applications, so that requested resources are always provisioned within the delay requirements. 
Second, note that (under the unit link capacity) the total OTT service $\mu^c_i(t)$ for type-$c$ jobs at client $i$ as in \eqref{service_unit1} is the sum of up to $|\Kk| T_s$ service opportunities from at most $|\Kk|$ InPs over $T_s$ time slots. Thus, by enforcing a guaranteed delivery ratio $\frac{1}{|\Kk| T_s} \mu^c_i(t) > \gamma^c_i$, we can provision the required service throughput to the client by controlling $\gamma^c_i\in (0,1)$. However, in many applications such as web-browsing or large  data backup, where the delay requirement is milder, such guaranteed delivery ratio does not need to be necessarily satisfied within one time-frame. 
Consequently, we further specify the reliability metric $q^c_i \in [0,1)$ as the  probability that the  delivery ratio is guaranteed and impose the QoS constraint for every time-frame $t$ as:
\begin{align}
    P\bigg( \frac{1}{|\Kk| T_s} \mu^c_i(t) > \gamma^c_i \bigg) \geq q^c_i, \quad \forall i\in \Ii, c\in \Cc. \label{qos_constraint1}
\end{align}
 The set of $(T_s, \gma, \qQ)$, consisting of the frame size $T_s$, the guaranteed delivery ratios $\gma = \{\gamma^c_i\}_{i\in \Ii, c\in \Cc}$ and the reliability metrics $\qQ = \{q^c_i\}_{i\in \Ii, c\in \Cc}$, is part of the SLA between the network stakeholders, i.e. InPs and clients, and the NeO prior to the system deployment. 
 Note that setting $q^c_i = 0$ corresponds to no QoS requirement, where the  constraint \eqref{qos_constraint1} trivially holds.
In   Section \ref{sec_qos_application}, we show through realistic examples how $T_s, \gma$ and $\qQ$ can be chosen to exemplify real-world applications' requirements.

\subsection{The Full Infrastructure Sharing Problem}
\label{full_problem}

Now, we proceed to present our full QoS based infrastructure sharing formulation. Given the guaranteed delivery ratios $\gma$ and reliability metrics $\qQ$ pre-defined in the SLA, we formally define the feasible domain $\Dd(\gma, \qQ) $ of our  link activation probabilities $\pP(t) =   \{p^c_{ki}(t, \tau)\}_{\tau\in[0, T_s-1], (k,i) \in E, c\in \Cc}$ that incorporates both the resource allocation constraints (Section \ref{sec_resource_allo}) and the probabilistic QoS constraints (Section \ref{qos_model}) as:
\begin{align}
\label{def_region}
     \Dd(\gma, \qQ) = \big\{ \pP(t) \in [0, 1]^{T_s |E| |\Cc|}\text{: such that \eqref{prob_constraint}, \eqref{pmax1}, \eqref{qos_constraint1}}\big\}.
\end{align}
After the scheduling  probabilities $\pP(t)\in   \Dd(\gma, \qQ)$ have been made at every time-frame $t$ where $r_{ki} p^c_{ki}(t,\tau) = \EE[ y^c_{ki}(t,\tau)]$ is  the expected job service that client $i$ receives from InP $k$ for type-$c$ jobs, a utility $f(\pP(t))$ is generated. We only require that the utility function $f(.)$ is continuous and concave, which hold  in practical scenarios such as elastic traffic  or fairness \cite{concave_network1}, and bounded within the feasible domain, i.e. $f(\pP(t)) \in [f_{min}, f_{max}]$. We impose no structural assumption on $f(.)$ for generality. As such, $f(.)$ can be designed to capture the InPs' costs \cite{bai_quang_track}, the clients' returns \cite{sdn_ran1} or both, as well as administering additional factors like fairness \cite{fairness1}. Our goal is  to maximize the long-term average 
 network utility while satisfying the realistic requirements, as described by $\Dd(\gma, \qQ)$, and ensuring the stability \eqref{queue_stability1} of the system:
 \begin{align*}
      \mathbf{(P):} \quad   \max_{\pP=\{\pP(t)\}_{t\geq0}}   &F(\pP) \triangleq \liminf_{T\to \infty}\frac{1}{T} \sum_{t=0}^{T-1}f(\pP(t))\\
         \text{such that}\quad &\pP(t) \in \Dd(\gma, \qQ), \\
         &  \EE[\QQ(T)] = o(T).
\end{align*}
The above problem $\OP$ is feasible if there exists a control policy $\pi$ satisfying the feasibility constraints and stabilizing the system \cite{neelybook1}. 
Denote by $\Lmb(\gma, \qQ)$ the stability region of the above problem $\OP$, describing the set of job arrival rate vectors $\lmb = \{\lambda^c_i\}_{i\in \Ii, c\in \Cc}$ whereby the problem $\OP$ is feasible.
\begin{definition}[Stability Region]
The stability region of the set of QoS specifications given by $\gma$ and $\qQ$ is defined as:
    \begin{align*}
    \Lmb(\gma, \qQ)= \{&\lmb = \{\lambda^c_i\}_{i\in \Ii, c\in \Cc}\in \Br_{\geq0}^{|\Ii||\Cc|}:  \text{$\exists \pi$ such that } \\
    & \pP^\pi(t) \in  \Dd(\gma, \qQ), \forall t\geq 0 \text{ and } \EE[\QQ^\pi(T)] = o(T)  \}.
\end{align*}
\end{definition}

In the SLA phase, the network stakeholders share their information such as the resource capacities of InPs and clients and the requested level of QoS (described through $\gma$ and $\qQ$) with the NeO, whose job is to solve $\OP$ to derive real-time scheduling decisions $\pP(t)$ for the InPs. In the next section, we proceed to develop  an efficient control algorithm for the infrastructure sharing problem.




\section{Efficient Algorithm for QoS Based Infrastructure Sharing}
\label{sec_algos}


In this section, we present the Modified Drift-plus-Penalty (MDP) policy as an efficient algorithm for  our spectrum sharing problem $\OP$. We start by providing the preliminaries of the conventional Drift-plus-Penalty (DP) approach as well as its inability to handle the probabilistic constraints in Section \ref{sec_hardness}. 
The algorithmic development of MDP is given in Section \ref{sec_MDP} and is based on a linearization technique to well approximate the probabilistic constraints. 


\subsection{Inability of Drift-plus-Penalty to Handle  QoS Constraints}
\label{sec_hardness}

\subsubsection{Complexity of the Probabilistic QoS Constraint}
\label{sec_hardness_qos}
First, from   \eqref{service_unit1}, noting that each InP $k$ independently initiates link activation, we know that $ \mu^c_i(t)= \sum_{k\in N(i)}  \sum_{\tau=0}^{T_s-1} y^c_{ki}(t, \tau)$ is the sum of $N(i) T_s$ independent Bernoulli random variables $y^c_{ki}(t, \tau) \sim Bern( r_{ki} p^c_{ki}(t, \tau))$. Thus, $\mu^c_i(t)$ follows Poisson Binomial distribution \cite{tang2019poisson}, so the probabilistic constraint \eqref{qos_constraint1} can be written as: 
\begin{align}
\nonumber 
     &P\big( \frac{1}{|\Kk| T_s} \mu^c_i(t) >  \gamma^c_{i}  \big) \geq q^c_{i}, \quad \forall i\in I, c\in \Cc\\
     \nonumber
     \Leftrightarrow&\sum_{l=0}^{\lfloor \gamma^c_{i} |\Kk| T_s\rfloor} \sum_{A\in F_l} \prod_{(k, \tau)\in A} r_{ki} p^c_{ki}(t, \tau)  \prod_{(k', \tau')\in A^c} (1- r_{k'i} p^c_{k'i}(t, \tau') ) \\
     &\quad \leq 1- q^c_{i}, \quad \forall i\in I, c\in \Cc, \label{too_hard1}
\end{align}
where $F_l$ is the set of all subsets of size $l$ that can be selected from ${\Kk \times [T_s]} = \{(k, \tau): k\in\Kk, \tau \in [0, T_s-1]\}$.  The constraint \eqref{too_hard1} is  highly non-linear and even takes exponential time  just to evaluate. 
 
\subsubsection{The Drift-plus-Penalty Policy}

 The problem $\OP$ can be considered as Network Utility Maximization (NUM) and solved via the DP framework \cite{neelybook1}. To achieve queue  stability, we consider the  Lyapunov potential:
 \begin{align}
 \label{quadra_lyapu}
    \Ll(\QQ(t)) =\frac{1}{2} \sum_{c\in \Cc}\sum_{i\in I}  Q^c_i(t)^2,
    \end{align}
and the drift of $\Ll(.)$ conditioned on the  queue lengths as:
\begin{align}
\label{lyapu_drift}
    \Delta (t) =\EE\big[ \Ll(\QQ(t+1)) -  \Ll(\QQ(t)) \big| \QQ(t) \big]. 
\end{align}
The next Lemma, whose proof is given in Appendix \ref{appen_lma_driftbound},  establishes the drift bound.
\begin{lemma}
\label{lma_driftbound}
For $B_1 = |\Ii||\Cc| A_{max}^2+ T_s^2 |\Kk|^2$, we have:
\begin{equation}
  \Delta (t) \leq B_1 +   \sum_{c\in \Cc}\sum_{i\in I} Q^c_i(t) \big( \lambda^c_i - \sum_{\tau = 0}^{T_s -1 }\sum_{k\in N(i)} r_{ki} p^c_{ki}(t, \tau)  \big). \label{drifting1}  
\end{equation}
\end{lemma}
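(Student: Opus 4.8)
The plan is to run the standard quadratic Lyapunov drift argument, specialized to the per‑frame service model. First I would remove the projection in the queue recursion \eqref{queue_dynamics}: since $((x)^+)^2 \le x^2$ for every real $x$, we get
\[
Q^c_i(t+1)^2 \le \big(Q^c_i(t) + a^c_i(t) - \mu^c_i(t)\big)^2 = Q^c_i(t)^2 + \big(a^c_i(t) - \mu^c_i(t)\big)^2 + 2\, Q^c_i(t)\big(a^c_i(t) - \mu^c_i(t)\big).
\]
Summing over $i\in\Ii$, $c\in\Cc$, halving, and invoking the definition \eqref{quadra_lyapu} of $\Ll(\cdot)$ yields
\[
\Ll(\QQ(t+1)) - \Ll(\QQ(t)) \le \tfrac12 \sum_{c\in\Cc}\sum_{i\in\Ii}\big(a^c_i(t)-\mu^c_i(t)\big)^2 \;+\; \sum_{c\in\Cc}\sum_{i\in\Ii} Q^c_i(t)\big(a^c_i(t)-\mu^c_i(t)\big).
\]

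The next step is to bound the second‑moment term by the constant $B_1$. Since $a^c_i(t),\mu^c_i(t)\ge 0$, we have $(a^c_i(t)-\mu^c_i(t))^2 \le a^c_i(t)^2 + \mu^c_i(t)^2$. The arrival part contributes at most $|\Ii||\Cc|A_{max}^2$ via $a^c_i(t)\le A_{max}$. For the service part the key observation is the per‑slot exclusivity of link activation: $\sum_{c\in\Cc}\sum_{i\in N(k)} d^c_{ki}(t,\tau)=1$ and $y^c_{ki}(t,\tau)\le d^c_{ki}(t,\tau)$, so $\sum_{c\in\Cc}\sum_{i\in\Ii}\mu^c_i(t) \le \sum_{k\in\Kk}\sum_{\tau=0}^{T_s-1}1 = |\Kk|T_s$; then, because all $\mu^c_i(t)$ are nonnegative, $\sum_{c\in\Cc}\sum_{i\in\Ii}\mu^c_i(t)^2 \le \big(\sum_{c\in\Cc}\sum_{i\in\Ii}\mu^c_i(t)\big)^2 \le T_s^2|\Kk|^2$. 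Hence $\tfrac12\sum_{c,i}(a^c_i(t)-\mu^c_i(t))^2 \le \tfrac12\big(|\Ii||\Cc|A_{max}^2 + T_s^2|\Kk|^2\big) \le B_1$.

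Finally I would take the conditional expectation given $\QQ(t)$ in \eqref{lyapu_drift}. Here I use that the prior $\pP(t)$ is computed at the start of frame $t$ and is therefore $\QQ(t)$‑measurable, and that the arrivals are independent of the queue state with $\EE[a^c_i(t)]=\lambda^c_i$; consequently $\EE[\mu^c_i(t)\mid\QQ(t)] = \sum_{k\in N(i)}\sum_{\tau=0}^{T_s-1}\EE[y^c_{ki}(t,\tau)] = \sum_{k\in N(i)}\sum_{\tau=0}^{T_s-1} r_{ki}\,p^c_{ki}(t,\tau)$, using $P(y^c_{ki}(t,\tau)=1)=r_{ki}p^c_{ki}(t,\tau)$. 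Substituting the two bounds above gives exactly \eqref{drifting1}.

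There is no serious obstacle; the only mild subtlety is establishing $\sum_{c,i}\mu^c_i(t)^2 \le (|\Kk|T_s)^2$, which rests on the per‑InP per‑slot exclusivity of service rather than on a naive per‑link bound (the latter would produce a larger, less clean constant). Everything else is bookkeeping.
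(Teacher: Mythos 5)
Your proposal is correct and follows essentially the same route as the paper's proof: drop the projection, expand the square, bound the quadratic term via the per-InP per-slot exclusivity $\sum_{c,i}\mu^c_i(t)\le|\Kk|T_s$ so that $\sum_{c,i}\mu^c_i(t)^2\le(|\Kk|T_s)^2$, and then take the conditional expectation using $\EE[y^c_{ki}(t,\tau)]=r_{ki}p^c_{ki}(t,\tau)$ and $\EE[a^c_i(t)]=\lambda^c_i$. The only difference is cosmetic: you use $(a-\mu)^2\le a^2+\mu^2$ where the paper uses $(a-\mu)^2\le 2A_{max}^2+2\mu^2$, which gives you a slightly tighter constant ($\tfrac12 B_1$) that still satisfies the stated bound with $B_1$.
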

Given the hyper-parameter $V>0$ that can be tuned to control the  utility-backlog trade-off, the conventional DP policy minimizes the upper-bound (given by \eqref{drifting1}) of the drift-plus-penalty $ \Delta (t) - V f(\pP(t))$, which is formally given by:
 \begin{equation}
        \pP^{\DP}(t) = \argmaxE_{ \pP(t) \in \Dd(\gma, \qQ) } h_t(\pP(t)),\label{drift_plus_penalty1}
    \end{equation}
 where the objective $h_t(.)$ is defined as:   
\begin{equation}
\label{drift_obj1}
    h_t(\pP(t)) = \sum_{c\in \Cc}\sum_{i\in I} Q^c_i(t) \sum_{\tau = 0}^{T_s -1 }\sum_{k\in N(i)} r_{ki} p^c_{ki}(t, \tau) + V  f( \pP(t)).
\end{equation}
While the DP policy is known to be an utility-optimal stabilizing algorithm \cite{neelybook1}, solving \eqref{drift_plus_penalty1} is equivalent to maximizing a concave objective over a domain comprised of   highly non-linear  constraints \eqref{too_hard1}, thereby rendering it intractable.
In the next section, we present our algorithmic development that deploys a linearization technique to approximate the probabilistic constraints, thereby resulting in an efficient policy with polynomial time complexity.

\subsection{The Modified Drift-plus-Penalty Algorithm}
\label{sec_MDP}
To address the aforementioned challenge, we extend the technique from the robust optimization literature \cite{Bertsimas2004ThePO} to approximate the probabilistic QoS constraint \eqref{qos_constraint1} by  linear constraints on  $\pP(t)$. 
As for $q^c_i = 0$, the constraint \eqref{qos_constraint1} trivially holds and corresponds to no QoS requirement, we consider the set: 
\begin{align}
 \Ss =\{ (i, c): i\in \Ii, c\in \Cc \text{ such that } q^c_i > 0\}   
\end{align}
 of clients and jobs that impose active QoS requirements.
The idea is that it is unlikely for all random variables $y^c_{ki}(t, \tau)$ to simultaneously deviate from their means, so resources are allocated to ``protect" situations where only a number  $\Gamma^c_i$, which is a hyper-parameter to be determined later, of variables  exceed their means. 
To formally describe the linearized constraint, we   define $\hat{p}^c_{ki}(t, \tau)= \max\{r_{ki} p^c_{ki}(t, \tau), 1- r_{ki}  p^c_{ki}(t, \tau)\} > 0$ and:
\begin{align}
\nonumber
    B^c_i(\Gamma^c_i, &\pP(t)) = \max\limits_{S^c_{ti}\cup\{(k_0, \tau_0)\}: S^c_{ti} \in 2^{N(i) \times [ T_s]}, |S^c_{ti}| =\lfloor \Gamma^c_i\rfloor    }  \\
    &\bigg\{  \sum_{(k,\tau)\in S^c_{ti}} \hat{p}^c_{ki}(t, \tau)+ (\Gamma^c_i - \lfloor \Gamma^c_i\rfloor) \hat{p}^c_{k_0 i}(t, \tau_0)\bigg\}, \label{big_b1}
\end{align}
which serves as the UCB term in the linearized constraint as:
\begin{equation}
\label{linearized_constraint1}
     |\Kk|T_s \gamma^c_i \leq \sum_{k\in N(i)} \sum_{\tau=0}^{T_s-1} r_{ki}  p^c_{ki}(t, \tau) - B^c_i(\Gamma^c_i, \pP(t)), \forall (i, c)\in \Ss.
\end{equation}
Note that the constraint \eqref{linearized_constraint1} can be equivalently converted into $|N(i)|T_s{|N(i)|T_s-1 \choose \lfloor \Gamma^c_i\rfloor}2^{\lfloor \Gamma^c_i\rfloor+1}$ linear constraints  of $\pP(t)$  by untangling all the max operators in \eqref{big_b1} and in $\hat{p}^c_{ki}(t, \tau)$. 
We  establish the probabilistic guarantee implied by \eqref{linearized_constraint1} in the following Theorem, whose proof is given in Appendix \ref{appen_concen_bound}.
\begin{theorem}
\label{concen_bound}
 If $\pP(t)\in [0, 1]^{T_s |E| |\Cc|}$ satisfies \eqref{linearized_constraint1}, we have:
 \begin{align}
 \label{chernoff}
    P\big( \frac{1}{|\Kk| T_s} \mu^c_i(t) >  \gamma^c_{i}  \big)  \geq  1-e^{-\frac{(\Gamma^c_i)^2}{2|\Kk| T_s} }, \forall (i, c)\in \Ss.
 \end{align}
 To enforce the probabilistic QoS constraint \eqref{qos_constraint1}, we can  choose $\Gamma^c_i$ such that $1-e^{-\frac{(\Gamma^c_i)^2}{2|\Kk| T_s} } = q^c_i$, or equivalently:
\begin{align}
\label{gamma_choice}
     \Gamma^c_i = \sqrt{2 |\Kk| T_s \log\big( (1-q^c_i)^{-1} \big)} = \Theta(T_s^{1/2}).
\end{align}
\end{theorem}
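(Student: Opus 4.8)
The plan is to derive \eqref{chernoff} from a one‑sided exponential (Chernoff) tail bound on the Poisson–Binomial variable $\mu^c_i(t)$, using the linearized constraint \eqref{linearized_constraint1} to push the threshold $\gamma^c_i|\Kk|T_s$ below the mean of $\mu^c_i(t)$ by an amount $B^c_i(\Gamma^c_i,\pP(t))$; the explicit expression \eqref{gamma_choice} then follows by solving $1-e^{-(\Gamma^c_i)^2/(2|\Kk|T_s)}=q^c_i$ for $\Gamma^c_i$. Fix $(i,c)\in\Ss$ and abbreviate $m=|N(i)|\,T_s$, $p_{k\tau}=r_{ki}p^c_{ki}(t,\tau)$, $Z_{k\tau}=p_{k\tau}-y^c_{ki}(t,\tau)$, $\mu=\EE[\mu^c_i(t)]=\sum_{k\in N(i)}\sum_{\tau\in[T_s]}p_{k\tau}$ (recall \eqref{service_unit1} and $P(y^c_{ki}(t,\tau)=1)=p_{k\tau}$), and $B=B^c_i(\Gamma^c_i,\pP(t))$. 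The $Z_{k\tau}$ are independent and mean‑zero, and each takes only the two values $p_{k\tau}$ and $p_{k\tau}-1$, hence is supported on an interval of length exactly $1$.

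Using \eqref{linearized_constraint1} in the form $\gamma^c_i|\Kk|T_s\le\mu-B$, one gets $\{\mu^c_i(t)\le\gamma^c_i|\Kk|T_s\}\subseteq\{\sum_{k,\tau}Z_{k\tau}\ge B\}$. For $\theta>0$, Markov's inequality applied to $\exp(\theta\sum_{k,\tau}Z_{k\tau})$, independence, and Hoeffding's lemma (for a mean‑zero variable on an interval of length $1$, $\EE[e^{\theta Z}]\le e^{\theta^2/8}$) give $P\big(\sum_{k,\tau}Z_{k\tau}\ge B\big)\le e^{-\theta B+m\theta^2/8}$; the minimizing choice $\theta=4B/m$ yields $P\big(\mu^c_i(t)\le\gamma^c_i|\Kk|T_s\big)\le e^{-2B^2/m}$.

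It remains to check $2B^2/m\ge(\Gamma^c_i)^2/(2|\Kk|T_s)$, which uses two elementary facts: (i) $\hat p^c_{ki}(t,\tau)=\max\{p_{k\tau},1-p_{k\tau}\}\ge\tfrac12$ for every $(k,\tau)$, so every admissible term in \eqref{big_b1} is at least $\tfrac12\lfloor\Gamma^c_i\rfloor+\tfrac12(\Gamma^c_i-\lfloor\Gamma^c_i\rfloor)=\tfrac12\Gamma^c_i$, whence $B\ge\Gamma^c_i/2$; and (ii) $m=|N(i)|T_s\le|\Kk|T_s$ since $N(i)\subseteq\Kk$. Combining, $2B^2/m\ge2(\Gamma^c_i/2)^2/m=(\Gamma^c_i)^2/(2m)\ge(\Gamma^c_i)^2/(2|\Kk|T_s)$, so $P(\mu^c_i(t)\le\gamma^c_i|\Kk|T_s)\le e^{-(\Gamma^c_i)^2/(2|\Kk|T_s)}$, and taking complements (note $\mu^c_i(t)/(|\Kk|T_s)>\gamma^c_i\Leftrightarrow\mu^c_i(t)>\gamma^c_i|\Kk|T_s$) gives \eqref{chernoff}. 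Finally, equating $1-e^{-(\Gamma^c_i)^2/(2|\Kk|T_s)}=q^c_i$ and solving gives $\Gamma^c_i=\sqrt{2|\Kk|T_s\log((1-q^c_i)^{-1})}$, with which \eqref{chernoff} reads $P(\mu^c_i(t)/(|\Kk|T_s)>\gamma^c_i)\ge q^c_i$, i.e.\ \eqref{qos_constraint1}; the $\Theta(T_s^{1/2})$ scaling is immediate since $|\Kk|$ and $q^c_i$ are fixed.

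I expect the main obstacle to be pinning down the exponent constant $1/(2|\Kk|T_s)$ rather than a constant‑factor‑worse one. A literal transcription of the Bertsimas–Sim robust‑optimization argument would control the deviation of $\mu^c_i(t)$ through the per‑coordinate bounds $\hat p^c_{ki}(t,\tau)$, producing moment‑generating factors $e^{\theta^2(\hat p^c_{ki})^2/2}$ and hence $\tfrac12\sum_{k,\tau}(\hat p^c_{ki})^2$ in the exponent, which can be as large as $m/2$ and gives a bound that is off by a factor (and vacuous when the $p_{k\tau}$ approach $0$ or $1$). The resolution — the only place the Bernoulli structure is used — is that $Z_{k\tau}$ lives on an interval of length $1$ for every $p_{k\tau}$, so the uniform sub‑Gaussian factor $e^{\theta^2/8}$ applies and the protection term $B^c_i$ only needs its crude bound $\Gamma^c_i/2$; these two slacks turn out to match exactly, which is why the final exponent is precisely $(\Gamma^c_i)^2/(2|\Kk|T_s)$. (The argument tacitly assumes $\Gamma^c_i\le|N(i)|T_s$ so that the collection in \eqref{big_b1} is nonempty; this holds in the large‑$T_s$ regime of interest, and otherwise feasibility of \eqref{linearized_constraint1} forces a degenerate deterministic situation in which \eqref{chernoff} is trivial.)
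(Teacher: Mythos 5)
Your proof is correct, but it takes a genuinely different route from the paper's. The paper normalizes each deviation by $\hat{p}^c_{ki}(t,\tau)$, builds weights $w^c_{ki}(t,\tau)$ from the maximizer of \eqref{big_b1} (proving $w^c_{ki}(t,\tau)\in[0,1]$ via an exchange/maximality argument), and then reduces the tail event to $P\big(\sum_{(k,\tau)}\eta^c_{ki}(t,\tau)w^c_{ki}(t,\tau)\geq \Gamma^c_i\big)$ so that it can invoke the Bertsimas--Sim concentration result (their Theorem 2) to obtain $e^{-(\Gamma^c_i)^2/(2|\Kk|T_s)}$. You instead keep the unnormalized, mean-zero deviations $Z_{k\tau}=r_{ki}p^c_{ki}(t,\tau)-y^c_{ki}(t,\tau)$, each supported on an interval of length $1$, apply the standard Hoeffding bound $P(\sum Z_{k\tau}\geq B)\leq e^{-2B^2/m}$ with $m=|N(i)|T_s$, and then recover the same exponent from the two crude estimates $\hat{p}^c_{ki}(t,\tau)\geq \tfrac12\Rightarrow B^c_i(\Gamma^c_i,\pP(t))\geq \Gamma^c_i/2$ and $|N(i)|\leq|\Kk|$; the exponent matches exactly, and the $\Gamma^c_i$ choice in \eqref{gamma_choice} follows identically. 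Your route is more elementary and self-contained: it needs no weight construction or exchange argument and, notably, it does not require the symmetry hypothesis of the cited robust-optimization bound (the Bernoulli deviations here are not symmetric unless $r_{ki}p^c_{ki}(t,\tau)=\tfrac12$, so Hoeffding's lemma is the cleaner tool for this step). The paper's route, by contrast, hews to the Bertsimas--Sim template and keeps the full protection term $B^c_i$ in play through the weights, which makes the budget-of-deviations interpretation of $\Gamma^c_i$ transparent, though numerically both arguments land on the same bound. Your tacit assumption that the collection in \eqref{big_b1} is nonempty (i.e.\ $\lfloor\Gamma^c_i\rfloor+1\leq |N(i)|T_s$) is likewise implicit in the paper's proof, which also works with the maximizer of \eqref{big_b1}, so it is not a gap relative to the paper.
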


As discussed in   Section \ref{sec_hardness},  solving $\max_{ \pP(t) \in \Dd(\gma, \qQ) } h_t(\pP(t))$ under the original DP policy is  intractable due to the probabilistic constraint \eqref{qos_constraint1} in the feasible domain $\Dd(\gma, \qQ)$. However, by Theorem \ref{concen_bound}, this probabilistic constraint can be enforced by the linearized constraint \eqref{linearized_constraint1}. 
To make the optimization problem tractable, we consider the linearized feasible domain $ \widetilde{\Dd}(\gma, \qQ)$, which replaces the probabilistic  constraint \eqref{qos_constraint1} in the definition of $\Dd(\gma, \qQ)$ by the above linearized constraint \eqref{linearized_constraint1}:
\begin{align}
\label{def_approx_region}
    \widetilde{\Dd}(\gma, \qQ) =  \big\{ \pP(t) \in [0, 1]^{T_s |E| |\Cc|}\text{: such that \eqref{prob_constraint}, \eqref{pmax1}, \eqref{linearized_constraint1}}\big\}.
\end{align}
 While  the linearized constraint \eqref{linearized_constraint1}  implies the probabilistic QoS constraint, 
 it contains $ B^c_i(\Gamma^c_i, \pP(t))$, which is a max operator over   exponentially many linear constraints. 
 In the next Theorem, whose proof is deferred to Appendix \ref{appen_thm_dual},
 we  derive an equivalent form for  $ \widetilde{\Dd}(\gma, \qQ)$  via duality, where the problem size  is of polynomial order. 

\begin{theorem}
\label{thm_dual}
An equivalent form for   $\widetilde{\Dd}(\gma, \qQ)$ is given by:
\begin{align*}
    \widetilde{\Dd}(\gma, \qQ) =  \big\{ &\pP(t) \in [0, 1]^{T_s |E| |\Cc|}\text{: such that \eqref{prob_constraint}, \eqref{pmax1},} \\
    &\quad \quad  \quad |\Kk|T_s \gamma^c_i \leq \sum_{k\in N(i)} \sum_{\tau=0}^{T_s-1} r_{ki} p^c_{ki}(t, \tau)- \Gamma^c_i s^c_i  \\
      &\quad \quad \quad \quad \quad -  \sum_{k\in N(i)} \sum_{\tau=0}^{T_s-1} v^c_{ki \tau},  \forall (i, c)\in \Ss, \\
      & \quad \quad  \quad  s^c_i \geq 0, \forall  (i, c)\in \Ss,\\
      &\text{ and } \forall   (i, c)\in \Ss, k\in N(i), \tau \in[T_s]:\\ 
      & \quad \quad  \quad  s^c_i + v^c_{ki \tau} \geq  r_{ki} p^c_{ki}(t, \tau),\\
     &  \quad \quad  \quad s^c_i + v^c_{ki \tau} \geq 1- r_{ki} p^c_{ki}(t, \tau), \\
     &  \quad \quad  \quad v^c_{ki \tau} \geq 0\big\}.
\end{align*}
\end{theorem}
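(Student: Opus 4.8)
I plan to prove Theorem~\ref{thm_dual} by recognizing $B^c_i(\Gamma^c_i,\pP(t))$ as the optimal value of a small linear program of the budgeted-uncertainty type from \cite{Bertsimas2004ThePO}, and then replacing the constraint \eqref{linearized_constraint1} by the \emph{dual} program, so that the ``$\leq B^c_i$'' requirement turns into an existentially quantified family of linear inequalities in the extra variables $s^c_i$ and $v^c_{ki\tau}$.

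\medskip
\noindent\textbf{Step 1 (LP reformulation of $B^c_i$).} Fix $(i,c)\in\Ss$ and the time-frame $t$. I will first show that
\[
B^c_i(\Gamma^c_i,\pP(t)) \;=\; \max\Big\{\, \textstyle\sum_{k\in N(i)}\sum_{\tau=0}^{T_s-1}\hat{p}^c_{ki}(t,\tau)\, z^c_{ki\tau} \ :\ 0\le z^c_{ki\tau}\le 1,\ \ \textstyle\sum_{k,\tau} z^c_{ki\tau}\le\Gamma^c_i \,\Big\}.
\]
This is the standard ``capped-simplex'' polytope: the objective is linear, so an optimal $z$ is attained at a vertex, and every vertex of $\{z:\ 0\le z\le1,\ \sum z\le\Gamma^c_i\}$ has at most one fractional coordinate; since all $\hat{p}^c_{ki}(t,\tau)>0$, the budget constraint binds at the optimum, which therefore sets $\lfloor\Gamma^c_i\rfloor$ coordinates (those carrying the largest $\hat{p}$ values) to $1$ and one further coordinate to $\Gamma^c_i-\lfloor\Gamma^c_i\rfloor$ — precisely the maximand defining $B^c_i$ in \eqref{big_b1}. (In the regime of interest $\Gamma^c_i=\Theta(T_s^{1/2})$ is below the ground-set size $|N(i)|T_s$, so $\lceil\Gamma^c_i\rceil$ coordinates are available and the combinatorial maximum is well posed.)

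\medskip
\noindent\textbf{Step 2 (Dualize).} The LP of Step~1 is feasible and bounded, so LP strong duality yields
\[
B^c_i(\Gamma^c_i,\pP(t)) \;=\; \min\Big\{\, \Gamma^c_i s^c_i + \textstyle\sum_{k,\tau} v^c_{ki\tau}\ :\ s^c_i+v^c_{ki\tau}\ge \hat{p}^c_{ki}(t,\tau)\ \forall(k,\tau),\ s^c_i\ge0,\ v^c_{ki\tau}\ge0 \,\Big\},
\]
with $s^c_i$ the multiplier of the budget constraint and $v^c_{ki\tau}$ the multipliers of the box constraints $z^c_{ki\tau}\le1$. Substituting this minimum into \eqref{linearized_constraint1}, the inequality $B^c_i(\Gamma^c_i,\pP(t))\le \sum_{k,\tau} r_{ki}p^c_{ki}(t,\tau)-|\Kk|T_s\gamma^c_i$ holds iff there exist $s^c_i\ge0$ and $v^c_{ki\tau}\ge0$ that are dual-feasible and satisfy $\Gamma^c_i s^c_i+\sum_{k,\tau} v^c_{ki\tau}\le \sum_{k,\tau} r_{ki}p^c_{ki}(t,\tau)-|\Kk|T_s\gamma^c_i$, i.e. $|\Kk|T_s\gamma^c_i\le \sum_{k,\tau} r_{ki}p^c_{ki}(t,\tau)-\Gamma^c_i s^c_i-\sum_{k,\tau} v^c_{ki\tau}$. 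Finally, the dual feasibility $s^c_i+v^c_{ki\tau}\ge\hat{p}^c_{ki}(t,\tau)=\max\{r_{ki}p^c_{ki}(t,\tau),\,1-r_{ki}p^c_{ki}(t,\tau)\}$ is equivalent to the pair $s^c_i+v^c_{ki\tau}\ge r_{ki}p^c_{ki}(t,\tau)$ and $s^c_i+v^c_{ki\tau}\ge 1-r_{ki}p^c_{ki}(t,\tau)$. Collecting these over all $(i,c)\in\Ss$ and keeping \eqref{prob_constraint} and \eqref{pmax1} unchanged gives exactly the stated description of $\widetilde{\Dd}(\gma,\qQ)$ as the projection onto the $\pP(t)$-coordinates of the enlarged polytope.

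\medskip
I expect Step~1 — identifying the combinatorial maximum $B^c_i$ with the continuous LP, in particular handling the fractional term $\Gamma^c_i-\lfloor\Gamma^c_i\rfloor$ via the single fractional coordinate and justifying the vertex structure of the capped simplex — to be the main point requiring care; once that is in place, Steps~2 and~3 are routine LP duality together with the elementary fact that a ``$\ge$ against a max'' is the conjunction of two ``$\ge$'' inequalities.
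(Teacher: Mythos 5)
Your proposal is correct and follows essentially the same route as the paper: rewrite $B^c_i(\Gamma^c_i,\pP(t))$ as the optimal value of the capped-simplex LP, invoke LP strong duality to replace it by the minimization in $(s^c_i, v^c_{ki\tau})$, substitute into \eqref{linearized_constraint1} (weak duality for one direction, strong duality for the other), and split the single inequality against $\hat{p}^c_{ki}(t,\tau)$ into the two stated linear inequalities. Your Step~1 is in fact slightly more careful than the paper's, which asserts the vertex structure of the capped simplex without the fractional-coordinate argument you supply.
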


 Finally , we propose the Modified Drift-plus-Penalty (MDP) policy that replaces  $\Dd(\gma, \qQ)$ by the linearized feasible domain $\widetilde{\Dd}(\gma, \qQ)$ in the optimization problem \eqref{drift_plus_penalty1} as follows:
 \begin{equation}
        \pP^{\MDP}(t) = \argmaxE_{ \pP(t) \in \widetilde{\Dd}(\gma, \qQ) } h_t(\pP(t)). \label{drift_plus_penalty2}
    \end{equation}
The full  policy is depicted in Algorithm \ref{alg:mdp}. As $\widetilde{\Dd}(\gma, \qQ) $ is a polyhedron of polynomial size by Theorem \ref{thm_dual}, the MDP policy boils down to solving convex optimization \eqref{drift_plus_penalty2} (i.e.  concave maximization)  over linear constraints and thus admits  efficient polynomial-time solvers  \cite{Boyd_Vandenberghe_2004}. 
\begin{algorithm}
\caption{Modified Drift-plus-Penalty (MDP) policy}
\label{alg:mdp}
\KwIn{$T_s, \gma$ and $\qQ$.}
\For{$t=0, ..., T-1$}{
 Solve for the link activation probabilities $\pP^{\MDP}(t) = \argmaxE_{ \pP(t) \in \widetilde{\Dd}(\gma, \qQ) } h_t(\pP(t))$. \\
\For{$\tau=0, ..., T_s-1$, within each time frame}{
 \textbf{[Scheduling]} Activate links $\{d^c_{ki}(t, \tau)\}_{k\in \Kk, i\in \Ii, c\in \Cc}$ (based on  $\pP^{\MDP}(t, \tau)$) for serving jobs in  time slot $\tau$. \\
}
}
 Observe the arrivals $\{a^c_i(t)\}_{i\in\Ii, c\in \Cc}$ and compute the services $\{ \mu^c_i(t)\}_{i\in\Ii, c\in \Cc}$ via \eqref{service_unit1}.\\
 Update the queues as 
$ Q^c_i(t+1) = \big( Q^c_i(t) + a^c_i(t) -  \mu^c_i(t)\big)^+, \forall i\in \Ii, c\in \Cc.$
\end{algorithm}

Our QoS model  permits randomized scheduling via  $\pP(t, \tau)$ in the policy parameterization.
 This raises the question of whether there exists a \textit{feasible} deterministic policy making deterministic scheduling decisions in every time slot, i.e. $ p^c_{ki}(t, \tau) \in \{0, 1\}$, which  resembles a property (i.e. existence of feasible and  optimal policy that is deterministic) inherent in many scheduling problems in the  literature  (e.g. \cite{timevarying_network1, quang_linear_schedule}). 
 We answer this question negatively by constructing an example (Appendix \ref{appen_rand_scheduling}) in which the class of deterministic policies  is insufficient to meet our probabilistic  constraints. In contrast, our linearized feasible domain and thus the  MDP policy can support  these  QoS requirements. In the following sections, we further derive theoretical guarantees for the MDP policy as well as  how the proposed QoS model can provably provide throughput and delay guarantees. 

\section{ Guarantees for Modified Drift-plus-Penalty} 
\label{sec_guarantees_MDP}

In this section, we show that the MDP policy can support a stability regions arbitrarily close to the true stability region and achieve a vanishing optimality gap, as the time frame size $T_s$ increases. 
In Section\ref{sec_feasibility}, we first establish the stability region characterization of the QoS-based infrastructure sharing problem 
$\OP$ (see Corollary \ref{corol_approx_region}). Then,  we prove the utility-optimality and stability guarantees (Theorem \ref{thrm_mdp} and Corollary \ref{corol_V_choice}) in Section \ref{sec_optimality}.


\subsection{Stability Region Characterization}
\label{sec_feasibility} 

The MDP policy replaces the original feasible domain $\Dd(\gma, \qQ)$ by the linearized feasible domain $\widetilde{\Dd}(\gma, \qQ)$ in the optimization \eqref{drift_plus_penalty2} for tractability. 
While adopting $\widetilde{\Dd}(\gma, \qQ)$  ensures the probabilistic QoS constraints, the question of how close this approximation is remains. For the regime of large $T_s$ within interest (as discussed in Section \ref{sec_sys_model}), by leveraging the properties of Poisson Binomial distribution, we characterize the tightness of this approximation of order $O\big(\frac{1}{\sqrt{T_s}}\big)$ in the following Theorem, whose proof is deferred to Appendix \ref{appen_approx_bound}.
\begin{theorem}
\label{thrm_approx_bound}
If the following condition holds:
\begin{align}
\label{assum_charac}
    T_s > \frac{0.795^2}{(1-r_{max}) |\Kk| \gamma^c_i (q^c_i)^3}\triangleq K^c_i,\quad \forall (i, c)\in \Ss,
\end{align}
then we have the characterization:
\begin{align}
\label{characterize_region}
   \widetilde{\Dd}(\gma, \qQ)  \subseteq  {\Dd}(\gma, \qQ) \subseteq  \widetilde{\Dd}(\gma - \frac{K_1}{\sqrt{T_s}} \one, \qQ),
\end{align}
 where the constant $K_1$ is defined as:
    \begin{align}
    \label{k1_const}
         K_1 \triangleq \frac{ \sqrt{2  \log\big( (1-\|\qQ\|_\infty)^{-1} \big)}}{\sqrt{|\Kk| }} + \frac{2\sqrt{U_{max} \pi e^3}}{|\Kk| }.
    \end{align}
\end{theorem}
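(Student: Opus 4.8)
\smallskip
\noindent\textit{Proof plan.}

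\textbf{The ``easy'' inclusion $\widetilde{\Dd}(\gma,\qQ)\subseteq\Dd(\gma,\qQ)$.} This is immediate and does not even need \eqref{assum_charac}: both domains are cut out by \eqref{prob_constraint} and \eqref{pmax1} together with a single QoS-type constraint, and by Theorem~\ref{concen_bound} with $\Gamma^c_i$ chosen as in \eqref{gamma_choice}, any $\pP(t)$ satisfying the linearized constraint \eqref{linearized_constraint1} satisfies the probabilistic constraint \eqref{qos_constraint1}. Hence $\pP(t)\in\widetilde{\Dd}(\gma,\qQ)\Rightarrow\pP(t)\in\Dd(\gma,\qQ)$.

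\textbf{Reduction of the ``hard'' inclusion.} Fix $\pP(t)\in\Dd(\gma,\qQ)$ and $(i,c)\in\Ss$. Since \eqref{prob_constraint}--\eqref{pmax1} transfer verbatim, it remains to verify the linearized constraint \eqref{linearized_constraint1} with $\gamma^c_i$ replaced by $\gamma^c_i-K_1/\sqrt{T_s}$ (note $\Gamma^c_i$ depends only on $q^c_i$, so it is unchanged). Writing $m^c_i:=\EE[\mu^c_i(t)]=\sum_{k\in N(i)}\sum_{\tau}r_{ki}p^c_{ki}(t,\tau)$, this is the inequality $m^c_i-B^c_i(\Gamma^c_i,\pP(t))\ge |\Kk|T_s\gamma^c_i-|\Kk|\sqrt{T_s}\,K_1$. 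Using $\hat p^c_{ki}(t,\tau)\le 1$ we have $B^c_i(\Gamma^c_i,\pP(t))\le\Gamma^c_i=\sqrt{2|\Kk|T_s\log((1-q^c_i)^{-1})}\le\sqrt{2|\Kk|T_s\log((1-\|\qQ\|_\infty)^{-1})}$, and since $|\Kk|\sqrt{T_s}\,K_1=\sqrt{2|\Kk|T_s\log((1-\|\qQ\|_\infty)^{-1})}+2\sqrt{T_s U_{max}\pi e^3}$, the first summand of $|\Kk|\sqrt{T_s}\,K_1$ absorbs $B^c_i$, and the whole claim reduces to the single core estimate
\[
    m^c_i \;\ge\; |\Kk|T_s\,\gamma^c_i \;-\; 2\sqrt{T_s\,U_{max}\,\pi e^3}.
\]

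\textbf{The core estimate.} Here the plan is to use that $\mu^c_i(t)$ is Poisson Binomial on $N(i)T_s$ summands with parameters $r_{ki}p^c_{ki}(t,\tau)\le r_{max}$. From $P(\mu^c_i(t)>|\Kk|T_s\gamma^c_i)\ge q^c_i$, Markov gives $m^c_i\ge q^c_i|\Kk|T_s\gamma^c_i$; with $1-r_{ki}p^c_{ki}(t,\tau)\ge 1-r_{max}$ this yields $\sigma^2:=\mathrm{Var}(\mu^c_i(t))\ge(1-r_{max})m^c_i\ge(1-r_{max})q^c_i|\Kk|T_s\gamma^c_i$, which by hypothesis \eqref{assum_charac} is $>0.795^2/(q^c_i)^2$, i.e.\ $\sigma>0.795/q^c_i$. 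Since the third absolute central moment of a Poisson Binomial is at most its variance, the Berry--Esseen normal-approximation error for $\mu^c_i(t)$ is at most $C_0/\sigma$, hence a strict fraction of $q^c_i$, so one may legitimately invert $\Phi\!\big(\tfrac{|\Kk|T_s\gamma^c_i-m^c_i}{\sigma}\big)\le P(\mu^c_i(t)\le |\Kk|T_s\gamma^c_i)+C_0/\sigma\le 1-q^c_i+C_0/\sigma$; combining the resulting Gaussian-quantile bound with the complementary variance estimate $\sigma^2\le m^c_i\le U_{max}T_s$ (from \eqref{pmax1}) produces the slack $2\sqrt{T_s U_{max}\pi e^3}$. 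Equivalently, a Bennett/Chernoff upper-tail bound for the Poisson Binomial can replace the Berry--Esseen step, the constant $\sqrt{\pi e^3}$ emerging when the sub-Gaussian exponent is matched against $q^c_i$.

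\textbf{Main obstacle.} The delicate point is precisely this core estimate: Markov alone only places $m^c_i$ within $\Theta(T_s)$ of the threshold, while accuracy $\Theta(\sqrt{T_s})$ is required, so a genuinely quantitative concentration property of the Poisson Binomial distribution must be invoked and the constants ($0.795$ in \eqref{assum_charac}, $\sqrt{\pi e^3}$ in $K_1$) tracked carefully; in particular one must check, using \eqref{assum_charac}, that $q^c_i$ is not so small that the normal-approximation error overwhelms it, which is exactly what the lower bound $T_s>K^c_i$ buys.
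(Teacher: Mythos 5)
Your plan is correct and follows essentially the same route as the paper's proof: the first inclusion via Theorem~\ref{concen_bound}, and the second via the expectation/variance bounds for the Poisson Binomial service ($\EE[\mu^c_i(t)]\ge q^c_i|\Kk|T_s\gamma^c_i$, $(1-r_{max})$-variance lower bound, $\mathrm{Var}\le\EE[\mu^c_i(t)]\le U_{max}T_s$), a quantitative Gaussian approximation of the Poisson Binomial whose error is kept strictly below $q^c_i$ by condition~\eqref{assum_charac}, inversion of $\Phi$ with a constant quantile bound yielding the slack $2\sqrt{T_sU_{max}\pi e^3}$, and absorption of $B^c_i(\Gamma^c_i,\pP(t))\le\Gamma^c_i$ into the first summand of $|\Kk|\sqrt{T_s}K_1$. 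The only cosmetic difference is that you invoke the generic Berry--Esseen bound (third absolute central moment at most the variance) where the paper cites a refined Gaussian-approximation result for the Poisson Binomial with constant $0.7915$; both fit under the $0.795$ in \eqref{assum_charac}, so the argument goes through unchanged.
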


Theorem \ref{thrm_approx_bound} establishes that, despite its hardness, we can linearize the probabilistic QoS constraint with vanishing approximation error $O\big(\frac{1}{\sqrt{T_s}}\big)$.
We   consider the  approximate stability region  as follows. 


\begin{definition}[Approximate Stability Region]
The approximate stability region  given $\gma$ and $\qQ$ is defined as:
    \begin{align*}
    \widetilde{\Lmb}(\gma, \qQ)= \{&\lmb = \{\lambda^c_i\}_{i\in \Ii, c\in \Cc}\in \Br_{\geq0}^{|\Ii||\Cc|}:  \text{$\exists \pi$ such that } \\
    & \pP^\pi(t) \in  \widetilde{\Dd}(\gma, \qQ), \forall t\geq 0 \text{ and } \EE[\QQ^\pi(T)] = o(T)  \}.
\end{align*}
\end{definition}

In the following Theorem, whose proof is deferred to Appendix \ref{appen_thm_charac_stability_region}, we can further derive  $ \widetilde{\Lmb}(\gma, \qQ)$ in the form of polyhedron. Intuitively, it means that if $\lmb \in  \widetilde{\Lmb}(\gma, \qQ)$, there exists an optimal policy making stationary per-time-frame decision, which is a known property for linearly constrained control problem \cite{neelybook1}; thus, we drop the time frame $t$ from the vector $\pP$  for clarity. 
\begin{theorem}
\label{thm_charac_stability_region}
We have the following characterization of $ \widetilde{\Lmb}(\gma, \qQ)$:
\begin{align}
\nonumber
    \widetilde{\Lmb}(\gma, \qQ)= \{\lmb& = \{\lambda^c_i\}_{i\in \Ii, c\in \Cc}\in \Br_{\geq0}^{|\Ii||\Cc|}: \\
    &\lambda^c_i \leq \sum_{k\in N(i)} \sum_{\tau=0}^{T_s-1} r_{ki} p^c_{ki}(\tau) , \forall i\in \Ii, c\in \Cc, \label{super_charac} \\
    \nonumber
    &   \pP = \{p^c_{ki}(\tau) \}_{\tau\in[0, T_s-1], (k,i) \in E, c\in \Cc} \in  \widetilde{\Dd}(\gma, \qQ)\}.  
\end{align}
\end{theorem}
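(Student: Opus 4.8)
The goal is to establish the two set inclusions that together give the claimed characterization: every arrival vector $\lmb$ in the stated polyhedron can be mean-rate-stabilized by a policy whose iterates lie in $\widetilde{\Dd}(\gma,\qQ)$ (the ``$\supseteq$'' direction, achievability), and conversely every $\lmb\in\widetilde{\Lmb}(\gma,\qQ)$ admits a feasible $\pP\in\widetilde{\Dd}(\gma,\qQ)$ with $\lambda^c_i\le\sum_{k\in N(i)}\sum_{\tau=0}^{T_s-1}r_{ki}p^c_{ki}(\tau)$ for all $i,c$ (the ``$\subseteq$'' direction, converse). Both are standard Lyapunov/flow-conservation arguments; the only structural inputs are the drift bound of Lemma \ref{lma_driftbound} and the fact that $\widetilde{\Dd}(\gma,\qQ)$ is a \emph{compact convex} set, which follows since it is a bounded polyhedron by Theorem \ref{thm_dual} (equivalently, $B^c_i(\Gamma^c_i,\cdot)$ is convex, so \eqref{linearized_constraint1} defines a convex sublevel set).

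For the ``$\supseteq$'' inclusion, take a fixed $\pP\in\widetilde{\Dd}(\gma,\qQ)$ with $\lambda^c_i\le\sum_{k\in N(i)}\sum_{\tau}r_{ki}p^c_{ki}(\tau)$ and run the stationary randomized policy $\pP(t)\equiv\pP$. Feasibility is immediate. Substituting $\pP$ into the drift bound \eqref{drifting1} makes the second term nonpositive term by term, so $\Delta(t)\le B_1$. Taking total expectation, telescoping over $t=0,\dots,T-1$ via the tower property, and using $\Ll(\QQ(t))=\tfrac12\|\QQ(t)\|_2^2\ge0$ yields $\EE[\|\QQ(T)\|_2^2]\le 2B_1T+\|\QQ(0)\|_2^2$. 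Then, by Jensen and $\|\cdot\|_1\le\sqrt{|\Ii||\Cc|}\,\|\cdot\|_2$,
\[
\EE[\|\QQ(T)\|_1]\;\le\;\sqrt{|\Ii||\Cc|}\,\sqrt{2B_1T+\|\QQ(0)\|_2^2}\;=\;O(\sqrt T)\;=\;o(T),
\]
so $\EE[\QQ(T)]=o(T)$ and $\lmb\in\widetilde{\Lmb}(\gma,\qQ)$.

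For the ``$\subseteq$'' inclusion, let $\lmb\in\widetilde{\Lmb}(\gma,\qQ)$ with witnessing policy $\pi$. Dropping the projection in \eqref{queue_dynamics} only decreases the right side, so $Q^c_i(t{+}1)\ge Q^c_i(t)+a^c_i(t)-\mu^c_i(t)$; iterating, taking expectations, and using $\EE[\mu^c_i(t)]=\sum_{k\in N(i)}\sum_{\tau}r_{ki}\EE[p^{c,\pi}_{ki}(t,\tau)]$ from the channel model gives
\[
\frac{\EE[Q^c_i(T)]}{T}\;\ge\;\frac{\EE[Q^c_i(0)]}{T}+\lambda^c_i-\sum_{k\in N(i)}\sum_{\tau=0}^{T_s-1}r_{ki}\,\bar p^{c,(T)}_{ki}(\tau),
\]
where $\bar p^{c,(T)}_{ki}(\tau)=\tfrac1T\sum_{t=0}^{T-1}\EE[p^{c,\pi}_{ki}(t,\tau)]$. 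Since $\pP^\pi(t)\in\widetilde{\Dd}(\gma,\qQ)$ almost surely and $\widetilde{\Dd}(\gma,\qQ)$ is convex, each $\EE[\pP^\pi(t)]\in\widetilde{\Dd}(\gma,\qQ)$, hence the time-average $\bar\pP^{(T)}\in\widetilde{\Dd}(\gma,\qQ)$. By compactness, pass to a subsequence $\bar\pP^{(T_n)}\to\pP^\star\in\widetilde{\Dd}(\gma,\qQ)$; letting $T=T_n\to\infty$ above, the left side vanishes because $\EE[\QQ^\pi(T)]=o(T)$ and the initial backlog is finite, giving $0\ge\lambda^c_i-\sum_{k\in N(i)}\sum_{\tau}r_{ki}p^{\star,c}_{ki}(\tau)$ for all $i,c$. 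Thus $\pP^\star$ is the desired certificate for $\lmb$, completing the proof.

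\textbf{Main obstacle.} The converse direction is the delicate part: one must correctly reduce an arbitrary history-dependent policy $\pi$ to a single stationary point $\pP^\star$ — this is exactly where convexity of $\widetilde{\Dd}(\gma,\qQ)$ is essential to keep the expected and time-averaged iterates feasible — and then extract a convergent subsequence via compactness while interchanging the limit with the finite sums, all the while tracking that $\EE[\QQ^\pi(T)]/T\to0$ annihilates precisely the queue term that must be discarded. The achievability direction is routine once Lemma \ref{lma_driftbound} is available.
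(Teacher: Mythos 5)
Your proposal is correct and follows essentially the same route as the paper's proof (which is an immediate consequence of its Lemma~\ref{lma_linear_control}): achievability via a stationary randomized policy, the drift bound of Lemma~\ref{lma_driftbound}, telescoping, and a H\"older/Jensen norm bound; and the converse via time-averaging the policy's decisions, convexity and compactness of $\widetilde{\Dd}(\gma, \qQ)$ to extract a convergent subsequence, and the queue lower bound combined with $\EE[\QQ^\pi(T)]=o(T)$. The only cosmetic differences are that the paper keeps the averaged iterates feasible through the linear representation of Theorem~\ref{thm_dual} (averaging the auxiliary dual variables) and gets the queue lower bound from a Skorokhod-map lemma, whereas you invoke convexity of the projected set directly (with Jensen for the expected decisions) and drop the $(\cdot)^+$ in the queue recursion.
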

Finally, we show that the approximate stability region $ \widetilde{\Lmb}(\gma, \qQ)$ is close to the actual stability region $\Lmb(\gma, \qQ)$  as demonstrated in the next Corollary directly following from Theorem \ref{thrm_approx_bound}.
\begin{corollary}
\label{corol_approx_region}
    Under the condition \eqref{assum_charac} and for $K_1$ in \eqref{k1_const},
    \begin{align}
         \widetilde{\Lmb}(\gma, \qQ)  \subseteq  {\Lmb}(\gma, \qQ) \subseteq  \widetilde{\Lmb}(\gma - \frac{K_1}{\sqrt{T_s}} \one, \qQ).
    \end{align}
\end{corollary}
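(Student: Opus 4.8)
The plan is to derive the corollary directly from the set inclusions established in Theorem~\ref{thrm_approx_bound}, using the elementary observation that the (approximate) stability region is a monotone set function of the underlying feasible domain of link activation probabilities. First I would record this monotonicity precisely: if $\Dd_1 \subseteq \Dd_2$ are two candidate feasible domains, then any control policy $\pi$ with $\pP^\pi(t) \in \Dd_1$ for all $t \geq 0$ automatically satisfies $\pP^\pi(t) \in \Dd_2$ for all $t \geq 0$, while the mean-rate-stability requirement $\EE[\QQ^\pi(T)] = o(T)$ is a property of the induced queue process and is unaffected by which domain we declare feasible. Hence the set of arrival-rate vectors $\lmb$ admitting a policy that is both feasible with respect to $\Dd_1$ and mean-rate-stabilizing is contained in the analogous set defined with respect to $\Dd_2$. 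Since the definitions of $\Lmb(\gma, \qQ)$ and $\widetilde{\Lmb}(\gma, \qQ)$ are word-for-word identical except that the former imposes $\pP^\pi(t) \in \Dd(\gma, \qQ)$ and the latter $\pP^\pi(t) \in \widetilde{\Dd}(\gma, \qQ)$, this monotonicity applies verbatim to both.

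Then I would simply instantiate the two inclusions supplied by Theorem~\ref{thrm_approx_bound} under condition~\eqref{assum_charac}. From $\widetilde{\Dd}(\gma, \qQ) \subseteq \Dd(\gma, \qQ)$, monotonicity gives $\widetilde{\Lmb}(\gma, \qQ) \subseteq \Lmb(\gma, \qQ)$, which is the left inclusion of the claimed sandwich. From $\Dd(\gma, \qQ) \subseteq \widetilde{\Dd}\big(\gma - \frac{K_1}{\sqrt{T_s}} \one, \qQ\big)$, applying monotonicity again with $\Dd_1 = \Dd(\gma, \qQ)$ and $\Dd_2 = \widetilde{\Dd}\big(\gma - \frac{K_1}{\sqrt{T_s}} \one, \qQ\big)$ yields $\Lmb(\gma, \qQ) \subseteq \widetilde{\Lmb}\big(\gma - \frac{K_1}{\sqrt{T_s}} \one, \qQ\big)$, the right inclusion. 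Chaining the two containments completes the proof.

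One detail worth flagging in the write-up: the rightmost region is built from $\widetilde{\Dd}\big(\gma - \frac{K_1}{\sqrt{T_s}} \one, \qQ\big)$, whose defining constraint~\eqref{linearized_constraint1} carries the shifted parameter $\gamma^c_i - K_1/\sqrt{T_s}$ in place of $\gamma^c_i$. The monotonicity argument is indifferent to whether these shifted values stay in $(0,1)$ --- the set $\widetilde{\Dd}(\cdot, \qQ)$ is well defined for any real vector of delivery ratios, and decreasing a ratio only relaxes~\eqref{linearized_constraint1}, enlarging the set --- so no hypothesis beyond~\eqref{assum_charac}, which is exactly what Theorem~\ref{thrm_approx_bound} requires, is needed. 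There is essentially no hard step here; the only care required is to phrase the monotonicity lemma so that both the feasibility clause and the $\EE[\QQ^\pi(T)] = o(T)$ stability clause are visibly preserved when the feasible domain is enlarged, after which the corollary is immediate.
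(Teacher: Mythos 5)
Your proposal is correct and matches the paper's reasoning: the paper treats Corollary~\ref{corol_approx_region} as an immediate consequence of Theorem~\ref{thrm_approx_bound}, precisely via the monotonicity of the stability-region definition in the underlying feasible domain (a policy feasible for a smaller domain remains feasible for a larger one, and the mean-rate-stability clause depends only on the induced queue process). Your extra remark about the shifted $\gamma^c_i - K_1/\sqrt{T_s}$ possibly leaving $(0,1)$ is a harmless but reasonable clarification that the paper leaves implicit.
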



\subsection{Utility-Optimality and Stability Results}
\label{sec_optimality}
We  proceed to show that the MDP policy attains competitive performance guarantees on the supported   stability  region that is within $O\big(\frac{1}{\sqrt{T_s}}\big)$ distance from the true stability region. 
For analysis, we make the following  assumption that  $ \widetilde{\Dd}(\gma, \qQ)$ has an interior point that supports the considered arrival rate $\lmb$ in view of Theorem \ref{thm_charac_stability_region}.
\begin{assumption}[Slater Condition]
\label{assum_slater}
For arrival rate $\lmb$, there exists  $\zeta >0$ and $\pP \in \widetilde{\Dd}(\gma, \qQ)$ such that \eqref{super_charac} holds 
and $\forall (i, c)\in \Ss$:
\begin{align}
\label{slater}
       |\Kk|T_s (\gamma^c_i + \zeta ) \leq \sum_{k\in N(i)} \sum_{\tau=0}^{T_s-1} r_{ki}  p^c_{ki}(\tau) - B^c_i(\Gamma^c_i, \pP(t)).
\end{align}
\end{assumption}
The  Slater condition \cite{Boyd_Vandenberghe_2004} is a natural assumption in practice. 
The next Lemma establishes the sub-optimality gap between MDP policy and DP policy caused by replacing ${\Dd}(\gma, \qQ)$ with $\widetilde{\Dd}(\gma, \qQ)$.

\begin{lemma}
\label{lma_obj_gap}
Given fixed $\QQ(t)$, and under Assumption \ref{assum_slater} and condition \eqref{assum_charac},
the objective gap for solving \eqref{drift_plus_penalty1} and \eqref{drift_plus_penalty2} is bounded by:
\begin{align}
\nonumber
    0\leq h_t(\pP^\DP(t)) -  h_t(\pP^\MDP(t)) \leq& K_2  \|\QQ(t)\|_{\infty}\sqrt{T_s}  \\
    &+  \frac{V K_2 (f_{max} -f_{min})}{ |\Kk| \sqrt{T_s}},\label{obj_gap1}
\end{align}
where  $K_2 \triangleq \frac{ |\Cc||\Ii| |\Kk| K_1 }{ \zeta}$ with   $K_1$ defined in \eqref{k1_const}.
\end{lemma}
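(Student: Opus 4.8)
\medskip
\noindent\emph{Proof plan.} The plan is to sandwich the MDP optimum $h_t(\pP^{\MDP}(t))$ between the DP optimum $h_t(\pP^{\DP}(t))$ and the value of $h_t$ at a point of $\widetilde{\Dd}(\gma,\qQ)$ obtained by pulling $\pP^{\DP}(t)$ slightly toward the Slater point of Assumption~\ref{assum_slater}. The left inequality in \eqref{obj_gap1} is immediate: condition \eqref{assum_charac} is the hypothesis of Theorem~\ref{thrm_approx_bound}, which gives $\widetilde{\Dd}(\gma,\qQ)\subseteq\Dd(\gma,\qQ)$, so $\pP^{\MDP}(t)$ is feasible for \eqref{drift_plus_penalty1} and hence $h_t(\pP^{\MDP}(t))\le h_t(\pP^{\DP}(t))$. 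For the right inequality, note first that for each $(i,c)\in\Ss$ the constraint-slack map $g^c_i(\pP)\triangleq |\Kk|T_s\gamma^c_i + B^c_i(\Gamma^c_i,\pP) - \sum_{k\in N(i)}\sum_{\tau=0}^{T_s-1} r_{ki}p^c_{ki}(t,\tau)$ is convex in $\pP$, since each $\hat{p}^c_{ki}(t,\tau)$ is a maximum of two affine functions and $B^c_i(\Gamma^c_i,\cdot)$ in \eqref{big_b1} is a maximum of nonnegative combinations of such terms (hence convex), while the remaining terms are affine. Using the form \eqref{linearized_constraint1}, membership $\pP\in\widetilde{\Dd}(\gma,\qQ)$ amounts to the linear and box constraints \eqref{prob_constraint}, \eqref{pmax1}, $\pP\in[0,1]^{T_s|E||\Cc|}$, together with $g^c_i(\pP)\le0$ for all $(i,c)\in\Ss$.

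By Theorem~\ref{thrm_approx_bound}, $\pP^{\DP}(t)\in\Dd(\gma,\qQ)\subseteq\widetilde{\Dd}\big(\gma-\tfrac{K_1}{\sqrt{T_s}}\one,\qQ\big)$, which translated through $g^c_i$ reads $g^c_i(\pP^{\DP}(t))\le |\Kk|T_s\cdot\tfrac{K_1}{\sqrt{T_s}} = |\Kk|K_1\sqrt{T_s}$, whereas the Slater point $\bar{\pP}$ from \eqref{slater} satisfies $g^c_i(\bar{\pP})\le -|\Kk|T_s\zeta$ and all the linear and box constraints. For $\epsilon\triangleq \tfrac{K_1}{\zeta\sqrt{T_s}+K_1}\in(0,1)$, put $\pP^{\epsilon}\triangleq (1-\epsilon)\pP^{\DP}(t)+\epsilon\bar{\pP}$; the linear and box constraints survive the convex combination, and convexity of $g^c_i$ gives, for every $(i,c)\in\Ss$,
\[
  g^c_i(\pP^{\epsilon})\le (1-\epsilon)|\Kk|K_1\sqrt{T_s} - \epsilon|\Kk|T_s\zeta\le0
\]
by the choice of $\epsilon$. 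Hence $\pP^{\epsilon}\in\widetilde{\Dd}(\gma,\qQ)$, so $h_t(\pP^{\MDP}(t))\ge h_t(\pP^{\epsilon})$ and it suffices to bound $h_t(\pP^{\DP}(t))-h_t(\pP^{\epsilon})$.

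Write $h_t = L + Vf$ with $L(\pP)\triangleq\sum_{c\in\Cc}\sum_{i\in\Ii}Q^c_i(t)\sum_{\tau=0}^{T_s-1}\sum_{k\in N(i)} r_{ki}p^c_{ki}(t,\tau)\ge0$. Since $L$ is linear and $\epsilon\le\tfrac{K_1}{\zeta\sqrt{T_s}}$,
\begin{align*}
  L(\pP^{\DP}(t)) - L(\pP^{\epsilon}) &= \epsilon\big(L(\pP^{\DP}(t)) - L(\bar{\pP})\big) \le \epsilon\,L(\pP^{\DP}(t)) \\
  &\le \epsilon\,|\Cc||\Ii||\Kk|T_s\,\|\QQ(t)\|_{\infty} \le K_2\,\|\QQ(t)\|_{\infty}\sqrt{T_s},
\end{align*}
where the third step bounds each summand by $r_{ki}p^c_{ki}(t,\tau)\le1$ and uses $|N(i)|\le|\Kk|$, $Q^c_i(t)\le\|\QQ(t)\|_{\infty}$, and the last uses $K_2=\tfrac{|\Cc||\Ii||\Kk|K_1}{\zeta}$. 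For the penalty, concavity of $f$ gives $f(\pP^{\epsilon})\ge(1-\epsilon)f(\pP^{\DP}(t))+\epsilon f(\bar{\pP})$, so $f(\pP^{\DP}(t))-f(\pP^{\epsilon})\le\epsilon\big(f(\pP^{\DP}(t))-f(\bar{\pP})\big)\le\epsilon(f_{max}-f_{min})\le\tfrac{K_2(f_{max}-f_{min})}{|\Kk|\sqrt{T_s}}$, using $\epsilon\le\tfrac{K_1}{\zeta\sqrt{T_s}}$ and $K_2\ge\tfrac{|\Kk|K_1}{\zeta}$. Adding the two estimates gives the right-hand side of \eqref{obj_gap1}.

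The crux is the second paragraph: one must notice that the linearized QoS feasible set is convex — so that $\pP^{\DP}(t)$ and the Slater point blend while the slack $g^c_i$ interpolates linearly — and carefully convert the set inclusion of Theorem~\ref{thrm_approx_bound} into the quantitative bound $g^c_i(\pP^{\DP}(t))\le|\Kk|K_1\sqrt{T_s}$; after that, the choice $\epsilon=\Theta(1/\sqrt{T_s})$ and the two crude boundedness estimates are routine. A minor point is that a single $\epsilon$ works simultaneously for all $(i,c)\in\Ss$, which holds because both the violation bound for $\pP^{\DP}(t)$ and the Slater slack $\zeta$ are uniform over $\Ss$.
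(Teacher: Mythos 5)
Your proof is correct, but it takes a genuinely different route from the paper's. The paper bounds the gap by introducing the auxiliary optimizer $\pP'(t)$ over the enlarged set $\widetilde{\Dd}\big(\gma-\tfrac{K_1}{\sqrt{T_s}}\one,\qQ\big)$ (using the same inclusion chain from Theorem~\ref{thrm_approx_bound} that you use) and then invoking a sensitivity result (Lemma~\ref{lma_perturbed_obj}) comparing the two linearized programs as $\gma$ is perturbed; that lemma is itself proved via exterior penalty functions and a bound on the penalty parameter derived from the Slater constant $\zeta$. You instead argue directly on the primal side: observing that the slack maps $g^c_i$ are convex (so $\widetilde{\Dd}(\gma,\qQ)$ is convex), you convert the inclusion $\Dd(\gma,\qQ)\subseteq\widetilde{\Dd}\big(\gma-\tfrac{K_1}{\sqrt{T_s}}\one,\qQ\big)$ into the quantitative violation bound $g^c_i(\pP^{\DP}(t))\le|\Kk|K_1\sqrt{T_s}$, interpolate toward the Slater point with $\epsilon=\tfrac{K_1}{\zeta\sqrt{T_s}+K_1}$ to manufacture a feasible point of $\widetilde{\Dd}(\gma,\qQ)$, and bound the objective loss via linearity of the queue-weighted term and concavity of $f$; your constants check out and reproduce \eqref{obj_gap1} exactly (your cruder bound $L\le\|\QQ(t)\|_\infty|\Cc||\Ii||\Kk|T_s$ is absorbed by the $|\Cc||\Ii|$ factor already present in $K_2$, whereas the paper gets $|\Cc||\Ii|$ from $\|\gma-\gma'\|_1$). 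The trade-off: your Slater-interpolation argument is more elementary and self-contained, avoiding the penalty-method machinery and the citation it rests on; the paper's perturbation lemma is heavier here but is reused elsewhere (e.g.\ in Lemma~\ref{lma_num_distort}), so it buys economy at the level of the whole paper. One small point you handled correctly and should keep explicit: a single $\epsilon$ works for all $(i,c)\in\Ss$ because both the violation bound and the Slater slack $|\Kk|T_s\zeta$ are uniform over $\Ss$, and the equality constraints \eqref{prob_constraint}, the linear constraints \eqref{pmax1}, and the box constraints are preserved under the convex combination.
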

\begin{proof} 
We  consider:
\begin{align}
\gma' &= \gma - \frac{K_1}{\sqrt{T_s}} \one, \quad  \pP'(t)  = \argmaxE_{ \pP(t) \in \widetilde{\Dd}(\gma', \qQ) } h_t(\pP(t)). \label{lma_obj_gap_proof0}
\end{align}
By Theorem \ref{thrm_approx_bound}, we  have $ \widetilde{\Dd}(\gma, \qQ)  \subseteq  {\Dd}(\gma, \qQ) \subseteq  \widetilde{\Dd}(\gma', \qQ)$. This implies that  $h_t(\pP^\MDP(t)) \leq h_t(\pP^\DP(t))  \leq h_t(\pP'(t)) $ and then:
\begin{align}
\nonumber
    0\leq h_t(\pP^\DP(t)) -  h_t(\pP^\MDP(t)) \leq h_t(\pP'(t))   -  h_t(\pP^\MDP(t)).
\end{align}
It is thus left to bound $h_t(\pP'(t)) - h_t(\pP^\MDP(t))$.
We note that $\pP'(t)$ is the optimal solution to the optimization problem \eqref{lma_obj_gap_proof0} with the same objective and linear constraints as that \eqref{drift_plus_penalty2} of $\pP^\MDP(t)$  except the constraint \eqref{linearized_constraint1}, where $\gma$ is replaced by the ``perturbed" $\gma'$.  We would then be able to bound the objective gap in terms of $\|\gma' - \gma\|_1$ in the next Lemma whose proof is given in Appendix \ref{appen_lma_perturbed_obj}. 
\begin{lemma}
\label{lma_perturbed_obj}
 Under Assumption \ref{assum_slater} and given $ K_3\triangleq  \frac{\| \QQ(t) \|_\infty |\Kk| T_s  }{\zeta} + \frac{V (f_{max} - f_{min})}{ \zeta}$,  we have:
 \begin{align}
 \label{lma_obj_gap_proof2}
   h_t(\pP'(t))   -  h_t(\pP^\MDP(t)) \leq K_3  \|\gma' - \gma\|_1 . 
 \end{align}
\end{lemma}
By definition of $\gma'$, we have $\|\gma' - \gma\|_1 = \frac{|\Cc||\Ii| K_1}{\sqrt{T_s}}$. Plugging this into \eqref{lma_obj_gap_proof2}, we conclude the proof of the Lemma.
\end{proof}

Finally, we establish the main Theorem on the performance guarantee for the $\MDP$ policy, which characterizes the trade-off between queue stability and utility optimality.

\begin{theorem}
\label{thrm_mdp}

 Under Assumption  \ref{assum_slater} and condition \eqref{assum_charac}, and given $\lmb   \in (1- \varepsilon)   \widetilde{\Lmb}(\gma, \qQ)$ with $\varepsilon = \frac{K_4}{\sqrt{T_s}}$, we have:
 \begin{align}
    &\EE[\|\QQ^{\MDP}(T)\|_1] = O\big( \sqrt{VT}+T_s\sqrt{ T}\big), \label{thrm_mdp_eq1}\\
     &0 \leq F(\pP^*) - F(\pP^\MDP) = O\big( \frac{T_s^2}{V} +  \frac{ 1}{\sqrt{T_s}} \big), \label{thrm_mdp_eq2}
 \end{align}
 where $\pP^*$ is from an optimal policy for  $\OP$ and  $K_4 = \frac{2 K_2}{|\Kk|\delta_{min}}$. 
\end{theorem}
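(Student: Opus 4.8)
The plan is to prove Theorem~\ref{thrm_mdp} by a standard Lyapunov drift-plus-penalty argument, but carried out against the \emph{linearized} feasible domain $\widetilde{\Dd}(\gma,\qQ)$ and then transferred back to the true problem $\OP$ via the sandwich characterization of Corollary~\ref{corol_approx_region} and the objective-gap bound of Lemma~\ref{lma_obj_gap}. First I would write down the one-slot drift-plus-penalty inequality: from Lemma~\ref{lma_driftbound}, for any policy,
\begin{equation*}
\Delta(t) - V f(\pP(t)) \leq B_1 + \sum_{c\in\Cc}\sum_{i\in\Ii} Q^c_i(t)\Big(\lambda^c_i - \sum_{\tau}\sum_{k\in N(i)} r_{ki} p^c_{ki}(t,\tau)\Big) - V f(\pP(t)) = B_1 - h_t(\pP(t)) + \sum_{c,i} Q^c_i(t)\lambda^c_i.
\end{equation*}
Since $\pP^\MDP(t)$ maximizes $h_t$ over $\widetilde{\Dd}(\gma,\qQ)$, for the $\MDP$ policy we get $\Delta^\MDP(t) - V f(\pP^\MDP(t)) \leq B_1 + \sum_{c,i} Q^c_i(t)\lambda^c_i - h_t(\pP^\MDP(t))$, and this upper bound is at least as small as what we get by plugging in any comparison policy feasible for $\widetilde{\Dd}$.

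Next I would choose two comparison policies. For \emph{stability} \eqref{thrm_mdp_eq1}: since $\lmb \in (1-\varepsilon)\widetilde{\Lmb}(\gma,\qQ)$, by Theorem~\ref{thm_charac_stability_region} there is a stationary $\pP^{(\varepsilon)}\in\widetilde{\Dd}(\gma,\qQ)$ with $\sum_\tau\sum_{k\in N(i)} r_{ki} p^c_{ki}(\tau) \geq \lambda^c_i/(1-\varepsilon) \geq \lambda^c_i + \varepsilon\,\delta_{min}|\Kk|T_s$ (using $\delta(\lmb)\geq\delta_{min}$ to lower-bound the per-frame service slack). Plugging $\pP^{(\varepsilon)}$ into the drift bound and using optimality of $\pP^\MDP(t)$ for $h_t$ (the $Vf$ terms cancel up to $V(f_{max}-f_{min})$), the negative drift term $-\varepsilon\,\delta_{min}|\Kk|T_s \sum_{c,i} Q^c_i(t)$ appears; with $\varepsilon = K_4/\sqrt{T_s}$ and $K_4 = 2K_2/(|\Kk|\delta_{min})$ this is $-2K_2\sqrt{T_s}\,\|\QQ(t)\|_1/\text{(poly)}$-ish, dominating the $K_2\|\QQ(t)\|_\infty\sqrt{T_s}$ slack term from Lemma~\ref{lma_obj_gap} that must be added because $\pP^{(\varepsilon)}$ is only feasible for $\widetilde{\Dd}$ whereas the DP comparison would live in $\Dd$ --- actually here both live in $\widetilde{\Dd}$, so the cleaner route is to compare $\pP^\MDP$ directly against $\pP^{(\varepsilon)}$ within $\widetilde{\Dd}$. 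Summing the telescoped drift from $t=0$ to $T-1$, dividing, and taking expectations yields $\EE[\sum_{c,i} Q^c_i(T)^2] = O(B_1 T + VT)$; then Jensen ($\EE\|\QQ\|_1 \leq \sqrt{|\Ii||\Cc|}\sqrt{\EE\|\QQ\|_2^2}$) together with $B_1 = |\Ii||\Cc|A_{max}^2 + T_s^2|\Kk|^2 = \Theta(T_s^2)$ gives $\EE[\|\QQ^\MDP(T)\|_1] = O(\sqrt{VT} + T_s\sqrt{T})$, which is \eqref{thrm_mdp_eq1}.

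For \emph{utility optimality} \eqref{thrm_mdp_eq2}: I would use $\pP^*(t)$ from an optimal policy for $\OP$ as the comparison. By Corollary~\ref{corol_approx_region}, $\pP^*(t)\in\Dd(\gma,\qQ)\subseteq\widetilde{\Dd}(\gma-\tfrac{K_1}{\sqrt{T_s}}\one,\qQ)$, so it is not feasible for $\widetilde{\Dd}(\gma,\qQ)$ directly; this is where Lemma~\ref{lma_obj_gap} enters, bounding $h_t(\pP^\DP(t)) - h_t(\pP^\MDP(t)) \leq K_2\|\QQ(t)\|_\infty\sqrt{T_s} + VK_2(f_{max}-f_{min})/(|\Kk|\sqrt{T_s})$, and $h_t(\pP^*(t))\leq h_t(\pP^\DP(t))$ since $\pP^\DP$ maximizes $h_t$ over $\Dd \ni \pP^*(t)$. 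Plugging the optimal policy into the drift inequality and rearranging, the mean-rate-stability of $\pP^*$ kills the $\sum_{c,i}\lambda^c_i$ minus service terms in expectation over a long horizon (standard: $\EE[\Ll(\QQ(T))] - \Ll(\QQ(0))\geq -\Ll(\QQ(0))$ and the arrival-minus-service sum is $\leq 0$ in Cesàro average for a feasible policy), leaving $V(F(\pP^*) - F(\pP^\MDP)) = O(B_1 T / T + \text{drift slack}/T + \text{Lemma~\ref{lma_obj_gap} slack})$. The $B_1/T \to 0$, but the per-slot slack from Lemma~\ref{lma_obj_gap} contributes its $V$-dependent piece $VK_2(f_{max}-f_{min})/(|\Kk|\sqrt{T_s})$ which survives division by $V$ to give the $O(1/\sqrt{T_s})$ term, and the $\|\QQ(t)\|_\infty\sqrt{T_s}/T$ averaged term, together with $B_1 = \Theta(T_s^2)$, gives the $O(T_s^2/V)$ term; hence $0\leq F(\pP^*) - F(\pP^\MDP) = O(T_s^2/V + 1/\sqrt{T_s})$, which is \eqref{thrm_mdp_eq2}.

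The main obstacle I anticipate is the bookkeeping in the utility bound: carefully handling the time-averaged term $\frac{1}{T}\sum_{t} \EE\|\QQ^\MDP(t)\|_\infty$ that is produced by the Lemma~\ref{lma_obj_gap} slack. One cannot just bound $\|\QQ(t)\|_\infty$ by its value at time $T$; instead one needs a uniform-in-$t$ bound of the form $\EE\|\QQ^\MDP(t)\|_1 = O(\sqrt{Vt} + T_s\sqrt{t}) = O(\sqrt{VT} + T_s\sqrt{T})$ (which \eqref{thrm_mdp_eq1} already supplies for all intermediate times), so that $\frac{1}{VT}\sum_t \EE\|\QQ^\MDP(t)\|_\infty\sqrt{T_s} = O(\sqrt{T_s}(\sqrt{V T} + T_s\sqrt T)/(VT)) = O(\sqrt{T_s}/\sqrt{VT} + T_s^{3/2}/(V\sqrt T))\to 0$ as $T\to\infty$ for fixed $V, T_s$, so this term does not contribute to the asymptotic gap --- only the genuinely $V$-independent $B_1/V$-type term ($\Rightarrow T_s^2/V$) and the $V$-independent $1/(|\Kk|\sqrt{T_s})$ term from the $(f_{max}-f_{min})$ part survive. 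Making the order of limits ($T\to\infty$ first, with $V, T_s$ as tunable constants) explicit and consistent with the $\liminf$ in the definition of $F(\pP)$ is the delicate point; everything else is routine telescoping and applying the already-proven Lemmas.
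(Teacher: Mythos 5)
Your stability half is sound, and in fact slightly cleaner than the paper's: since $\lmb/(1-\varepsilon)\in\widetilde{\Lmb}(\gma,\qQ)$, Theorem \ref{thm_charac_stability_region} gives a stationary $\pP^{(\varepsilon)}\in\widetilde{\Dd}(\gma,\qQ)$ against which $\pP^{\MDP}(t)$ can be compared directly (no need for Lemma \ref{lma_obj_gap} there), and telescoping plus the Lemma \ref{lma_cauchy}-type step yields \eqref{thrm_mdp_eq1}. The utility half, however, has a genuine gap, and it sits exactly at the step you called delicate. After telescoping and dividing by $VT$, the Lemma \ref{lma_obj_gap} slack contributes $\frac{K_2\sqrt{T_s}}{VT}\sum_{t=0}^{T-1}\EE\|\QQ^{\MDP}(t)\|_\infty$. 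Using your own uniform bound $\EE\|\QQ^{\MDP}(t)\|_\infty=O(\sqrt{VT}+T_s\sqrt{T})$ for each $t\le T$, the sum has $T$ terms, so this quantity is $O\big(\sqrt{T_s}\,(\sqrt{VT}+T_s\sqrt{T})/V\big)$, which \emph{diverges} as $T\to\infty$ for fixed $V,T_s$; your displayed estimate $O(\sqrt{T_s}/\sqrt{VT}+T_s^{3/2}/(V\sqrt{T}))$ silently drops the factor $T$ from the summation. Because only mean-rate stability holds, the backlog genuinely grows like $\sqrt{t}$, and time-averaging cannot dispose of this term. The paper cancels it \emph{per frame}: the drift comparison policy is $\pi_\varepsilon$, a stationary optimal policy for the inflated arrivals $\lmb_\varepsilon=\lmb/(1-0.5\varepsilon)$ within $\widetilde{\Dd}(\gma,\qQ)$, whose per-frame service slack produces the negative term $-0.5\varepsilon\,|\Kk|T_s\delta_{min}\|\QQ(t)\|_\infty=-K_2\sqrt{T_s}\,\|\QQ(t)\|_\infty$ that exactly offsets the queue-dependent part of the Lemma \ref{lma_obj_gap} slack. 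This is precisely why the hypothesis carries $\varepsilon=K_4/\sqrt{T_s}$ with $K_4=2K_2/(|\Kk|\delta_{min})$ --- a hypothesis your utility argument never invokes, which is the tell that the cancellation mechanism is missing.

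A second, related problem is your choice of $\pP^*$ itself as the drift comparison. An optimal policy for $\OP$ is only guaranteed mean-rate stable and need not be stationary, so the cross term $\sum_{c,i}Q^{\MDP,c}_i(t)\big(\lambda^c_i-\EE[\mu^{\pi^*,c}_i(t)]\big)$ carries time-varying weights $Q^{\MDP,c}_i(t)$ and does not vanish merely because arrivals minus service are nonpositive in Ces\`aro average under $\pi^*$; the standard repair (replace $\pi^*$ by a stationary randomized policy of equal utility) is not available verbatim for $\Dd(\gma,\qQ)$, since the averaging construction of Lemma \ref{lma_linear_control} is proved only for the polyhedral domain $\widetilde{\Dd}$ (the probabilistic constraint \eqref{qos_constraint1} is not linear). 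The paper avoids this by never putting $\pP^*$ into the drift inequality: it relates $F(\pP^*)$ to $F(\pP^{\pi_1})$ and $F(\pP^{\pi_\varepsilon})$ through the sandwich of Theorem \ref{thrm_approx_bound} and the sensitivity bounds of Lemmas \ref{lma_num_distort} and \ref{lma_final_bounds} (yielding the $O(1/\sqrt{T_s})$ part of \eqref{thrm_mdp_eq2}), and performs the drift comparison only against the stationary $\pi_\varepsilon$ via $h_t(\pP^{\DP}(t))\ge h_t(\pP^{\pi_\varepsilon}(t))$ combined with Lemma \ref{lma_obj_gap}. Your proposal needs both repairs to go through.
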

\begin{proofsketch}
Besides the optimal policy $\pi^*$ (with decision variables $\pP^*$) that supports the arrival rates $\lmb$, we can show the existence of an optimal policy $\pi_\varepsilon$ that supports  the ``hypothetical" arrival rates $\lmb_\varepsilon = \frac{1}{1-0.5\varepsilon} \, \lmb$ and makes stationary per-time-frame decision $\pP^{\pi_\varepsilon}(t) = \pP$  satisfying the constraints of the characterization in Theorem \ref{thm_charac_stability_region}, i.e. $\pP\in  \widetilde{\Dd}(\gma, \qQ)$ and:
\begin{align}
\label{thrm_mdp_proof1}
    [\lmb_\varepsilon]^c_i = \frac{\lambda^c_i}{1-0.5\varepsilon}  \leq \sum_{k\in N(i)} \sum_{\tau=0}^{T_s-1} r_{ki}  {p^c_{ki}}(\tau) , \forall i\in \Ii, c\in \Cc.
\end{align}
Given $\varepsilon = \frac{K_4}{\sqrt{T_s}}$, we can prove that $F(\pP^*) - F(\pP^{\pi_\varepsilon}) = O\Big(\frac{1}{\sqrt{T_s}}\Big)$. Thus,
\begin{align}
\label{thrm_mdp_proof2}
 0 \leq F(\pP^*) - F(\pP^\MDP) \leq  F(\pP^{\pi_\varepsilon})  - F(\pP^\MDP) +  O\Big(\frac{1}{\sqrt{T_s}}\Big).
\end{align}
By \eqref{drift_plus_penalty1}, we have that $h_t(\pP^\DP(t)) \geq h_t(\pP^{\pi_\varepsilon}(t))$. Combining this with Lemma \ref{lma_obj_gap} gives us the bound on the objective gap $  h_t(\pP^{\pi_\varepsilon}(t))-h_t(\pP^\MDP(t))$. Plugging such bound into \eqref{drifting1} in view of \eqref{thrm_mdp_proof1}, we will be able to show that:
\begin{equation*}
    \Delta^\MDP(t) - V f(\pP^\MDP(t))\leq B_1 -V f(\pP^{\pi_\varepsilon}(t))+   \frac{V K_2 (f_{max} -f_{min})}{ |\Kk| \sqrt{T_s}} .
\end{equation*}
Taking iterated expectation of the above with respect to $\QQ^\MDP(t)$ (recall that $\Delta(t)$ is the expected drift conditioned on queue lengths) and telescoping, we  obtain after simple algebras that:
\begin{align}
    &\EE[ \Ll(\QQ^\MDP(t)) ] =  O(VT + T_S^2 T), \label{thrm_mdp_proof3}\\
      &F(\pP^{\pi_\varepsilon})  - F(\pP^\MDP) = O\Big( \frac{T_s^2}{V} +  \frac{ 1}{\sqrt{T_s}} \Big).  \label{thrm_mdp_proof4}
\end{align}
Via Holder and Jensen inequalities, we  show that $\EE[\|\QQ^{\MDP}(T)\|_1]  \leq \sqrt{2|\Kk| |\Ii| \EE[ \Ll(\QQ^\MDP(t)) ]} $, which, combined with \eqref{thrm_mdp_proof3}, concludes the proof of \eqref{thrm_mdp_eq1}. Plugging \eqref{thrm_mdp_proof4} into \eqref{thrm_mdp_proof2}, we  conclude the proof of \eqref{thrm_mdp_eq2}. The full proof is given in Appendix \ref{appen_thrm_mdp}.
\end{proofsketch}

From  Corollary \ref{corol_approx_region}, we  know that $ \widetilde{\Lmb}(\gma, \qQ) =  \Lmb(\gma, \qQ)- O\big(\frac{1}{\sqrt{T_s}}\big) \one $. 
Theorem \ref{thrm_mdp} shows that the  $\MDP$ policy can support the stability region  $(1-\varepsilon) \widetilde{\Lmb}(\gma, \qQ)$ with $\varepsilon = \frac{K_3}{\sqrt{T_s}}$, which thus approaches arbitrarily close to the true stability region $\Lmb(\gma, \qQ)$ as $T_s$ grows to infinity.
Finally, the next Corollary summarizes the choice of $V$ for the $\MDP$ policy to solve the   problem $\OP$ up to  $O\big(\frac{1}{\sqrt{T_s}}\big)$ optimality gap.
\begin{corollary}
\label{corol_V_choice}
    For any choice of $V= \Theta(\min\{T_s^{2.5}, T^\beta\})$ with $\beta \in (0, \frac{1}{2})$, we have the  stability guarantee $\EE[\QQ^{\MDP}(T)] = o(T)$ and the optimality gap $ F(\pP^*) - F(\pP^\MDP) = O\big(\frac{1}{\sqrt{T_s}}\big)$.
\end{corollary}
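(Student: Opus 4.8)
The plan is to obtain Corollary \ref{corol_V_choice} directly from Theorem \ref{thrm_mdp} by substituting the prescribed schedule $V = \Theta(\min\{T_s^{2.5}, T^\beta\})$ into the two estimates \eqref{thrm_mdp_eq1} and \eqref{thrm_mdp_eq2}. Throughout I would keep Assumption \ref{assum_slater}, condition \eqref{assum_charac}, and the arrival-rate requirement $\lmb \in (1-\varepsilon)\widetilde{\Lmb}(\gma,\qQ)$ with $\varepsilon = K_4/\sqrt{T_s}$ in force, since these are precisely the hypotheses under which Theorem \ref{thrm_mdp} holds.

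First I would establish mean rate stability using the branch $V = O(T^\beta)$ of the minimum. Then $\sqrt{VT} = O(T^{(1+\beta)/2})$, and since $\beta \in (0,\tfrac12) \subset (0,1)$ the exponent obeys $(1+\beta)/2 < 1$, so $\sqrt{VT} = o(T)$. As $T_s$ is a fixed system parameter, we also have $T_s\sqrt{T} = o(T)$. Plugging both facts into \eqref{thrm_mdp_eq1} gives $\EE[\|\QQ^{\MDP}(T)\|_1] = o(T)$, i.e. the mean rate stability condition \eqref{queue_stability1}, which is the asserted guarantee $\EE[\QQ^{\MDP}(T)] = o(T)$.

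Next I would bound the optimality gap using the other branch. Since $F(\pP^*) - F(\pP^\MDP)$ is an asymptotic quantity (recall $F$ is a $\liminf$ over the horizon), it suffices to consider $T$ large enough that $T^\beta \ge T_s^{2.5}$; for such $T$ the rule gives $V = \Theta(T_s^{2.5})$, whence $T_s^2/V = \Theta(T_s^{-1/2})$. Substituting into \eqref{thrm_mdp_eq2} yields $0 \le F(\pP^*) - F(\pP^\MDP) = O(T_s^2/V + 1/\sqrt{T_s}) = O(1/\sqrt{T_s})$, as claimed.

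The step I expect to require the most care is the bookkeeping between the two time scales, the horizon $T$ and the frame size $T_s$: the stability half of the statement is governed by the $V = O(T^\beta)$ side of the minimum (capping how fast $V$ may grow so the backlog stays sublinear in $T$), while the optimality half is governed by the $V = \Omega(T_s^{2.5})$ side (the threshold at which the $B_1/V$-type penalty term becomes $O(T_s^{-1/2})$); taking $V$ to be the minimum is exactly what lets a single schedule serve both purposes, and the main thing to check is that $\beta < 1$ leaves enough slack for $\sqrt{VT}$ to stay $o(T)$ in the $V = \Theta(T^\beta)$ regime. The remaining manipulations are elementary.
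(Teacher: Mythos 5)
Your proposal is correct and matches the paper's intent: the corollary is stated as a direct consequence of Theorem \ref{thrm_mdp}, and your argument is exactly the intended substitution — the $T^\beta$ branch (with $\beta<1$) makes $\sqrt{VT}+T_s\sqrt{T}=o(T)$ in \eqref{thrm_mdp_eq1}, while for $T$ large relative to $T_s$ the minimum equals $T_s^{2.5}$, so $T_s^2/V=O(T_s^{-1/2})$ in \eqref{thrm_mdp_eq2}. No gaps; the paper offers no separate proof beyond this plug-in reasoning.
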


\section{ Application of the QoS Framework}
\label{sec_qos_application}
In this section, we demonstrate the application of our proposed QoS framework in real-world scenarios.

\subsection{Throughput and Delay Guarantees}
\label{sec_implication_qos}
While our QoS model is designed to be generalized enough to handle a variety of QoS metrics, we show how it can be leveraged to directly imply the conventional throughput and delay guarantees. First, as in the following Corollary, whose proof is given in Appendix \ref{appen_corol_throughput_bound}, we can derive directly from the probabilistic QoS constraint \eqref{qos_constraint1} a bound on guaranteed service throughput in  expectation.

\begin{corollary}
\label{corol_throughput_bound}
The expected total  OTT service for type-$c$ jobs of client $i$ is bounded by:
\begin{align}
     \EE[\mu^c_i(t)] \geq |\Kk| T_s \gamma^c_i q^c_i, \quad \forall i\in \Ii, c\in \Cc. \label{qos_throughput1}
\end{align}
\end{corollary}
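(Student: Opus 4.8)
The plan is to derive \eqref{qos_throughput1} directly from the probabilistic QoS constraint \eqref{qos_constraint1} via a one-line ``layer-cake''/Markov-type argument, with the only subtlety being to handle the two cases $q^c_i = 0$ and $(i,c)\in\Ss$ separately. First I would recall that $\mu^c_i(t) = \sum_{k\in N(i)}\sum_{\tau=0}^{T_s-1} y^c_{ki}(t,\tau)$ is a sum of Bernoulli variables and hence a nonnegative random variable almost surely. When $q^c_i = 0$ the right-hand side of \eqref{qos_throughput1} is $0$, so the claim is immediate from $\EE[\mu^c_i(t)] \geq 0$. For $(i,c)\in\Ss$, the feasibility of $\pP(t)$ (either $\pP(t)\in\Dd(\gma,\qQ)$ by definition, or $\pP(t)\in\widetilde\Dd(\gma,\qQ)$ together with Theorem~\ref{concen_bound}) guarantees that the probabilistic constraint \eqref{qos_constraint1} holds.

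The key step is the pointwise bound
\[
\mu^c_i(t) \;\geq\; |\Kk|\,T_s\,\gamma^c_i\cdot \mathbbm{1}\!\Big\{\tfrac{1}{|\Kk|T_s}\mu^c_i(t) > \gamma^c_i\Big\},
\]
which holds surely: on the event $\{\tfrac{1}{|\Kk|T_s}\mu^c_i(t) > \gamma^c_i\}$ it is the definition of the event (and in fact $\mu^c_i(t)$ exceeds the right-hand side), while off the event the right-hand side is $0$ and $\mu^c_i(t)\geq 0$. Taking expectations and using \eqref{qos_constraint1},
\[
\EE[\mu^c_i(t)] \;\geq\; |\Kk|\,T_s\,\gamma^c_i\cdot P\!\Big(\tfrac{1}{|\Kk|T_s}\mu^c_i(t) > \gamma^c_i\Big) \;\geq\; |\Kk|\,T_s\,\gamma^c_i\,q^c_i,
\]
which is exactly \eqref{qos_throughput1}.

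There is essentially no obstacle here — the result is a direct probabilistic consequence of the definition of the QoS constraint, and the only thing to be careful about is that the bound is stated for all $(i,c)\in\Ii\times\Cc$, so the degenerate case $q^c_i = 0$ (equivalently $(i,c)\notin\Ss$) must be dispatched by the trivial nonnegativity argument before invoking \eqref{qos_constraint1}.
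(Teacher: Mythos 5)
Your proposal is correct and follows essentially the same argument as the paper: the appendix proof is exactly the one-line bound $\EE[\mu^c_i(t)] \geq |\Kk| T_s \gamma^c_i\, P\big(\mu^c_i(t) > |\Kk| T_s \gamma^c_i\big) \geq |\Kk| T_s \gamma^c_i q^c_i$, i.e.\ the Markov/layer-cake step you describe. Your separate treatment of $q^c_i = 0$ is harmless but unnecessary, since \eqref{qos_constraint1} holds trivially in that case and the same inequality chain applies.
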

The above result demonstrates how $T_s, \gamma^c_i$ and $q^c_i$ can be chosen to enforce overall  service guarantee for throughput-demanding applications. Besides throughput, a common QoS metric required by time-sensitive applications is delay, i.e. the requested job should be served within the required amount of time. 
For applications with deterministic job arrivals such as streaming with stringent delay requirement, the time frame size can be chosen to meet the latency requirement and $q^c_i$ is set close to 1 to ensure negligible failure probability, while $\gamma^c_i$  is chosen in view of \eqref{qos_throughput1} to sustain the required throughput. Such guaranteed delivery ratio with probability 1 was considered in \cite{qos1}.
On the other hand, many applications such as environmental monitoring  only demand soft delay requirements, whereby jobs are allowed to be served in multiple time-frames within  long-term average delay constraint. 
We establish in the next Theorem an upper bound for the average queueing delay $W^c_i$ that is only dependent on $\gamma^c_i, q^c_i$ and the first two  moments of the arrival process. Moreover, real-world applications with stochastic job arrivals either have readily available first and second moments  from  measurement data or can be modeled as Poisson processes, which  inherently provide well-defined first two moments.


\begin{theorem}
\label{thrm_delay_bound}
 Under the assumption that the second moment of the arrival process is finite and for any feasible scheduling  probabilities $\pP(t) \in  \Dd(\gma, \qQ)$, we have the following bound on the average queueing delay of type-$c$ jobs at client $i$ if $\gamma^c_i q^c_i >  \frac{\lambda^c_i}{|\Kk| T_s}$:
 \begin{align}
\label{queue_delay_bound}
    W^c_i \leq \frac{\EE[a^c_i(t)^2] + T_s^2 U_i^2 + T_s U_i- 2\lambda^c_i |\Kk| T_s \gamma^c_i q^c_i}{ 2 \lambda^c_i (  |\Kk| T_s \gamma^{c}_{i}q^c_i  - \lambda^c_i )}.
\end{align}
Furthermore, a sufficient condition for $W^c_i \leq W^*$, i.e. the queueing delay is bounded by some given $W^* >0$, is:
 \begin{align}
\gamma^c_i q^c_i   \geq& \frac{ \EE[a^c_i(t)^2] + T_s^2 U_i^2 + T_s U_i + 2(\lambda^{c}_i)^2 W^*   }{2\lambda^c_i |\Kk | T_s (W^*+1)}.\label{queue_delay_suff1}
\end{align}
\end{theorem}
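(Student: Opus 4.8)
The plan is to treat each queue $Q^c_i(t)$ in isolation as a single-queue system and apply a Lyapunov drift argument on the per-queue quadratic potential $\tfrac{1}{2} Q^c_i(t)^2$ to derive a bound on the time-averaged expected backlog, then convert the backlog bound into a delay bound via Little's law. First I would establish a per-queue drift inequality: squaring the update $Q^c_i(t+1) = (Q^c_i(t) + a^c_i(t) - \mu^c_i(t))^+$ and using $((x)^+)^2 \le x^2$, I get
\begin{align*}
    Q^c_i(t+1)^2 - Q^c_i(t)^2 \le (a^c_i(t) - \mu^c_i(t))^2 + 2 Q^c_i(t)(a^c_i(t) - \mu^c_i(t)).
\end{align*}
Taking conditional expectation given $Q^c_i(t)$, the first term is bounded by a constant: $\EE[(a^c_i(t) - \mu^c_i(t))^2] \le \EE[a^c_i(t)^2] + T_s^2 U_i^2 + T_s U_i$, where the service second-moment bound follows because $\mu^c_i(t)$ is a sum of at most $\sum_k p^c_{ki}$-weighted Bernoullis whose per-slot total is at most $U_i$ by the transmission-rate constraint \eqref{pmax1} (so $\EE[\mu^c_i(t)] \le T_s U_i$ and $\EE[\mu^c_i(t)^2] \le T_s^2 U_i^2 + T_s U_i$ via a variance-plus-mean-squared decomposition for the Poisson-binomial-type sum). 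For the cross term I use $\EE[a^c_i(t)] = \lambda^c_i$ and the key lower bound $\EE[\mu^c_i(t)] \ge |\Kk| T_s \gamma^c_i q^c_i$ from Corollary \ref{corol_throughput_bound}, which gives a negative drift coefficient $-(|\Kk| T_s \gamma^c_i q^c_i - \lambda^c_i) < 0$ exactly under the stated hypothesis $\gamma^c_i q^c_i > \lambda^c_i / (|\Kk| T_s)$.

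Next I would telescope the drift inequality over $t = 0, \dots, T-1$, take full expectations, use $Q^c_i(0) = 0$ and nonnegativity of $Q^c_i(T)^2$, and divide by $T$ to obtain
\begin{align*}
    \limsup_{T \to \infty} \frac{1}{T} \sum_{t=0}^{T-1} \EE[Q^c_i(t)] \le \frac{\EE[a^c_i(t)^2] + T_s^2 U_i^2 + T_s U_i}{2(|\Kk| T_s \gamma^c_i q^c_i - \lambda^c_i)},
\end{align*}
which is the time-averaged backlog bound $\bar{Q}^c_i$. Then I apply Little's law in the form $W^c_i = \bar{Q}^c_i / \lambda^c_i$ (the average delay equals average backlog divided by average arrival rate), giving
\begin{align*}
    W^c_i \le \frac{\EE[a^c_i(t)^2] + T_s^2 U_i^2 + T_s U_i}{2\lambda^c_i(|\Kk| T_s \gamma^c_i q^c_i - \lambda^c_i)}.
\end{align*}
To match the exact numerator in \eqref{queue_delay_bound}, I note $-2\lambda^c_i |\Kk| T_s \gamma^c_i q^c_i$ appears because one typically strengthens the cross-term handling: rather than bounding the constant term separately, one keeps $-2 Q^c_i(t) \mu^c_i(t)$ partly and uses $\EE[\mu^c_i(t)^2] \le \EE[\mu^c_i(t)](\text{something})$; more directly, the stated numerator equals $\EE[a^c_i(t)^2] + T_s^2 U_i^2 + T_s U_i - 2\lambda^c_i |\Kk| T_s \gamma^c_i q^c_i$ while the denominator is $2\lambda^c_i(|\Kk| T_s \gamma^c_i q^c_i - \lambda^c_i)$, so this follows from a slightly tighter drift bookkeeping where the $-2\EE[Q^c_i(t)]\EE[\mu^c_i(t)]$ term is split to absorb part of it into the numerator after telescoping — I would carry the $\lambda^c_i$ factor through the cross term as $2\EE[Q^c_i(t)](\lambda^c_i - \EE[\mu^c_i(t)])$ and bound $\EE[\mu^c_i(t)] \ge |\Kk|T_s\gamma^c_i q^c_i$ in a way that also contributes the $-2\lambda^c_i|\Kk|T_s\gamma^c_iq^c_i$ constant. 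Finally, the sufficient condition \eqref{queue_delay_suff1} follows by setting the right-hand side of \eqref{queue_delay_bound} $\le W^*$ and solving the resulting linear inequality for $\gamma^c_i q^c_i$: rearranging $\text{(num)} \le 2W^*\lambda^c_i(|\Kk|T_s\gamma^c_iq^c_i - \lambda^c_i)$ isolates $\gamma^c_iq^c_i \ge (\EE[a^c_i(t)^2] + T_s^2U_i^2 + T_s U_i + 2(\lambda^c_i)^2 W^*)/(2\lambda^c_i|\Kk|T_s(W^*+1))$.

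The main obstacle I anticipate is getting the second-moment bound on the service $\mu^c_i(t)$ exactly right — in particular justifying $\EE[\mu^c_i(t)^2] \le T_s^2 U_i^2 + T_s U_i$ rather than a looser bound using $|\Kk| T_s$ as the cap. This requires using the transmission-rate constraint \eqref{pmax1} carefully: $\mu^c_i(t) = \sum_{k,\tau} y^c_{ki}(t,\tau)$ where across all $(k,c)$ pairs the per-slot activation probabilities sum to at most $U_i$, so summing only over fixed $c$ and over $\tau$ gives $\EE[\mu^c_i(t)] \le T_s U_i$; the second moment then needs independence across slots (which holds by the channel model) plus the elementary fact that for a sum $S$ of independent $\{0,1\}$ variables with $\EE[S] = m$ one has $\EE[S^2] = \mathrm{Var}(S) + m^2 \le m + m^2$. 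A secondary subtlety is invoking Little's law rigorously in this discrete-time frame-based setting — I would either cite the standard queueing-theoretic version (e.g. from \cite{neelybook1}) or argue directly that the long-run average delay experienced by arriving jobs is bounded by the long-run average backlog over the arrival rate, which holds because every queued job contributes one unit to the backlog for each time-frame it waits.
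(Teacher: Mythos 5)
Your overall route is the same as the paper's: square the per-queue dynamics, take conditional expectations, use the throughput lower bound $\EE[\mu^c_i(t)]\geq|\Kk|T_s\gamma^c_iq^c_i$ from Corollary \ref{corol_throughput_bound} to get negative drift, bound $\EE[\mu^c_i(t)^2]=\mathbf{Var}[\mu^c_i(t)]+\EE[\mu^c_i(t)]^2\leq T_sU_i+T_s^2U_i^2$ via the Poisson--Binomial structure and the rate constraint \eqref{pmax1}, telescope, and finish with Little's law and an algebraic rearrangement for \eqref{queue_delay_suff1}. Your handling of the service second moment and of Little's law matches the paper's Lemma \ref{lma_exp_var} and its use of Little's Theorem, so those steps are fine.

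The genuine gap is in how you try to recover the $-2\lambda^c_i|\Kk|T_s\gamma^c_iq^c_i$ term in the numerator of \eqref{queue_delay_bound}. Your first-pass bound discards the cross term by using $(a^c_i(t)-\mu^c_i(t))^2\leq a^c_i(t)^2+\mu^c_i(t)^2$, which only yields the weaker numerator $\EE[a^c_i(t)^2]+T_s^2U_i^2+T_sU_i$; your proposed repair, namely ``splitting'' the queue-weighted cross term $2\EE[Q^c_i(t)](\lambda^c_i-\EE[\mu^c_i(t)])$ and absorbing part of it into the constant, does not work, because $\EE[Q^c_i(t)]$ is time-varying and a priori unbounded, so no piece of that term can be turned into a time-independent constant; after telescoping it can only produce the denominator factor $2\lambda^c_i(|\Kk|T_s\gamma^c_iq^c_i-\lambda^c_i)$. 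The correct (and simple) source of the missing term is the arrival--service cross term inside the square: expand $(a^c_i(t)-\mu^c_i(t))^2=a^c_i(t)^2-2a^c_i(t)\mu^c_i(t)+\mu^c_i(t)^2$, use independence of arrivals and channel outcomes to write $\EE[a^c_i(t)\mu^c_i(t)]=\lambda^c_i\EE[\mu^c_i(t)]$, and then apply \eqref{qos_throughput1} to get $-2\EE[a^c_i(t)\mu^c_i(t)]\leq-2\lambda^c_i|\Kk|T_s\gamma^c_iq^c_i$, exactly as in the paper's proof. This is not merely cosmetic: with your looser numerator, setting the bound below $W^*$ gives a sufficient condition with denominator $2\lambda^c_i|\Kk|T_sW^*$ rather than $2\lambda^c_i|\Kk|T_s(W^*+1)$, so the stated condition \eqref{queue_delay_suff1} would not follow.
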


\begin{figure*}[h]
\centering
\begin{subfigure}{0.32\textwidth}
\includegraphics[width=\textwidth]{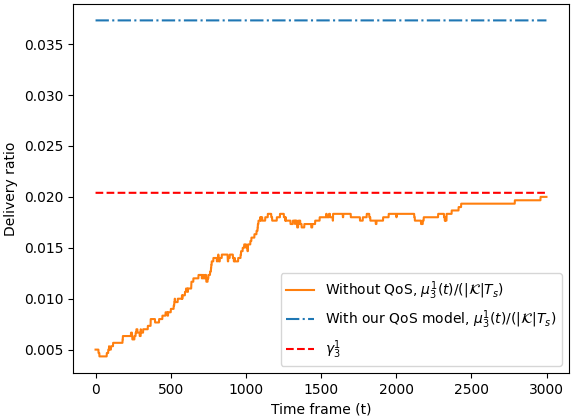}
\caption{Video streaming (type-$1$ job) with requirement of full throughput delivery (i.e.$\gamma^1_1\geq 0.0204$) within 150 ms latency. Conventional method from the literature fails to support the required throughput.}
\label{fig: vid_stream}
\end{subfigure}
\hspace{0.2cm}
\begin{subfigure}{0.31\textwidth}
\includegraphics[width=\textwidth]{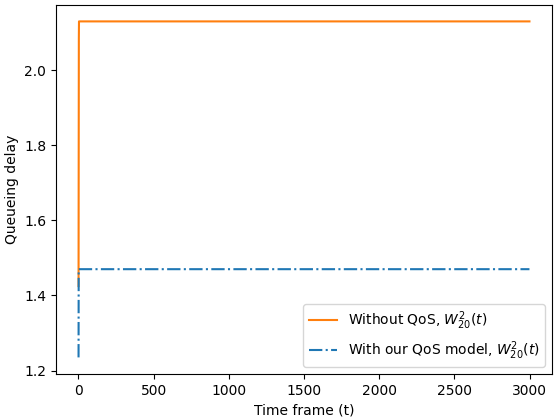}
\caption{Environmental monitoring (type-$2$ job) with delay requirement of 4 minutes. 
The actual delay achieved by our model is even lower than that of the baseline without QoS.}
\label{fig: env_moni}
\end{subfigure}
\hspace{0.2cm}
\begin{subfigure}{0.32\textwidth}
\includegraphics[width=\textwidth]{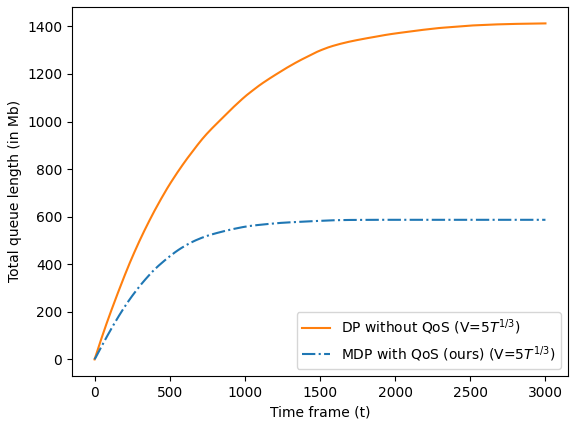}
\caption{Data backup (type-$3$ job) with queue stability requirement.
While both baselines achieve total queue stability,  the DP policy without QoS provisioning has much higher queue backlogs. }
\label{fig: queue}
\end{subfigure}
\caption{ The MDP with our QoS model achieves better QoS than the DP without QoS in all applications.}
\label{fig: qos_model_strong}
\end{figure*}
Theorem \ref{thrm_delay_bound} (proof in Appendix \ref{appen_thrm_delay_bound}) not only provides the delay guarantees via \eqref{queue_delay_bound} that can be given to clients in the SLA phase before the actual  operation, but also elucidates the choices of $\gamma^c_i$ and $q^c_i$ via \eqref{queue_delay_suff1}  to meet the desired delay requirements.

\subsection{Application Examples}
\label{sec_example}

We demonstrate the  applicability of our QoS model on three real-world applications:  real-time video streaming  \cite{video_latency1}, environmental monitoring  \cite{minutes_delay1} and data backup. We use a  setting of ten InPs, which are all 5G cellular base stations and fully connected to twenty clients, with link capacity of 10 Mbps. 
Five of the 5G networks, operating as secondary connections, offer a channel quality of 
$r_{ki}= 0.8$ for every client, while the remaining five, acting as primary networks, provide more stable connections at at $r_{ki} = 0.9$. The transmission rates of the clients are set to $U_i = 1$ for all clients.
We set the actual length of an operational time slot to 0.5 ms, which is the switching frequency supported by 5G cellular suppliers \cite{5g_speed1}.
The assumption on unit link capacity per operational time slot means that one data unit in our setting is equal to $10 \cdot 10^{-3}\cdot 0.5= 5 \cdot 10^{-3} $ Mb. 
Real-time video streaming is type-$1$ job with stringent delay requirement of 150 ms and requires full delivery of the required throughput with negligible failure probability \cite{video_latency1}. 
Environmental monitoring is type-$2$ job with mild delay requirement of up to 4 minutes, i.e. thus set to 240 s in our setting, and does not require immediate throughput delivery \cite{minutes_delay1}. 
{Data backup is type-$3$ job with  no deadline requirement, in which the data merely needs to be served eventually, and thus corresponds to the requirement of queue stability as in \eqref{queue_stability1}.}
We hence design the actual length of one orchestration time frame to be 150 ms to meet the more stringent delay requirement (of video streaming). Given the length of each time slot being 0.5 ms, this corresponds to $T_s = 300$.
Assume that type-$1$ jobs arrive  at the constant rate of $2$ Mbps (typical rate for  streaming \cite{video_latency1}), or equivalently $\lambda^1_i = \frac{2\cdot T_s\cdot 10^{-3}\cdot0.5}{5\cdot 10^{-3}}=60$ data units per time frame. 
 Similarly, assume that type-$2$ jobs follow Poisson distribution at the rate of $0.4$ Mbps (typical rate for environmental sensors \cite{50kps}), or equivalently $\lambda^2_i = 12$ data units per time frame. {Type-3 jobs are modeled via a Pareto distribution to capture heavy-tailed processes commonly observed in practice.}
{Among the twenty clients, fifteen of them are sensors performing both environmental monitoring and data backup for storage of the collected data, both of which are at the rate of 0.4 Mbps. The remaining five clients perform video streaming,  three of which additionally perform data backup at the rate of 2 Mbps.}
Now, we proceed to design the QoS models:
\begin{itemize}
    \item \textit{Video streaming:} We first set $q^1_i = 0.99$ to ensure negligible failure probability and, from \eqref{qos_throughput1}, aim to ensure $\gamma^1_i q^1_i \geq \frac{\lambda^1_i}{|\Kk| T_s}$ to guarantee full throughput delivery within every time frame, i.e. also the application's delay requirement. Thus, we choose $\gamma^1_i= 0.0204$ for service provisioning.
    \item \textit{Environmental monitoring:} The 240 s delay requirement corresponds to a tolerable queueing delay of $W^* = 1600$ time slots. From \eqref{queue_delay_suff1} in Theorem \ref{thrm_delay_bound}, we can compute the sufficient conditions for meeting the application's delay requirement as  $\gamma^2_i q^2_i \geq 4.78222 \cdot 10^{-3}$. We thus choose the reliability metric to be $q^2_i = 0.7$, and set $\gamma^2_i = 6.83175 \cdot 10^{-3}$.
    \item \textit{Data backup:} This application only requires queue stability \eqref{queue_stability1}. We thus set $\gamma^3_i=0$ and $ q^3_i = 0$.
\end{itemize}
Note that these parameters $T_s, \gma$ and $\qQ$ satisfy  condition \eqref{assum_charac} and Assumption  \ref{assum_slater} in Theorem \ref{thrm_mdp}.
In the next section, we empirically verify the benefits of our QoS model under the described setting.
\begin{figure}
\includegraphics[width=0.33\textwidth]{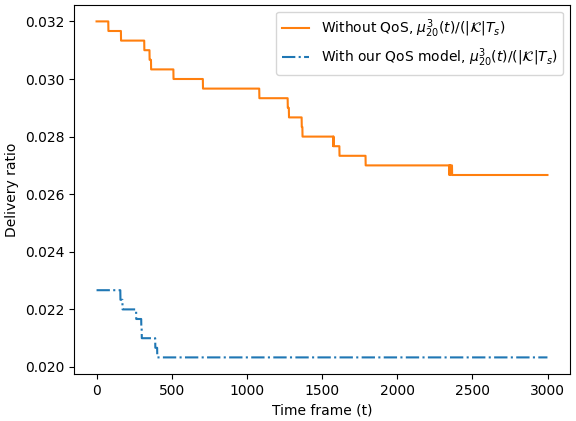}
\caption{OTT service of type-$3$ job at client 20. The baseline without QoS over-provisions service for type-$3$ with no strict requirements within a time frame, i.e. $q^3_i = 0$ and $\gamma^3_i=0$.}
\label{fig: env_moni_service}
\end{figure}

\subsection{ Simulation Results}
\label{sec_exp}
We perform numerical simulation under the setting in Section \ref{sec_example} of
ten InPs serving twenty clients for three types of jobs with the described  $(T_s, \gma, \qQ)$ configuration. 
We compare our MDP policy solving $\OP$ with QoS framework with the DP policy solving the  problem  \textit{without QoS constraint}, i.e. excluding the constraint \eqref{qos_constraint1}. This baseline represents previous work in the literature of infrastructure sharing that do not incorporate QoS mechanism and typically solve the problem via the  DP policy   \cite{ spectrum_stable1, spectrum_stable2}. We run the simulation  for $T = 3000$ time frames and set $V = 5 \cdot T^{1/3}$ for both policies.
We consider $\alpha$-fair utility \cite{fairness1} where client $i$ receives  utility $f^c_i(x)= w_c \frac{x^{1-\alpha_c}}{1-\alpha_c}$ given the total  service $x$ for type-$c$.  We set $(w_1, w_2, w_3) = (0.106, 0.516, 0.7)$ and $(\alpha_1, \alpha_2, \alpha_3) = (0.5, 0.875, 0.75)$.
Note that $ \EE[\mu^c_i(t)] = \sum_{k\in N(i)}  \sum_{\tau=0}^{T_s-1} r_{ki} p^c_{ki}(t, \tau)$ is the expected  type-$c$ service received by client $i$. The  utility for time frame $t$ is thus $f(\pP(t)) = \sum_{i\in \Ii} \sum_{c\in \Cc} f^c_i\big(  \sum_{k\in N(i)}  \sum_{\tau=0}^{T_s-1} r_{ki} p^c_{ki}(t, \tau)\big)$.

First, we empirically verify the benefits of our QoS model in Figure \ref{fig: qos_model_strong} depicting the QoS requirements of all the  applications. We report the QoS measurements for two representative clients, 3 and 20, where similar  patterns are also observed at other clients. In Figure \ref{fig: vid_stream}, the MDP policy consistently  satisfies  the full throughput delivery requirement of type-$1$ job (at least $0.02$ successful delivery ratio), while the baseline without QoS fails to maintain high throughput. In Figure \ref{fig: env_moni}, we plot the actual average queueing delay $W^c_i(t)$ at time $t$.
The average queueing delay of our baseline for type-$2$ job is also much less than that of the baseline without QoS as well as the 4 minutes delay requirement.  Figure \ref{fig: queue} shows that queue stability is achieved by both baselines, thereby ensuring the QoS of all type-$3$ jobs. Nevertheless, the higher queue backlogs of the DP baseline without QoS would result in higher packet delays compared to our algorithm.
Figure \ref{fig: env_moni_service} further demonstrates the inefficiencies of infrastructure sharing without QoS that correspond to the QoS violation of type-$1$ job and overall worse performances for type-$2$ and type-$3$ jobs. Specifically, the DP policy without QoS mechanism over-provisions the OTT service for type-$3$ job. Finally,  our MDP policy  achieves the final average utility of $303.292$, which is close to that of $309.517$ of the DP policy, which ignores the QoS requirements, and thus empirically validates our results on vanishing error of the utility (Theorem \ref{thrm_mdp} and Corollary \ref{corol_V_choice}).

\section{Conclusion}
\label{sec_conclude}

In this paper, we propose a new QoS framework for service provision in multi-infrastructure-sharing networks and the efficient MDP algorithm for the problem. We derive utility-optimality and stability gaurantees for MDP, and  demonstrate the application of our QoS model in realistic scenarios. 
Simulations empirically validate the ability of our algorithm  to handle probabilistic QoS constraints.


\vspace{-0.05cm}
\section*{Acknowledgment}

This work was supported by NSF grants CNS-2148128 and CNS-2148183.

\bibliographystyle{ACM-Reference-Format}
\bibliography{sample-base}


\begin{thebibliography}{36}


\ifx \showCODEN    \undefined \def \showCODEN     #1{\unskip}     \fi
\ifx \showISBNx    \undefined \def \showISBNx     #1{\unskip}     \fi
\ifx \showISBNxiii \undefined \def \showISBNxiii  #1{\unskip}     \fi
\ifx \showISSN     \undefined \def \showISSN      #1{\unskip}     \fi
\ifx \showLCCN     \undefined \def \showLCCN      #1{\unskip}     \fi
\ifx \shownote     \undefined \def \shownote      #1{#1}          \fi
\ifx \showarticletitle \undefined \def \showarticletitle #1{#1}   \fi
\ifx \showURL      \undefined \def \showURL       {\relax}        \fi
\providecommand\bibfield[2]{#2}
\providecommand\bibinfo[2]{#2}
\providecommand\natexlab[1]{#1}
\providecommand\showeprint[2][]{arXiv:#2}

\bibitem[Abdel-Malek et~al\mbox{.}(2022)]%
        {nextg_ulti1}
\bibfield{author}{\bibinfo{person}{Mai~A Abdel-Malek}, \bibinfo{person}{Muhammad~M Sayed}, {and} \bibinfo{person}{Mohamed Azab}.} \bibinfo{year}{2022}\natexlab{}.
\newblock \showarticletitle{{UAV-based} privacy-preserved trustworthy seamless service agility for {NextG} cellular networks}.
\newblock \bibinfo{journal}{\emph{Sensors (Basel)}} \bibinfo{volume}{22}, \bibinfo{number}{7} (\bibinfo{date}{April} \bibinfo{year}{2022}), \bibinfo{pages}{2756}.
\newblock


\bibitem[Altman et~al\mbox{.}(2008)]%
        {fairness1}
\bibfield{author}{\bibinfo{person}{Eitan Altman}, \bibinfo{person}{Konstantin Avrachenkov}, {and} \bibinfo{person}{Andrey Garnaev}.} \bibinfo{year}{2008}\natexlab{}.
\newblock \showarticletitle{Generalized $\alpha$-fair resource allocation in wireless networks}. In \bibinfo{booktitle}{\emph{2008 47th IEEE Conference on Decision and Control}}. \bibinfo{pages}{2414--2419}.
\newblock
\href{https://doi.org/10.1109/CDC.2008.4738709}{doi:\nolinkurl{10.1109/CDC.2008.4738709}}


\bibitem[Bertsimas and Sim(2004)]%
        {Bertsimas2004ThePO}
\bibfield{author}{\bibinfo{person}{Dimitris Bertsimas} {and} \bibinfo{person}{Melvyn Sim}.} \bibinfo{year}{2004}\natexlab{}.
\newblock \showarticletitle{The Price of Robustness}.
\newblock \bibinfo{journal}{\emph{Oper. Res.}}  \bibinfo{volume}{52} (\bibinfo{year}{2004}), \bibinfo{pages}{35--53}.
\newblock
\urldef\tempurl%
\url{https://api.semanticscholar.org/CorpusID:8946639}
\showURL{%
\tempurl}


\bibitem[Boyd and Vandenberghe(2004)]%
        {Boyd_Vandenberghe_2004}
\bibfield{author}{\bibinfo{person}{Stephen Boyd} {and} \bibinfo{person}{Lieven Vandenberghe}.} \bibinfo{year}{2004}\natexlab{}.
\newblock \bibinfo{booktitle}{\emph{Convex Optimization}}.
\newblock \bibinfo{publisher}{Cambridge University Press}.
\newblock


\bibitem[Chang et~al\mbox{.}(2011)]%
        {erf_bound}
\bibfield{author}{\bibinfo{person}{Seok-Ho Chang}, \bibinfo{person}{Pamela~C. Cosman}, {and} \bibinfo{person}{Laurence~B. Milstein}.} \bibinfo{year}{2011}\natexlab{}.
\newblock \showarticletitle{Chernoff-Type Bounds for the Gaussian Error Function}.
\newblock \bibinfo{journal}{\emph{IEEE Transactions on Communications}} \bibinfo{volume}{59}, \bibinfo{number}{11} (\bibinfo{year}{2011}), \bibinfo{pages}{2939--2944}.
\newblock
\href{https://doi.org/10.1109/TCOMM.2011.072011.100049}{doi:\nolinkurl{10.1109/TCOMM.2011.072011.100049}}


\bibitem[da~Silva et~al\mbox{.}(2020)]%
        {periodic_qos}
\bibfield{author}{\bibinfo{person}{Giovanni~Farias da Silva}, \bibinfo{person}{Francisco~Vilar Brasileiro}, \bibinfo{person}{Raquel~Vigolvino Lopes}, \bibinfo{person}{F{\'a}bio Jorge~Almeida Morais}, \bibinfo{person}{Marcus Carvalho}, {and} \bibinfo{person}{Daniel Turull}.} \bibinfo{year}{2020}\natexlab{}.
\newblock \showarticletitle{QoS-driven scheduling in the cloud}.
\newblock \bibinfo{journal}{\emph{Journal of Internet Services and Applications}}  \bibinfo{volume}{11} (\bibinfo{year}{2020}).
\newblock
\urldef\tempurl%
\url{https://api.semanticscholar.org/CorpusID:226301495}
\showURL{%
\tempurl}


\bibitem[Hadi et~al\mbox{.}(2019)]%
        {spectrum_stable2}
\bibfield{author}{\bibinfo{person}{Mohammad Hadi}, \bibinfo{person}{Mohammad~Reza Pakravan}, {and} \bibinfo{person}{Erik Agrell}.} \bibinfo{year}{2019}\natexlab{}.
\newblock \showarticletitle{Dynamic Resource Allocation in Metro Elastic Optical Networks Using Lyapunov Drift Optimization}.
\newblock \bibinfo{journal}{\emph{J. Opt. Commun. Netw.}} \bibinfo{volume}{11}, \bibinfo{number}{6} (\bibinfo{date}{Jun} \bibinfo{year}{2019}), \bibinfo{pages}{250--259}.
\newblock
\href{https://doi.org/10.1364/JOCN.11.000250}{doi:\nolinkurl{10.1364/JOCN.11.000250}}


\bibitem[Ho et~al\mbox{.}(2005)]%
        {minutes_delay1}
\bibfield{author}{\bibinfo{person}{Clifford~K Ho}, \bibinfo{person}{Alex Robinson}, \bibinfo{person}{David~R Miller}, {and} \bibinfo{person}{Mary~J Davis}.} \bibinfo{year}{2005}\natexlab{}.
\newblock \showarticletitle{Overview of sensors and needs for environmental monitoring}.
\newblock \bibinfo{journal}{\emph{Sensors (Basel)}} \bibinfo{volume}{5}, \bibinfo{number}{1} (\bibinfo{date}{Feb.} \bibinfo{year}{2005}), \bibinfo{pages}{4--37}.
\newblock


\bibitem[Hou et~al\mbox{.}(2009)]%
        {qos1}
\bibfield{author}{\bibinfo{person}{I.-H. Hou}, \bibinfo{person}{V. Borkar}, {and} \bibinfo{person}{P.~R. Kumar}.} \bibinfo{year}{2009}\natexlab{}.
\newblock \showarticletitle{A Theory of QoS for Wireless}. In \bibinfo{booktitle}{\emph{IEEE INFOCOM 2009}}. \bibinfo{pages}{486--494}.
\newblock
\href{https://doi.org/10.1109/INFCOM.2009.5061954}{doi:\nolinkurl{10.1109/INFCOM.2009.5061954}}


\bibitem[Huang and Neely(2013)]%
        {6387341}
\bibfield{author}{\bibinfo{person}{Longbo Huang} {and} \bibinfo{person}{Michael~J. Neely}.} \bibinfo{year}{2013}\natexlab{}.
\newblock \showarticletitle{Utility Optimal Scheduling in Energy-Harvesting Networks}.
\newblock \bibinfo{journal}{\emph{IEEE/ACM Transactions on Networking}} \bibinfo{volume}{21}, \bibinfo{number}{4} (\bibinfo{year}{2013}), \bibinfo{pages}{1117--1130}.
\newblock
\href{https://doi.org/10.1109/TNET.2012.2230336}{doi:\nolinkurl{10.1109/TNET.2012.2230336}}


\bibitem[Joshi et~al\mbox{.}(2017)]%
        {spectrum_stable1}
\bibfield{author}{\bibinfo{person}{Satya~Krishna Joshi}, \bibinfo{person}{K.~B.~Shashika Manosha}, \bibinfo{person}{Marian Codreanu}, {and} \bibinfo{person}{Matti Latva-aho}.} \bibinfo{year}{2017}\natexlab{}.
\newblock \showarticletitle{Dynamic Inter-Operator Spectrum Sharing via Lyapunov Optimization}.
\newblock \bibinfo{journal}{\emph{IEEE Transactions on Wireless Communications}} \bibinfo{volume}{16}, \bibinfo{number}{10} (\bibinfo{year}{2017}), \bibinfo{pages}{6365--6381}.
\newblock
\href{https://doi.org/10.1109/TWC.2017.2722999}{doi:\nolinkurl{10.1109/TWC.2017.2722999}}


\bibitem[Kadota et~al\mbox{.}(2018)]%
        {thruput_qos1}
\bibfield{author}{\bibinfo{person}{Igor Kadota}, \bibinfo{person}{Abhishek Sinha}, {and} \bibinfo{person}{Eytan Modiano}.} \bibinfo{year}{2018}\natexlab{}.
\newblock \showarticletitle{Optimizing Age of Information in Wireless Networks with Throughput Constraints}. In \bibinfo{booktitle}{\emph{IEEE INFOCOM 2018 - IEEE Conference on Computer Communications}}. \bibinfo{pages}{1844--1852}.
\newblock
\href{https://doi.org/10.1109/INFOCOM.2018.8486307}{doi:\nolinkurl{10.1109/INFOCOM.2018.8486307}}


\bibitem[Khan et~al\mbox{.}(2024)]%
        {video_latency1}
\bibfield{author}{\bibinfo{person}{Imran Khan}, \bibinfo{person}{Tuyen~X. Tran}, \bibinfo{person}{Matti Hiltunen}, \bibinfo{person}{Theodore Karagioules}, {and} \bibinfo{person}{Dimitrios Koutsonikolas}.} \bibinfo{year}{2024}\natexlab{}.
\newblock \bibinfo{title}{An Experimental Study of Low-Latency Video Streaming over 5G}.
\newblock
\showeprint[arxiv]{2403.00752}


\bibitem[Leconte et~al\mbox{.}(2018)]%
        {slice_routing1}
\bibfield{author}{\bibinfo{person}{Mathieu Leconte}, \bibinfo{person}{Georgios~S. Paschos}, \bibinfo{person}{Panayotis Mertikopoulos}, {and} \bibinfo{person}{Ulaş~C. Kozat}.} \bibinfo{year}{2018}\natexlab{}.
\newblock \showarticletitle{A Resource Allocation Framework for Network Slicing}. In \bibinfo{booktitle}{\emph{IEEE INFOCOM 2018 - IEEE Conference on Computer Communications}}. \bibinfo{pages}{2177--2185}.
\newblock
\href{https://doi.org/10.1109/INFOCOM.2018.8486303}{doi:\nolinkurl{10.1109/INFOCOM.2018.8486303}}


\bibitem[Li et~al\mbox{.}(2023)]%
        {downlink_ca1}
\bibfield{author}{\bibinfo{person}{Qianru Li}, \bibinfo{person}{Zhehui Zhang}, \bibinfo{person}{Yanbing Liu}, \bibinfo{person}{Zhaowei Tan}, \bibinfo{person}{Chunyi Peng}, {and} \bibinfo{person}{Songwu Lu}.} \bibinfo{year}{2023}\natexlab{}.
\newblock \showarticletitle{CA++: Enhancing Carrier Aggregation Beyond 5G}. In \bibinfo{booktitle}{\emph{Proceedings of the 29th Annual International Conference on Mobile Computing and Networking}} (Madrid, Spain) \emph{(\bibinfo{series}{ACM MobiCom '23})}. \bibinfo{publisher}{Association for Computing Machinery}, \bibinfo{address}{New York, NY, USA}, Article \bibinfo{articleno}{5}, \bibinfo{numpages}{14}~pages.
\newblock
\showISBNx{9781450399906}
\href{https://doi.org/10.1145/3570361.3592500}{doi:\nolinkurl{10.1145/3570361.3592500}}


\bibitem[Liu and Modiano(2022)]%
        {bai_ott}
\bibfield{author}{\bibinfo{person}{Bai Liu} {and} \bibinfo{person}{Eytan Modiano}.} \bibinfo{year}{2022}\natexlab{}.
\newblock \showarticletitle{Universal Policy Tracking: Scheduling for Wireless Networks with Delayed State Observation}. In \bibinfo{booktitle}{\emph{2022 58th Annual Allerton Conference on Communication, Control, and Computing (Allerton)}}. \bibinfo{pages}{1--8}.
\newblock
\href{https://doi.org/10.1109/Allerton49937.2022.9929348}{doi:\nolinkurl{10.1109/Allerton49937.2022.9929348}}


\bibitem[Liu et~al\mbox{.}(2024)]%
        {bai_quang_track}
\bibfield{author}{\bibinfo{person}{Bai Liu}, \bibinfo{person}{Quang~Minh Nguyen}, \bibinfo{person}{Qingkai Liang}, {and} \bibinfo{person}{Eytan Modiano}.} \bibinfo{year}{2024}\natexlab{}.
\newblock \showarticletitle{Tracking Drift-Plus-Penalty: Utility Maximization for Partially Observable and Controllable Networks}.
\newblock \bibinfo{journal}{\emph{IEEE/ACM Transactions on Networking}} \bibinfo{volume}{32}, \bibinfo{number}{2} (\bibinfo{year}{2024}), \bibinfo{pages}{1064--1079}.
\newblock
\href{https://doi.org/10.1109/TNET.2023.3307684}{doi:\nolinkurl{10.1109/TNET.2023.3307684}}


\bibitem[Neely(2010)]%
        {neelybook1}
\bibfield{author}{\bibinfo{person}{Michael~J. Neely}.} \bibinfo{year}{2010}\natexlab{}.
\newblock \bibinfo{booktitle}{\emph{Stochastic Network Optimization with Application to Communication and Queueing Systems}}.
\newblock \bibinfo{publisher}{Morgan and Claypool Publishers}.
\newblock
\showISBNx{160845455X}


\bibitem[Neely et~al\mbox{.}(2008)]%
        {timevarying_network1}
\bibfield{author}{\bibinfo{person}{Michael~J. Neely}, \bibinfo{person}{Eytan Modiano}, {and} \bibinfo{person}{Chih-Ping Li}.} \bibinfo{year}{2008}\natexlab{}.
\newblock \showarticletitle{Fairness and Optimal Stochastic Control for Heterogeneous Networks}.
\newblock \bibinfo{journal}{\emph{IEEE/ACM Transactions on Networking}} \bibinfo{volume}{16}, \bibinfo{number}{2} (\bibinfo{year}{2008}), \bibinfo{pages}{396--409}.
\newblock
\href{https://doi.org/10.1109/TNET.2007.900405}{doi:\nolinkurl{10.1109/TNET.2007.900405}}


\bibitem[Nguyen and Modiano(2023)]%
        {quang_mobihoc}
\bibfield{author}{\bibinfo{person}{Quang~Minh Nguyen} {and} \bibinfo{person}{Eytan Modiano}.} \bibinfo{year}{2023}\natexlab{}.
\newblock \showarticletitle{Learning to Schedule in Non-Stationary Wireless Networks With Unknown Statistics}. In \bibinfo{booktitle}{\emph{Proceedings of the Twenty-Fourth International Symposium on Theory, Algorithmic Foundations, and Protocol Design for Mobile Networks and Mobile Computing}} (Washington, DC, USA) \emph{(\bibinfo{series}{MobiHoc '23})}. \bibinfo{publisher}{Association for Computing Machinery}, \bibinfo{address}{New York, NY, USA}, \bibinfo{pages}{181–190}.
\newblock
\showISBNx{9781450399265}
\href{https://doi.org/10.1145/3565287.3610258}{doi:\nolinkurl{10.1145/3565287.3610258}}


\bibitem[Nguyen and Modiano(2025)]%
        {quang_dist_sdn}
\bibfield{author}{\bibinfo{person}{Quang~Minh Nguyen} {and} \bibinfo{person}{Eytan Modiano}.} \bibinfo{year}{2025}\natexlab{}.
\newblock \showarticletitle{Optimal Control for Distributed Wireless SDN: Theory and Architecture}.
\newblock \bibinfo{journal}{\emph{IEEE Transactions on Networking}} (\bibinfo{year}{2025}), \bibinfo{pages}{1--17}.
\newblock
\href{https://doi.org/10.1109/TON.2025.3560605}{doi:\nolinkurl{10.1109/TON.2025.3560605}}


\bibitem[Nguyen and Modiano(2024)]%
        {quang_linear_schedule}
\bibfield{author}{\bibinfo{person}{Quang~Minh Nguyen} {and} \bibinfo{person}{Eytan~H. Modiano}.} \bibinfo{year}{2024}\natexlab{}.
\newblock \showarticletitle{Linear-Time Scheduling for Time-Varying Wireless Networks via Randomization}. In \bibinfo{booktitle}{\emph{2024 60th Annual Allerton Conference on Communication, Control, and Computing}}. \bibinfo{pages}{1--8}.
\newblock
\href{https://doi.org/10.1109/Allerton63246.2024.10735302}{doi:\nolinkurl{10.1109/Allerton63246.2024.10735302}}


\bibitem[Nguyen et~al\mbox{.}(2022)]%
        {sdn_wireless}
\bibfield{author}{\bibinfo{person}{Quang~Minh Nguyen}, \bibinfo{person}{M.~Shahir Rahman}, \bibinfo{person}{Xinzhe Fu}, \bibinfo{person}{Sastry Kompella}, \bibinfo{person}{Joseph Macker}, {and} \bibinfo{person}{Eytan~H. Modiano}.} \bibinfo{year}{2022}\natexlab{}.
\newblock \showarticletitle{An Optimal Network Control Framework for Wireless SDN: From Theory to Implementation}. In \bibinfo{booktitle}{\emph{MILCOM 2022 - 2022 IEEE Military Communications Conference (MILCOM)}}. \bibinfo{pages}{102--109}.
\newblock
\href{https://doi.org/10.1109/MILCOM55135.2022.10017713}{doi:\nolinkurl{10.1109/MILCOM55135.2022.10017713}}


\bibitem[Pedersen et~al\mbox{.}(2016)]%
        {5g_speed1}
\bibfield{author}{\bibinfo{person}{Klaus~I. Pedersen}, \bibinfo{person}{Maciej Niparko}, \bibinfo{person}{Jens Steiner}, \bibinfo{person}{Jakub Oszmianski}, \bibinfo{person}{Luke Mudolo}, {and} \bibinfo{person}{Saeed~R. Khosravirad}.} \bibinfo{year}{2016}\natexlab{}.
\newblock \showarticletitle{System Level Analysis of Dynamic User-Centric Scheduling for a Flexible 5G Design}. In \bibinfo{booktitle}{\emph{2016 IEEE Global Communications Conference (GLOBECOM)}}. \bibinfo{pages}{1--6}.
\newblock
\href{https://doi.org/10.1109/GLOCOM.2016.7842312}{doi:\nolinkurl{10.1109/GLOCOM.2016.7842312}}


\bibitem[Qin et~al\mbox{.}(2020)]%
        {sdn_ran1}
\bibfield{author}{\bibinfo{person}{Qiaofeng Qin}, \bibinfo{person}{Nakjung Choi}, \bibinfo{person}{Muntasir~Raihan Rahman}, \bibinfo{person}{Marina Thottan}, {and} \bibinfo{person}{Leandros Tassiulas}.} \bibinfo{year}{2020}\natexlab{}.
\newblock \showarticletitle{Network Slicing in Heterogeneous Software-defined RANs}. In \bibinfo{booktitle}{\emph{IEEE INFOCOM 2020 - IEEE Conference on Computer Communications}}. \bibinfo{pages}{2371--2380}.
\newblock
\href{https://doi.org/10.1109/INFOCOM41043.2020.9155532}{doi:\nolinkurl{10.1109/INFOCOM41043.2020.9155532}}


\bibitem[Rajkumar et~al\mbox{.}(1998)]%
        {qos_approx_bad1}
\bibfield{author}{\bibinfo{person}{R. Rajkumar}, \bibinfo{person}{Chen Lee}, \bibinfo{person}{J.P. Lehoczky}, {and} \bibinfo{person}{D.P. Siewiorek}.} \bibinfo{year}{1998}\natexlab{}.
\newblock \showarticletitle{Practical solutions for QoS-based resource allocation problems}. In \bibinfo{booktitle}{\emph{Proceedings 19th IEEE Real-Time Systems Symposium (Cat. No.98CB36279)}}. \bibinfo{pages}{296--306}.
\newblock
\href{https://doi.org/10.1109/REAL.1998.739755}{doi:\nolinkurl{10.1109/REAL.1998.739755}}


\bibitem[Rakocevic et~al\mbox{.}(2001)]%
        {concave_network1}
\bibfield{author}{\bibinfo{person}{Veselin Rakocevic}, \bibinfo{person}{John Griffiths}, {and} \bibinfo{person}{Graham Cope}.} \bibinfo{year}{2001}\natexlab{}.
\newblock \showarticletitle{Performance analysis of bandwidth allocation schemes in multiservice IP networks using utility functions}.
\newblock In \bibinfo{booktitle}{\emph{Teletraffic Engineering in the Internet Era}}, \bibfield{editor}{\bibinfo{person}{Jorge~Moreira {de Souza}}, \bibinfo{person}{Nelson~L.S. {da Fonseca}}, {and} \bibinfo{person}{Edmundo~A. {de Souza e Silva}}} (Eds.). \bibinfo{series}{Teletraffic Science and Engineering}, Vol.~\bibinfo{volume}{4}. \bibinfo{publisher}{Elsevier}, \bibinfo{pages}{233--243}.
\newblock
\showISSN{1388-3437}
\href{https://doi.org/10.1016/S1388-3437(01)80125-6}{doi:\nolinkurl{10.1016/S1388-3437(01)80125-6}}


\bibitem[Samdanis et~al\mbox{.}(2013)]%
        {ott_qos2}
\bibfield{author}{\bibinfo{person}{K. Samdanis}, \bibinfo{person}{F.~G. Mir}, \bibinfo{person}{D. Kutscher}, {and} \bibinfo{person}{T. Taleb}.} \bibinfo{year}{2013}\natexlab{}.
\newblock \showarticletitle{Service Boost: Towards on-demand QoS enhancements for OTT apps in LTE}. In \bibinfo{booktitle}{\emph{2013 21st IEEE International Conference on Network Protocols (ICNP)}}. \bibinfo{pages}{1--6}.
\newblock
\href{https://doi.org/10.1109/ICNP.2013.6733663}{doi:\nolinkurl{10.1109/ICNP.2013.6733663}}


\bibitem[Sinha and Modiano(2018)]%
        {UMW}
\bibfield{author}{\bibinfo{person}{Abhishek Sinha} {and} \bibinfo{person}{Eytan Modiano}.} \bibinfo{year}{2018}\natexlab{}.
\newblock \showarticletitle{Optimal Control for Generalized Network-Flow Problems}.
\newblock \bibinfo{journal}{\emph{IEEE/ACM Transactions on Networking}} \bibinfo{volume}{26}, \bibinfo{number}{1} (\bibinfo{year}{2018}), \bibinfo{pages}{506--519}.
\newblock
\href{https://doi.org/10.1109/TNET.2017.2783846}{doi:\nolinkurl{10.1109/TNET.2017.2783846}}


\bibitem[Stine and Bastidas(2014)]%
        {sla1}
\bibfield{author}{\bibinfo{person}{John~A. Stine} {and} \bibinfo{person}{Carlos E.~Caicedo Bastidas}.} \bibinfo{year}{2014}\natexlab{}.
\newblock \showarticletitle{Service level agreements with spectrum consumption models}. In \bibinfo{booktitle}{\emph{2014 IEEE International Symposium on Dynamic Spectrum Access Networks (DYSPAN)}}. \bibinfo{pages}{206--214}.
\newblock
\href{https://doi.org/10.1109/DySPAN.2014.6817797}{doi:\nolinkurl{10.1109/DySPAN.2014.6817797}}


\bibitem[Tang and Tang(2019)]%
        {tang2019poisson}
\bibfield{author}{\bibinfo{person}{Wenpin Tang} {and} \bibinfo{person}{Fengmin Tang}.} \bibinfo{year}{2019}\natexlab{}.
\newblock \bibinfo{title}{The Poisson binomial distribution -- Old \& New}.
\newblock
\showeprint[arxiv]{1908.10024}~[math.PR]


\bibitem[Thu et~al\mbox{.}(2018)]%
        {50kps}
\bibfield{author}{\bibinfo{person}{Min~Ye Thu}, \bibinfo{person}{Wunna Htun}, \bibinfo{person}{Yan~Lin Aung}, \bibinfo{person}{Pyone Ei~Ei Shwe}, {and} \bibinfo{person}{Nay~Min Tun}.} \bibinfo{year}{2018}\natexlab{}.
\newblock \showarticletitle{Smart Air Quality Monitoring System with LoRaWAN}. In \bibinfo{booktitle}{\emph{2018 IEEE International Conference on Internet of Things and Intelligence System (IOTAIS)}}. \bibinfo{pages}{10--15}.
\newblock
\href{https://doi.org/10.1109/IOTAIS.2018.8600904}{doi:\nolinkurl{10.1109/IOTAIS.2018.8600904}}


\bibitem[Truemper(1975)]%
        {penalty1}
\bibfield{author}{\bibinfo{person}{Klaus Truemper}.} \bibinfo{year}{1975}\natexlab{}.
\newblock \showarticletitle{Note on Finite Convergence of Exterior Penalty Functions}.
\newblock \bibinfo{journal}{\emph{Management Science}} \bibinfo{volume}{21}, \bibinfo{number}{5} (\bibinfo{year}{1975}), \bibinfo{pages}{600--606}.
\newblock
\showISSN{00251909, 15265501}
\urldef\tempurl%
\url{http://www.jstor.org/stable/2630043}
\showURL{%
\tempurl}


\bibitem[Wang et~al\mbox{.}(2010)]%
        {uplink_ca2}
\bibfield{author}{\bibinfo{person}{Hua Wang}, \bibinfo{person}{Claudio Rosa}, {and} \bibinfo{person}{Klaus Pedersen}.} \bibinfo{year}{2010}\natexlab{}.
\newblock \showarticletitle{Performance of Uplink Carrier Aggregation in LTE-Advanced Systems}. In \bibinfo{booktitle}{\emph{2010 IEEE 72nd Vehicular Technology Conference - Fall}}. \bibinfo{pages}{1--5}.
\newblock
\href{https://doi.org/10.1109/VETECF.2010.5594543}{doi:\nolinkurl{10.1109/VETECF.2010.5594543}}


\bibitem[Xin and Xiang(2009)]%
        {qos_hard1}
\bibfield{author}{\bibinfo{person}{Qin Xin} {and} \bibinfo{person}{Jie Xiang}.} \bibinfo{year}{2009}\natexlab{}.
\newblock \showarticletitle{Joint QoS-aware admission control, channel assignment, and power allocation for cognitive radio cellular networks}. In \bibinfo{booktitle}{\emph{2009 IEEE 6th International Conference on Mobile Adhoc and Sensor Systems}}. \bibinfo{pages}{294--303}.
\newblock
\href{https://doi.org/10.1109/MOBHOC.2009.5336984}{doi:\nolinkurl{10.1109/MOBHOC.2009.5336984}}


\bibitem[Zeb et~al\mbox{.}(2023)]%
        {nextg_overview}
\bibfield{author}{\bibinfo{person}{Shah Zeb}, \bibinfo{person}{Muhammad~Ahmad Rathore}, \bibinfo{person}{Syed~Ali Hassan}, \bibinfo{person}{Saleem Raza}, \bibinfo{person}{Kapal Dev}, {and} \bibinfo{person}{Giancarlo Fortino}.} \bibinfo{year}{2023}\natexlab{}.
\newblock \showarticletitle{Toward AI-Enabled NextG Networks with Edge Intelligence-Assisted Microservice Orchestration}.
\newblock \bibinfo{journal}{\emph{IEEE Wireless Communications}} \bibinfo{volume}{30}, \bibinfo{number}{3} (\bibinfo{year}{2023}), \bibinfo{pages}{148--156}.
\newblock
\href{https://doi.org/10.1109/MWC.015.2200461}{doi:\nolinkurl{10.1109/MWC.015.2200461}}


\end{thebibliography}


\newpage



\appendix
\section*{APPENDIX}



\section{Properties of the Linearized Feasible Domain}

\subsection{Proof of Theorem \ref{concen_bound}}
\label{appen_concen_bound}

    Let $S^{c*}_{ti} \subseteq {N(i) \times [T_s]} = \{(k, \tau): k\in N(i), \tau \in [0, T_s-1]\}$ and $(k^*_0, \tau^*_0)$ be the set and index respectively that achieves the maximum in \eqref{big_b1}, and consider $(k_r, \tau_r)= \argminE_{(k,\tau) \in S^{c*}_{ti} \cup (k^*_0, \tau^*_0) }  \hat{p}^c_{ki}(t, \tau)$. We also define the random variable $\eta^c_{ki}(t, \tau) = \frac{ -y^c_{ki}(t, \tau) + r_{ki} p^c_{ki}(t, \tau)}{\hat{p}^c_{ki}(t, \tau)} \in [ -1, 1]$, which is the ``normalized" version of $ 
z^c_{ki}(t, \tau)=-y^c_{ki}(t, \tau) $ to be used later in the proof. We define the weights:
    \begin{align*}
        w^c_{ki}(t, \tau)  = \begin{cases}
            1, \text{ if $(k, \tau)\in S^{c*}_{ti}$ }\\
            \frac{\hat{p}^c_{ki}(t, \tau)}{ \hat{p}^c_{k_r i}(t, \tau_r)},\text{   if $(k, \tau)\in N(i) \times [T_s] \setminus S^{c*}_{ti}$ }
        \end{cases}.
    \end{align*}
Now, we have:
  \begin{align}
  \nonumber
         &P\big( \frac{1}{|\Kk| T_s} \mu^c_i(t) \leq  \gamma^c_{i}  \big) = P\big(\sum_{k\in N(i)}  \sum_{\tau=0}^{T_s-1}z^c_{ki}(t, \tau)  \geq -|\Kk| T_s \gamma^c_i\big) \\
         \nonumber
         &=P\big(\sum_{k\in N(i)}  \sum_{\tau=0}^{T_s-1} \big[ \eta^c_{ki}(t, \tau) \hat{p}^c_{ki}(t, \tau) - r_{ki} p^c_{ki}(t, \tau) \big] \geq -|\Kk| T_s \gamma^c_i\big) \\
         \nonumber
         &\overset{\eqref{linearized_constraint1}}{\leq} P\big(\sum_{k\in N(i)}  \sum_{\tau=0}^{T_s-1}  \eta^c_{ki}(t, \tau) \hat{p}^c_{ki}(t, \tau) \geq  B^c_i(\Gamma^c_i, \pP(t)) \big)\\
         \nonumber
         &=   P\big(\sum_{k\in N(i)}  \sum_{\tau=0}^{T_s-1}  \eta^c_{ki}(t, \tau) \hat{p}^c_{ki}(t, \tau) \geq   \\
         \nonumber
         &\quad \sum_{(k,\tau)\in S^{c*}_{ti}} \hat{p}^c_{ki}(t, \tau)+ (\Gamma^c_i - \lfloor \Gamma^c_i\rfloor) \hat{p}^c_{k^*_o i}(t, \tau^*_0) \big)\\
         \nonumber
          &=   P\big(\sum_{(k,\tau)\in N(i) \times [T_s] \setminus S^{c*}_{ti}} \eta^c_{ki}(t, \tau) \hat{p}^c_{ki}(t, \tau) \geq   \\
          \nonumber
         &\quad \sum_{(k,\tau)\in S^{c*}_{ti}} \hat{p}^c_{ki}(t, \tau) (1-\eta^c_{ki}(t, \tau)) + (\Gamma^c_i - \lfloor \Gamma^c_i\rfloor) \hat{p}^c_{k^*_o i}(t, \tau^*_0) \big)\\
         \nonumber
          &\leq   P\big(\sum_{(k,\tau)\in N(i) \times [T_s] \setminus S^{c*}_{ti}} \eta^c_{ki}(t, \tau) \hat{p}^c_{ki}(t, \tau) \geq   \\
         &\quad \quad \hat{p}^c_{k_r i}(t, \tau_r) \cdot \big[ \sum_{(k,\tau)\in S^{c*}_{ti}}  (1-\eta^c_{ki}(t, \tau)) + (\Gamma^c_i - \lfloor \Gamma^c_i\rfloor) \big] \big) \label{lma_concen_bound_proof1}\\
         \nonumber
         &=   P\big(  \sum_{(k,\tau)\in N(i) \times [T_s] \setminus S^{c*}_{ti}} \eta^c_{ki}(t, \tau) \frac{\hat{p}^c_{ki}(t, \tau) }{ \hat{p}^c_{k_r i}(t, \tau_r)} \\
         \nonumber
         &\quad \quad +  \sum_{(k,\tau)\in S^{c*}_{ti}}\eta^c_{ki}(t, \tau)  \geq    \Gamma^c_i \big)\\
         &=  P\big(  \sum_{(k,\tau)\in N(i) \times [T_s] } \eta^c_{ki}(t, \tau) w^c_{ki}(t, \tau)  \geq  \Gamma^c_i \big), \label{lma_concen_bound_proof2}
    \end{align}
    where \eqref{lma_concen_bound_proof1} follows from $ \hat{p}^c_{ki}(t, \tau) \geq  \hat{p}^c_{k_r i}(t, \tau_r)$ by definition and that $1 -  \eta^c_{ki}(t, \tau)\geq 0$. 
    
    Next, we proceed to show that  $ w^c_{ki}(t, \tau)  \in [0, 1] $ for all $ (k, \tau) \in N(i) \times [T_s]$. Suppose for the sake of contradiction that there exists some $ (k', \tau') \in N(i) \times [T_s] \setminus S^{c*}_{ti}$ such that $ w^c_{ki}(t, \tau)  > 1$ .  If $ (k', \tau') \neq (k^*_0, \tau^*_0)$, then  $ (k', \tau') \in N(i) \times [T_s] \setminus (S^{c*}_{ti}  \cup (k^*_0, \tau^*_0))$. Since $(k_r, \tau_r) \in S^{c*}_{ti} \cup (k^*_0, \tau^*_0) $, we can  increase the objective value in the maximization problem \eqref{big_b1} by exchanging $(k', \tau')$ with $(k_r, \tau_r)$ from the set $S^{c*}_{ti} \cup (k^*_0, \tau^*_0)$. Likewise,  if $ (k', \tau') = (k^*_0, \tau^*_0)$, then  we first have $ (k_r, \tau_r) \neq (k^*_0, \tau^*_0)$ (otherwise, $w^c_{ki}(t,\tau)$ = 1)  and hence $ (k_r, \tau_r) \in S^{c*}_{ti} $. Then, we  can  increase the objective value in the maximization problem \eqref{big_b1} by exchanging $ (k^*_0, \tau^*_0)$ with$(k_r, \tau_r)$. In both cases, we obtain a new solution other than $S^{c*}_{ti} \cup (k^*_0, \tau^*_0)$ that increases the objective value of \eqref{big_b1}, which is a contradiction.

Given that $ w^c_{ki}(t, \tau)  \in [0, 1] $ for all $ (k, \tau) \in N(i) \times [T_s]$ and $\eta^c_{ki}(t, \tau) $'s are independent and symmetrically distributed in $[-1, 1]$, we use \cite[Theorem 2]{Bertsimas2004ThePO} to obtain that:
\begin{align}
     P\big(  \sum_{(k,\tau)\in N(i) \times [T_s] } \eta^c_{ki}(t, \tau) w^c_{ki}(t, \tau)  \geq  \Gamma^c_i \big) \leq e^{-\frac{(\Gamma^c_i)^2}{2|\Kk| T_s} }. \label{lma_concen_bound_proof3}
\end{align}
Finally, combining \eqref{lma_concen_bound_proof2} and \eqref{lma_concen_bound_proof3}, we conclude the proof of the Theorem.

\subsection{Proof of Theorem \ref{thm_dual}}
\label{appen_thm_dual}

First, the following lemma characterizes $B^c_i(\Gamma^c_i, \pP(t)) $ into linear optimization problem. 

\begin{lemma}
 We have the following: 
 \begin{align}
  B^c_i(\Gamma^c_i, \pP(t))  = \max &\quad   \sum_{k\in N(i)} \label{appen_thm_dual_proof1}
  \sum_{\tau=0}^{T_s-1} \hat{p}^c_{ki}(t, \tau) \cdot z_{k \tau}\\
  \nonumber
     \text{such that }\quad  &  \sum_{k\in N(i)} \sum_{\tau=0}^{T_s-1} z_{k \tau}  \leq \Gamma^c_i\\
     \nonumber
     & 0\leq z_{k \tau} \leq 1, \quad \forall k\in N(i), \tau \in[T_s]
 \end{align}
\end{lemma}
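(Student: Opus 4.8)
The optimization problem on the right-hand side of \eqref{appen_thm_dual_proof1} is a \emph{continuous (fractional) knapsack} problem: the variables $z_{k\tau}$ range over the box $[0,1]$, each has unit ``size'', the budget is $\Gamma^c_i$, and the per-unit values $\hat{p}^c_{ki}(t,\tau)$ are all strictly positive (recall $\hat{p}^c_{ki}(t,\tau) = \max\{r_{ki}p^c_{ki}(t,\tau),\, 1-r_{ki}p^c_{ki}(t,\tau)\} > 0$). The plan is to show that this LP and the combinatorial quantity $B^c_i(\Gamma^c_i,\pP(t))$ from \eqref{big_b1} have the same optimal value, by establishing the two inequalities separately. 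We may assume $\lfloor\Gamma^c_i\rfloor + 1 \le |N(i)|T_s$, so that $\Gamma^c_i < |N(i)|T_s$; otherwise $z \equiv 1$ is feasible and optimal for the LP and one checks directly that its value equals $B^c_i(\Gamma^c_i,\pP(t))$, which in that degenerate case sums all of the $\hat{p}$'s.

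For the inequality $B^c_i(\Gamma^c_i,\pP(t)) \le \text{(LP optimum)}$: take any admissible pair $\big(S^c_{ti},(k_0,\tau_0)\big)$ occurring in the maximization \eqref{big_b1}, with $S^c_{ti} \subseteq N(i)\times[T_s]$, $|S^c_{ti}| = \lfloor\Gamma^c_i\rfloor$, and $(k_0,\tau_0)\notin S^c_{ti}$. Set $z_{k\tau} = 1$ for $(k,\tau)\in S^c_{ti}$, set $z_{k_0\tau_0} = \Gamma^c_i - \lfloor\Gamma^c_i\rfloor$, and set $z_{k\tau} = 0$ otherwise. Then $\sum_{k,\tau} z_{k\tau} = \lfloor\Gamma^c_i\rfloor + (\Gamma^c_i-\lfloor\Gamma^c_i\rfloor) = \Gamma^c_i$ and $0 \le z_{k\tau}\le 1$, so $z$ is LP-feasible, and its objective value is exactly $\sum_{(k,\tau)\in S^c_{ti}} \hat{p}^c_{ki}(t,\tau) + (\Gamma^c_i-\lfloor\Gamma^c_i\rfloor)\hat{p}^c_{k_0 i}(t,\tau_0)$. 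Maximizing over all admissible pairs yields $B^c_i(\Gamma^c_i,\pP(t)) \le \text{(LP optimum)}$.

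For the reverse inequality $\text{(LP optimum)} \le B^c_i(\Gamma^c_i,\pP(t))$: the LP feasible region is a nonempty bounded polyhedron, hence the optimum is attained at a vertex $z^*$. Since every coefficient $\hat{p}^c_{ki}(t,\tau)$ is strictly positive, the budget constraint is tight at $z^*$, i.e. $\sum_{k,\tau} z^*_{k\tau} = \Gamma^c_i$; otherwise one could increase a coordinate of $z^*$ that is not already $1$ (such a coordinate exists because $\sum_{k,\tau} 1 = |N(i)|T_s > \Gamma^c_i$) and strictly improve the objective. A vertex of this polyhedron makes $|N(i)|T_s$ linearly independent constraints active, and at most one of them is the budget inequality, so at least $|N(i)|T_s-1$ of the box constraints $z_{k\tau}\in\{0,1\}$ are active; thus at most one coordinate of $z^*$, say $z^*_{k_0\tau_0}$, lies strictly in $(0,1)$. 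Let $S^c_{ti} := \{(k,\tau)\neq(k_0,\tau_0): z^*_{k\tau}=1\}$. Tightness of the budget gives $|S^c_{ti}| + z^*_{k_0\tau_0} = \Gamma^c_i$, and since $|S^c_{ti}|\in\mathbb{Z}$ and $z^*_{k_0\tau_0}\in[0,1)$ this forces $|S^c_{ti}| = \lfloor\Gamma^c_i\rfloor$ and $z^*_{k_0\tau_0} = \Gamma^c_i-\lfloor\Gamma^c_i\rfloor$ (in the degenerate case $\Gamma^c_i\in\mathbb{Z}$ all coordinates of $z^*$ are integral, $|S^c_{ti}| = \Gamma^c_i$, and $(k_0,\tau_0)$ can be chosen arbitrarily since it is multiplied by $\Gamma^c_i-\lfloor\Gamma^c_i\rfloor = 0$). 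Therefore the LP optimum equals $\sum_{(k,\tau)\in S^c_{ti}} \hat{p}^c_{ki}(t,\tau) + (\Gamma^c_i-\lfloor\Gamma^c_i\rfloor)\hat{p}^c_{k_0 i}(t,\tau_0)$, which is one of the expressions over which $B^c_i(\Gamma^c_i,\pP(t))$ takes its maximum, so it is $\le B^c_i(\Gamma^c_i,\pP(t))$. Combining the two inequalities proves \eqref{appen_thm_dual_proof1}.

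The main technical point is the vertex-structure step: showing that an optimal vertex of ``a box intersected with a single knapsack-budget inequality'' has at most one fractional coordinate and that the budget is tight. This is standard polyhedral combinatorics, but it must be stated carefully, together with the degenerate cases $\Gamma^c_i\in\mathbb{Z}$ and $\Gamma^c_i \ge |N(i)|T_s$. Once this lemma is in hand, Theorem \ref{thm_dual} follows by taking the LP dual of \eqref{appen_thm_dual_proof1} (with dual variables $s^c_i$ for the budget constraint and $v^c_{ki\tau}$ for the upper box constraints) and substituting the resulting $\le$ form of $B^c_i(\Gamma^c_i,\pP(t))$ into \eqref{linearized_constraint1}, after expanding each $\hat{p}^c_{ki}(t,\tau)$ via its two defining affine lower bounds.
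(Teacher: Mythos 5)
Your proposal is correct and follows essentially the same route as the paper: the paper's own proof simply asserts that the LP optimum is attained with $\lfloor\Gamma^c_i\rfloor$ variables at $1$ and one at $\Gamma^c_i-\lfloor\Gamma^c_i\rfloor$ (the fractional-knapsack structure), which makes the LP identical to the subset-selection form of $B^c_i(\Gamma^c_i,\pP(t))$. Your write-up merely supplies the vertex-structure justification and the degenerate cases ($\Gamma^c_i\in\mathbb{Z}$, $\Gamma^c_i$ exceeding the number of variables) that the paper leaves implicit, so no substantive difference in approach.
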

\begin{proof}
Apparently, the optimal solution to the above problem consists of $\lfloor \Gamma^c_i\rfloor$ variables at $1$ and one variable at $\Gamma^c_i - \lfloor \Gamma^c_i\rfloor$. This is equivalent to the subset selection problem as in the original problem, where we find the subset 
$$\{S^c_{ti}\cup\{(k_0, \tau_0)\}: S^c_{ti} \in 2^{N(i) \times [0, T_s-1]}, |S^c_{ti}| =\lfloor \Gamma^c_i\rfloor    \}$$
that maximizes the corresponding objective $\sum_{(k,\tau)\in S^c_{ti}} \hat{p}^c_{ki}(t, \tau)+ (\Gamma^c_i - \lfloor \Gamma^c_i\rfloor) \hat{p}^c_{k_0 i}(t, \tau_0)$. 
\end{proof}

The above problem \eqref{appen_thm_dual_proof1} is feasible and bounded for finite $\Gamma^c_i > 0$.
Since strong duality holds for any feasible linear optimization problem, $B^c_i(\Gamma^c_i, \pP(t)) $ admits the following dual form of \eqref{appen_thm_dual_proof1}:
\begin{align}
    B^c_i(\Gamma^c_i, \pP(t))  = \min &\quad   \sum_{k\in \Kk} \sum_{\tau=0}^{T_s-1} v^c_{i, k \tau} + \Gamma^c_i s^c_i \label{appen_thm_dual_proof2}\\
    \nonumber
     \text{such that }\quad  &  s^c_i + v^c_{ ki \tau} \geq r_{ki} p^c_{ki}(t, \tau),\\
     \nonumber
     &\quad \quad \forall  k\in N(i), \tau \in[T_s] \\
     \nonumber
     &  s^c_i + v^c_{ ki \tau} \geq 1- r_{ki} p^c_{ki}(t, \tau), \\
     \nonumber
     &\quad \quad \forall  k\in N(i), \tau \in[T_s] \\
     \nonumber
     &  v^c_{ki \tau} \geq 0, \quad \forall  k\in N(i), \tau \in[T_s] \\
     \nonumber
     & s^c_i \geq 0.
\end{align}
Substituting \eqref{appen_thm_dual_proof2} into the linearized constraint \eqref{linearized_constraint1} of the definition \eqref{def_approx_region} of the approximate feasible region $  \widetilde{\Dd}(\gma, \qQ)$, we conclude the equivalent form for $  \widetilde{\Dd}(\gma, \qQ)$ as presented in the Theorem's statement.

\subsection{Proof of Theorem \ref{thrm_approx_bound}}
\label{appen_approx_bound}
 As under \eqref{gamma_choice}, the linearized constraint \eqref{linearized_constraint1} implies \eqref{qos_constraint1}, we obtain that $ \widetilde{\Dd}(\gma, \qQ)  \subseteq  {\Dd}(\gma, \qQ)$. To prove the remaining part of \eqref{characterize_region}, we consider any $\pP(t) \in {\Dd}(\gma, \qQ)$ and show that $\pP(t) \in  \widetilde{\Dd}(\gma - \frac{K_1}{\sqrt{T_s}} \one, \qQ)$  under  \eqref{assum_charac}. We first have the following Lemma.

 \begin{lemma}
 \label{lma_exp_var}
  For $\pP(t) \in {\Dd}(\gma, \qQ)$, we have $\forall i\in \Ii, c\in \Cc$:
   \begin{align}
        \EE[\mu^c_i(t)] &= \sum_{k\in N(i)}  \sum_{\tau=0}^{T_s-1} r_{ki}  p^c_{ki}(t, \tau), \label{exp_service1}\\
        \mathbf{Var}[\mu^c_i(t)] &= \sum_{k\in N(i)}  \sum_{\tau=0}^{T_s-1} r_{ki}  p^c_{ki}(t, \tau)(1- r_{ki}  p^c_{ki}(t, \tau)), \label{var_service1}\\
      (1- r_{max})   |\Kk| T_s& \gamma^{c}_{i}q^c_i   \leq  \mathbf{Var}[\mu^c_i(t)]\leq  \EE[\mu^c_i(t)]  \leq T_s U_i.\label{var_bound1}
    \end{align}
 \end{lemma}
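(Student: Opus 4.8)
The plan is to prove the three claims of Lemma~\ref{lma_exp_var} in turn, exploiting the Poisson Binomial structure of $\mu^c_i(t)$ already identified in Section~\ref{sec_hardness_qos}: namely, $\mu^c_i(t) = \sum_{k\in N(i)}\sum_{\tau=0}^{T_s-1} y^c_{ki}(t,\tau)$ is a sum of $|N(i)|T_s$ \emph{independent} Bernoulli variables with $y^c_{ki}(t,\tau)\sim \mathrm{Bern}(r_{ki}p^c_{ki}(t,\tau))$, the independence coming from the fact that each InP activates its link independently of the other InPs and across time slots. Formula~\eqref{exp_service1} then follows immediately from linearity of expectation, and formula~\eqref{var_service1} follows from additivity of variance over independent summands together with $\mathbf{Var}[y^c_{ki}(t,\tau)] = r_{ki}p^c_{ki}(t,\tau)\big(1-r_{ki}p^c_{ki}(t,\tau)\big)$.

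For the chain~\eqref{var_bound1}, I would establish the inequalities from right to left. The rightmost bound $\EE[\mu^c_i(t)]\le T_s U_i$ comes from the transmission-rate constraint~\eqref{pmax1}: fixing the job type $c$, discarding the other types and the factors $r_{ki}\in(0,1)$ gives $\sum_{k\in N(i)} r_{ki}p^c_{ki}(t,\tau) \le \sum_{k\in N(i)} p^c_{ki}(t,\tau) \le U_i$ for every $\tau\in[T_s]$; summing over the $T_s$ slots and invoking~\eqref{exp_service1} yields the claim. The middle bound $\mathbf{Var}[\mu^c_i(t)]\le\EE[\mu^c_i(t)]$ is a term-by-term comparison of~\eqref{var_service1} with~\eqref{exp_service1} using $0\le 1-r_{ki}p^c_{ki}(t,\tau)\le 1$. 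The bound $(1-r_{max})\,\EE[\mu^c_i(t)]\le\mathbf{Var}[\mu^c_i(t)]$ is again term-by-term, now using $1-r_{ki}p^c_{ki}(t,\tau)\ge 1-r_{ki}\ge 1-r_{max}$, which holds because $p^c_{ki}(t,\tau)\le 1$.

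The only step that uses more than the resource-allocation constraints is the leftmost inequality $(1-r_{max})|\Kk|T_s\gamma^c_i q^c_i \le \mathbf{Var}[\mu^c_i(t)]$, and here the plan is to bootstrap the probabilistic QoS constraint~\eqref{qos_constraint1} (which is built into the definition of $\Dd(\gma,\qQ)$, hence available since $\pP(t)\in\Dd(\gma,\qQ)$) into a lower bound on $\EE[\mu^c_i(t)]$. Since $\mu^c_i(t)\ge 0$, a Markov-type argument gives $\EE[\mu^c_i(t)]\ge |\Kk|T_s\gamma^c_i\cdot P\big(\mu^c_i(t) > |\Kk|T_s\gamma^c_i\big) \ge |\Kk|T_s\gamma^c_i q^c_i$, the last step being exactly~\eqref{qos_constraint1}; combining this with the already-established $(1-r_{max})\,\EE[\mu^c_i(t)]\le\mathbf{Var}[\mu^c_i(t)]$ closes the chain. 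For $(i,c)$ with $q^c_i=0$ the leftmost inequality is the trivial $0\le\mathbf{Var}[\mu^c_i(t)]$. I expect no real obstacle here: the lemma is essentially bookkeeping, and the only points requiring care are tracking the direction of each term-by-term inequality and the passage from the probabilistic QoS guarantee to an expectation lower bound.
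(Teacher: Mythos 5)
Your proposal is correct and follows essentially the same route as the paper: Poisson Binomial structure for \eqref{exp_service1}--\eqref{var_service1}, the term-by-term bounds $1-r_{\max}\le 1-r_{ki}p^c_{ki}(t,\tau)\le 1$ together with \eqref{pmax1} for the outer parts of \eqref{var_bound1}, and the Markov-type lower bound on $\EE[\mu^c_i(t)]$ from the QoS constraint \eqref{qos_constraint1}. The only cosmetic difference is that you re-derive this last step inline, whereas the paper invokes it as Corollary \ref{corol_throughput_bound} (inequality \eqref{qos_throughput1}), whose proof is exactly your argument.
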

 \begin{proof}
     The  probabilities $\pP(t)$ would induce the  Poisson Binomial random variable $ \mu^c_i(t)= \sum_{k\in N(i)}  \sum_{\tau=0}^{T_s-1} y^c_{ki}(t, \tau)$ with $y^c_{ki}(t, \tau) \sim Bern( r_{ki}  p^c_{ki}(t, \tau))$. 
 Thus, we obtain the expression for the  expectation and variance of $\mu^c_i(t)$ as in \eqref{exp_service1} and \eqref{var_service1}.  Since $r_{ki} \in (0, r_{max}] $ and $p^c_{ki}(t, \tau) \in [0,1]$, we obtain from  \eqref{var_service1} that:
 \begin{align}
 \nonumber
      \mathbf{Var}[\mu^c_i(t)] &\geq  (1-r_{max})  \sum_{\tau=0}^{T_s-1} r_{ki}  p^c_{ki}(t, \tau) = (1-r_{max} )  \EE[\mu^c_i(t)] \\
      &\overset{\eqref{qos_throughput1}}{\geq}  (1-r_{max} ) |\Kk| T_s \gamma^c_i q^c_i. \label{var_bound0}
 \end{align}
 By \eqref{pmax1}, \eqref{exp_service1}, \eqref{var_service1} and \eqref{var_bound0}, we have:
 \begin{equation*}
      (1- r_{max})   |\Kk| T_s \gamma^{c}_{i}q^c_i   \leq  \mathbf{Var}[\mu^c_i(t)]\leq  \EE[\mu^c_i(t)]  \leq T_s U_i,
 \end{equation*}
 which is the required statement \eqref{var_bound1}.
 \end{proof}

 Back to the main proof, using \eqref{qos_constraint1} and invoking \cite[Theorem 3.3]{tang2019poisson} on the Gaussian approximation error of Poisson Binomial distribution, we have $\forall (i, c)\in \Ss$:
 \begin{align}
 \nonumber
 1-q^c_i &\geq  P\big(  \mu^c_i(t) \leq  |\Kk| T_s\gamma^c_{i}  \big) \\
 \nonumber
 &\geq \Phi\bigg(\frac{|\Kk|  T_s \gamma^{c}_{i} -\EE[\mu^{c}_i(t)]   }{\sqrt{ \mathbf{Var}[\mu^c_i(t)] }} \bigg) - \frac{0.7915}{\sqrt{ \mathbf{Var}[\mu^c_i(t)] }}   \\
 &\overset{\eqref{var_bound1}}{\geq} \Phi\bigg(\frac{|\Kk|  T_s \gamma^{c}_{i} -\EE[\mu^{c}_i(t)]   }{\sqrt{ \mathbf{Var}[\mu^c_i(t)] }} \bigg) - \frac{q^c_i \sqrt{K^c_i}}{\sqrt{T_s  }}. \label{charac_proof1}
 \end{align}
 Under the condition \eqref{assum_charac}, we have that $p^c_i = 1-q^c_i + \frac{q^c_i \sqrt{K^c_i}}{\sqrt{T_s  }} \in (0, 1)$. Thus, we further obtain from \eqref{charac_proof1} that:
 \begin{align}
\label{charac_proof2}
    \EE[\mu^c_i(t)]&\geq |\Kk|T_s \gamma^{c}_{i} - \sqrt{ \mathbf{Var}[\mu^c_i(t)] } \Phi^{-1}(p^c_i).
\end{align}
We  can express  $\Phi^{-1}(.)$ in terms of the inverse error function $\mathit{erf}^{-1}(.)$, whose upper-bound is given by \cite{erf_bound}, as follows:
\begin{align*}
    \Phi^{-1}(p^c_i) =\sqrt{2} \mathit{erf}^{-1}(2p^c_i-1)&\leq 2\sqrt{\tfrac{\pi}{e}}  e^{2 (2p^c_i-1)^2}<   2 e^{1.5}\pi^{0.5}. 
\end{align*}
Plugging the above  into \eqref{charac_proof2} in view of \eqref{exp_service1} and \eqref{var_bound1}:
\begin{align}
\label{charac_proof3}
    \sum_{k\in N(i)}  \sum_{\tau=0}^{T_s-1} r_{ki}  p^c_{ki}(t, \tau) \geq |\Kk|T_s \gamma^{c}_{i} - 2\sqrt{T_s U_{max} \pi e^3} 
\end{align}
Consider the vector $\gma' = \gma - \frac{K_1}{\sqrt{T_s}} \one $, i.e. $ {\gamma^c_i}'= \gamma^c_i - \frac{K_1}{\sqrt{T_s}}, \forall i\in \Ii, c\in \Cc$. Since $\|\qQ\|_\infty \geq q^c_i$, we obtain the following bound:

\begin{align}
\label{charac_proof4}
\gamma^c_i  \geq {\gamma^c_i}' +  \frac{ \sqrt{2  \log\big( (1-q^c_i )^{-1} \big)}}{\sqrt{|\Kk| T_s }} + \frac{2\sqrt{U_{max}  \pi e^3}}{|\Kk|\sqrt{T_s} }
\end{align}
Applying \eqref{charac_proof4} to \eqref{charac_proof3} and using \eqref{gamma_choice}, we have:
\begin{align}
\nonumber
    \sum_{k\in N(i)}  \sum_{\tau=0}^{T_s-1} r_{ki}  p^c_{ki}(t, \tau) &\geq |\Kk|T_s  {\gamma^c_i}' + \Gamma^c_i\\
    &\geq |\Kk|T_s  {\gamma^c_i}' + B^c_i(\Gamma^c_i, \pP(t)), \label{charac_proof5}
\end{align}
where the last line follows directly from  \eqref{big_b1}. Since \eqref{charac_proof5} is just the  constraint \eqref{linearized_constraint1} with $ {\gamma^c_i}'$, we obtain that $\pP(t) \in\widetilde{\Dd}(\gma', \qQ) = \widetilde{\Dd}(\gma - \frac{K_1}{\sqrt{T_s}} \one, \qQ) $, which concludes the proof the Theorem.

\section{Proof of Lemma \ref{lma_driftbound} }
\label{appen_lma_driftbound}

From the queue dynamics \eqref{queue_dynamics}, we first obtain that $\forall i\in \Ii, c\in \Cc, t\geq0$:
\begin{align*}
 Q^c_i(t+1)^2  &\leq \big( Q^c_i(t) + a^c_i(t) -  \mu^c_i(t)\big)^2\\   
 &=  Q^c_i(t)^2 + (a^c_i(t) -  \mu^c_i(t))^2  + 2 Q^c_i(t) (a^c_i(t) -  \mu^c_i(t))\\
 &\leq  Q^c_i(t)^2 + 2 A^2_{max} +2 \mu^c_i(t)^2  + 2 Q^c_i(t) (a^c_i(t) -  \mu^c_i(t)),
\end{align*}
where in the last inequality, we use $0\leq a^c_i(t)\leq A_{max}$. Summing the above over $i\in \Ii, c\in \Cc$, we obtain that:

\begin{align}
\nonumber
    &\frac{1}{2} \sum_{c\in \Cc}\sum_{i\in I} Q^c_i(t+1)^2 -\frac{1}{2}  \sum_{c\in \Cc}\sum_{i\in I} Q^c_i(t)^2  \\
    \nonumber
    &\leq |\Ii| |\Cc| A^2_{max}  +  \sum_{c\in \Cc}\sum_{i\in I} \mu^c_i(t)^2 - \sum_{c\in \Cc}\sum_{i\in I}Q^c_i(t) (a^c_i(t) -  \mu^c_i(t))\\
    \nonumber
    &\leq |\Ii| |\Cc| A^2_{max}  +  [\sum_{c\in \Cc}\sum_{i\in I} \mu^c_i(t)]^2  - \sum_{c\in \Cc}\sum_{i\in I}Q^c_i(t) (a^c_i(t) -  \mu^c_i(t))\\
    \nonumber
    &= |\Ii| |\Cc| A^2_{max}  + |\Kk|^2 T_s^2- \sum_{c\in \Cc}\sum_{i\in I}Q^c_i(t) (a^c_i(t) -  \mu^c_i(t)) \\
    &= B_1 - \sum_{c\in \Cc}\sum_{i\in I}Q^c_i(t) (a^c_i(t) -   \sum_{\tau = 0}^{T_s -1 }\sum_{k\in N(i)} y^c_{ki}(t, \tau)), \label{lma_driftbound_proof1}
\end{align}
where in the second inequality, we use: 
$$\sum_{c\in \Cc}\sum_{i\in I} \mu^c_i(t) =  \sum_{k\in \Kk} \sum_{\tau=0}^{T_s-1} \sum_{c\in \Cc}\sum_{i\in N(k)} y^c_{ki}(t, \tau) =  \sum_{k\in \Kk} \sum_{\tau=0}^{T_s-1} 1= |\Kk|T_s .$$ 
Finally, taking  expectation of \eqref{lma_driftbound_proof1} conditioned on $\QQ(t)$ and noting that $\EE[a^c_i(t) \big| \QQ(t) ] = \EE[a^c_i(t)] = \lambda^c_i$ and $\EE[y^c_{ki}(t) \big| \QQ(t) ] =r_{ki} p^c_{ki}(t)$, we conclude 
 the required statement of the Lemma.

\section{Proof of Lemma \ref{lma_perturbed_obj}}
\label{appen_lma_perturbed_obj}



First, we consider the  decompositions $ \widetilde{\Dd}(\gma, \qQ) = S \cap M(\gma)$ and $ \widetilde{\Dd}(\gma', \qQ) = S \cap M(\gma')$, where the convex set $$  S =\big\{ \pP(t) \in [0, 1]^{T_s |E| |\Cc|}\text{: such that \eqref{prob_constraint}, \eqref{pmax1}}\big\} $$ is comprised of all the mutual linear constraints in $\widetilde{\Dd}(\gma, \qQ)$ and $\widetilde{\Dd}(\gma', \qQ)$ and the convex set $M(\gma)$ is formally defined by:
 \begin{align}
 \nonumber
    M(\gma)  = \{&  \pP(t) \in [0, 1]^{T_s |E| |\Cc|}:  \forall (i, c)\in \Ss: \\
    &|\Kk|T_s \gamma^c_i \leq \sum_{k\in N(i)} \sum_{\tau=0}^{T_s-1} r_{ki}  p^c_{ki}(t, \tau) - B^c_i(\Gamma^c_i, \pP(t)).\}. \label{m_gamma_def1}
 \end{align}
 Next, we consider the following indicator function:
 \begin{align}
 \label{indicator1}
     \mathbbm{1}_S(\pP(t)) = \begin{cases} 0 \quad \text{, if $\pP(t)\in S$}\\ +\infty,  \text{otherwise}\end{cases},
 \end{align}
 which is convex in $\pP$ by the convexity of $S$. By defining the following new concave objective:
\begin{align}
    \bar{h}_t(\pP(t)) = h_t(\pP(t)) - \mathbbm{1}_S(\pP(t)), \label{def_h_bar_t}
\end{align}
we have:
\begin{align}
\nonumber
    h_t(\pP^\MDP(t)) &= \max_{\pP(t) \in \widetilde{\Dd}(\gma, \qQ)} h_t(\pP(t)) \\
    &= \max_{\pP(t) \in M(\gma)} \{ \bar{h}_t(\pP(t)) = h_t(\pP(t)) - \mathbbm{1}_S(\pP(t)) \}, \label{proove_distort1} \\
    \nonumber
    h_t(\pP'(t)) &= \max_{\pP(t) \in \widetilde{\Dd}(\gma', \qQ)} h_t(\pP(t)) \\
    &= \max_{\pP(t) \in M(\gma')} \{ \bar{h}_t(\pP(t)) = h_t(\pP(t)) - \mathbbm{1}_S(\pP(t)) \}. \label{proove_distort2}
\end{align}
The following Lemma, whose proof is given in Appendix \ref{appen_lma_bound_h}, provides bounds for the objective $\bar{h}_t(\pP(t))$.
\begin{lemma}
\label{lma_bound_h}
  For any $\pP(t)$, we have:
  \begin{align}
      \bar{h}_t(\pP(t))   \leq \| \QQ(t) \|_\infty  T_s  |\Kk| + V f_{max}. \label{bound_h1}
  \end{align}
\end{lemma}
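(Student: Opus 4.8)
The plan is to bound $\bar{h}_t(\pP(t)) = h_t(\pP(t)) - \mathbbm{1}_S(\pP(t))$ from above. If $\pP(t) \notin S$, then $\mathbbm{1}_S(\pP(t)) = +\infty$ and the bound holds trivially since $\bar{h}_t(\pP(t)) = -\infty$. Hence it suffices to consider $\pP(t) \in S$, where $\mathbbm{1}_S(\pP(t)) = 0$ and $\bar{h}_t(\pP(t)) = h_t(\pP(t))$, so I only need to bound $h_t(\pP(t))$ over $\pP(t) \in S$, and in fact over any $\pP(t) \in [0,1]^{T_s|E||\Cc|}$ satisfying \eqref{prob_constraint}.

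First I recall the definition \eqref{drift_obj1}: $h_t(\pP(t)) = \sum_{c\in \Cc}\sum_{i\in I} Q^c_i(t) \sum_{\tau=0}^{T_s-1}\sum_{k\in N(i)} r_{ki} p^c_{ki}(t,\tau) + V f(\pP(t))$. For the second term, I use the boundedness assumption on the utility, namely $f(\pP(t)) \leq f_{max}$, so $V f(\pP(t)) \leq V f_{max}$. For the first term, I bound $Q^c_i(t) \leq \|\QQ(t)\|_\infty$ and $r_{ki} \leq 1$, which gives
\begin{align*}
\sum_{c\in \Cc}\sum_{i\in I} Q^c_i(t) \sum_{\tau=0}^{T_s-1}\sum_{k\in N(i)} r_{ki} p^c_{ki}(t,\tau) \leq \|\QQ(t)\|_\infty \sum_{\tau=0}^{T_s-1} \sum_{c\in \Cc}\sum_{i\in I} \sum_{k\in N(i)} p^c_{ki}(t,\tau).
\end{align*}
Now I swap the order of summation so that the sum over $k \in \Kk$ is outermost and the inner sums are over $i \in N(k)$ and $c \in \Cc$; by the probability-simplex constraint \eqref{prob_constraint}, $\sum_{c\in \Cc}\sum_{i\in N(k)} p^c_{ki}(t,\tau) = 1$ for each $k$ and each $\tau$. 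Therefore $\sum_{\tau=0}^{T_s-1}\sum_{k\in\Kk} 1 = T_s |\Kk|$, and the first term is at most $\|\QQ(t)\|_\infty T_s |\Kk|$. Adding the two bounds yields $h_t(\pP(t)) \leq \|\QQ(t)\|_\infty T_s |\Kk| + V f_{max}$, which is exactly \eqref{bound_h1}.

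There is no real obstacle here — the argument is a routine application of the constraint \eqref{prob_constraint}, the boundedness of $f$, and the trivial observation that the indicator only helps. The one point requiring a little care is the re-indexing of the double sum $\sum_{i\in I}\sum_{k\in N(i)}$ as $\sum_{k\in\Kk}\sum_{i\in N(k)}$, which is valid because both range over the edge set $E = \{(k,i): k\in\Kk, i\in N(k)\}$, and then recognizing that \eqref{prob_constraint} collapses the inner sum to $1$.
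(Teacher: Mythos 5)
Your proof is correct and follows essentially the same route as the paper's: the same case split on whether $\pP(t)\in S$, the same bounds $Q^c_i(t)\leq\|\QQ(t)\|_\infty$, $r_{ki}\leq 1$, $f(\pP(t))\leq f_{max}$, and the same collapse of the double sum via the simplex constraint \eqref{prob_constraint} after re-indexing over $k\in\Kk$, $i\in N(k)$. No gaps; nothing further is needed.
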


Next, we penalize such linear inequality constraints by the following penalty function for some hyper-parameter $r>0$ to be determined later:
\begin{align}
\nonumber
    &P_{r, \gma}(\pP(t)) = r \sum_{(i, c)\in \Ss} \min \bigg\{ 0, \\
    &\frac{1}{|\Kk|T_s}\sum_{k \in N(i)} \sum_{\tau=0}^{T_s-1} r_{ki} p^c_{ki}(t, \tau) - \frac{1}{|\Kk|T_s}B^c_i(\Gamma^c_i, \pP(t)) -  \gamma^c_i \bigg\}. \label{penalty}
\end{align}
The following Lemma, whose proof is deferred to Appendix \ref{appen_penalty_diff}, provides the bound on the variation of $P_{r, \gma}(\pP(t))$ for perturbing $\gma$.
\begin{lemma}
\label{penalty_diff}
For any fixed $r>0$ and $\pP(t)$, we have:
\begin{align}
    | P_{r, \gma}(\pP(t)) -  P_{r, \gma'}(\pP(t))| \leq r  \|\gma  -\gma'  \|_1.
\end{align}
\end{lemma}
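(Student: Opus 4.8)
\textit{Proof proposal.} The plan is to exploit that, viewed as a function of the single scalar $\gamma^c_i$, each summand of the penalty $P_{r,\gma}$ is $1$-Lipschitz, so perturbing $\gma$ moves the penalty by at most $r$ times the $\ell_1$-size of the perturbation. First I would fix $\pP(t)$ and abbreviate, for each $(i,c)\in\Ss$,
\begin{equation*}
  \phi^c_i \triangleq \frac{1}{|\Kk|T_s}\sum_{k\in N(i)}\sum_{\tau=0}^{T_s-1} r_{ki}p^c_{ki}(t,\tau) \;-\; \frac{1}{|\Kk|T_s}B^c_i(\Gamma^c_i,\pP(t)),
\end{equation*}
noting that $\phi^c_i$ does not depend on $\gma$ (it is common to both sides). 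With this notation, directly from \eqref{penalty} we have $P_{r,\gma}(\pP(t)) = r\sum_{(i,c)\in\Ss}\min\{0,\phi^c_i-\gamma^c_i\}$ and likewise $P_{r,\gma'}(\pP(t)) = r\sum_{(i,c)\in\Ss}\min\{0,\phi^c_i-{\gamma^c_i}'\}$.

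Next I would invoke the elementary fact that $t\mapsto\min\{0,t\}$ is $1$-Lipschitz on $\Br$, hence $|\min\{0,a\}-\min\{0,b\}|\le|a-b|$ for all $a,b\in\Br$. Applying this termwise with $a=\phi^c_i-\gamma^c_i$ and $b=\phi^c_i-{\gamma^c_i}'$ gives $\bigl|\min\{0,\phi^c_i-\gamma^c_i\}-\min\{0,\phi^c_i-{\gamma^c_i}'\}\bigr|\le|\gamma^c_i-{\gamma^c_i}'|$, since the $\phi^c_i$ cancels in $a-b$.

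Finally, combining the triangle inequality over the finite sum with the termwise bound, and then enlarging the index set from $\Ss$ to all of $\Ii\times\Cc$ (legitimate since every term is nonnegative and $\Ss\subseteq\Ii\times\Cc$), yields
\begin{equation*}
  |P_{r,\gma}(\pP(t))-P_{r,\gma'}(\pP(t))| \;\le\; r\sum_{(i,c)\in\Ss}|\gamma^c_i-{\gamma^c_i}'| \;\le\; r\sum_{i\in\Ii,\,c\in\Cc}|\gamma^c_i-{\gamma^c_i}'| \;=\; r\,\|\gma-\gma'\|_1,
\end{equation*}
which is the claimed inequality. There is no genuine obstacle in this argument; the only points requiring care are the observation that the penalty depends on $\gma$ only through the shift $-\gamma^c_i$ so that $\phi^c_i$ is shared between the two penalties, and the fact that $\Ss$ indexes only the active QoS constraints, so passing to the full index set $\Ii\times\Cc$ can only increase the sum.
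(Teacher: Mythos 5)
Your proof is correct and follows essentially the same route as the paper: the paper's argument also applies the elementary bound $|\min\{0,x\}-\min\{0,y\}|\le|x-y|$ (its Lemma \ref{min_ineq}) termwise to the definition \eqref{penalty} and sums over $\Ss$. Your extra care in noting that $\phi^c_i$ is shared and that passing from $\Ss$ to $\Ii\times\Cc$ only enlarges the sum is a slight tidying of a step the paper treats as an equality, but the substance is identical.
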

Next, leveraging results from \cite{penalty1}, we convert \eqref{proove_distort1} and \eqref{proove_distort2} into unconstrained optimization problems by incorporating the penalty \eqref{penalty} into their objective for large enough $r$.
\begin{lemma}
\label{lma_penalty_main}
Under Assumption \ref{assum_slater} and for  $r \geq K_3 = \frac{\| \QQ(t) \|_\infty |\Kk| T_s  }{\zeta} + \frac{V (f_{max} - f_{min})}{ \zeta}$, we have:
\begin{align}
    h_t(\pP^\MDP(t)) = \max_{\pP(t)} \{ \bar{h}_t(\pP(t)) + P_{r, \gma}(\pP(t)) \}, \label{lma_penalty_main_eq1} \\
    h_t(\pP'(t)) = \max_{\pP(t)} \{ \bar{h}_t(\pP(t)) + P_{r, \gma'}(\pP(t)) \}.\label{lma_penalty_main_eq2}
\end{align}
\end{lemma}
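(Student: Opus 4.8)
The plan is to invoke the classical exact-penalty result from \cite{penalty1}, which states that for a concave maximization problem over a convex feasible set defined by finitely many inequality constraints, if the problem satisfies a Slater condition with margin $\zeta$ and the objective is Lipschitz-like in an appropriate sense, then any penalty parameter $r$ exceeding the ratio of the objective's oscillation to $\zeta$ makes the penalized unconstrained problem equivalent to the constrained one. The two displays \eqref{lma_penalty_main_eq1} and \eqref{lma_penalty_main_eq2} are then just two instances of this principle, applied respectively to \eqref{proove_distort1} (feasible set $M(\gma)$) and \eqref{proove_distort2} (feasible set $M(\gma')$), since the only difference between the two is the shift of the right-hand sides $\gamma^c_i$, which does not affect the Slater margin argument.

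First I would recall that \eqref{proove_distort1} reads $h_t(\pP^\MDP(t)) = \max_{\pP(t)\in M(\gma)} \bar h_t(\pP(t))$, where $\bar h_t$ already folds in the mutual linear constraints $S$ through the convex indicator $\mathbbm{1}_S$, so that the remaining explicit constraints defining $M(\gma)$ are exactly the $|\Ss|$ QoS-type inequalities $\frac{1}{|\Kk|T_s}\sum_{k,\tau} r_{ki}p^c_{ki}(t,\tau) - \frac{1}{|\Kk|T_s}B^c_i(\Gamma^c_i,\pP(t)) - \gamma^c_i \geq 0$ appearing inside the penalty \eqref{penalty}. Each such constraint is concave in $\pP(t)$ (a linear term minus the convex function $B^c_i(\Gamma^c_i,\cdot)$), so $M(\gma)$ is convex and the penalty $P_{r,\gma}$ is the standard nonsmooth exact penalty for these constraints. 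Next I would check the two hypotheses of the exact-penalty theorem: (i) the Slater condition, which is precisely Assumption \ref{assum_slater} — there is $\pP$ in the relative interior with margin $\zeta>0$ in every constraint; and (ii) a bound on the oscillation of the objective $\bar h_t$ over the feasible region, for which Lemma \ref{lma_bound_h} gives the upper bound $\|\QQ(t)\|_\infty T_s|\Kk| + Vf_{\max}$, while the lower bound $Vf_{\min}$ (attained since $f\in[f_{\min},f_{\max}]$ and the indicator vanishes on $S$) gives oscillation at most $\|\QQ(t)\|_\infty T_s|\Kk| + V(f_{\max}-f_{\min})$. Dividing by $\zeta$ yields exactly the threshold $K_3$ in the statement, so for $r\geq K_3$ the exact-penalty theorem applies and gives \eqref{lma_penalty_main_eq1}.

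Finally, since the feasible set $M(\gma')$ is obtained from $M(\gma)$ by replacing $\gamma^c_i$ with $\gamma^c_i - K_1/\sqrt{T_s} \leq \gamma^c_i$, the Slater margin for the primed problem is at least $\zeta$ as well (shrinking the right-hand side only relaxes the constraints), and the objective $\bar h_t$ is unchanged, so the same threshold $K_3$ works verbatim and yields \eqref{lma_penalty_main_eq2}. The main obstacle I anticipate is making sure the cited exact-penalty result from \cite{penalty1} is stated in a form that covers a \emph{concave maximization} with constraints of the form "concave function $\geq 0$" together with the extended-real-valued indicator $\mathbbm{1}_S$ in the objective; if the reference is phrased for convex \emph{minimization} with "convex $\leq 0$" constraints, I would simply negate the objective and the constraints, verify that the indicator of a convex set is a proper closed convex function so that the reference's assumptions are met, and translate the oscillation bound accordingly — this is routine but needs to be spelled out so that the constant $K_3$ matches on the nose.
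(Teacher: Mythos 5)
Your proposal is correct and follows essentially the same route as the paper's own proof: verify the Slater condition with margin $\zeta$ for both $M(\gma)$ and $M(\gma')$ (noting that lowering $\gma$ to $\gma'$ only relaxes the constraints, which is exactly the paper's Lemma \ref{lma_2nd_slater}), bound the objective via Lemma \ref{lma_bound_h} from above and by $Vf_{min}$ at the Slater point from below, and invoke the exterior-penalty result of \cite{penalty1} with threshold $K_3$. The only difference is cosmetic — you phrase the threshold as objective oscillation over $\zeta$, while the paper uses the gap between the optimal-value upper bound and the value at the Slater point, and these coincide here.
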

The proof of Lemma \ref{lma_penalty_main} is given in Appendix \ref{appen_lma_penalty_main}. Now, by setting $r= K_3$ and from \eqref{lma_penalty_main_eq2} and Lemma \ref{penalty_diff}, we have:
\begin{align*}
    h_t(\pP'(t))  &= \bar{h}_t(\pP'(t)) + P_{r, \gma'}(\pP'(t)) \\
    &\leq  \bar{h}_t(\pP'(t)) + P_{r, \gma}(\pP'(t)) + K_3 \| \gma - \gma_1\|_1 \\
    &\leq  \max_{\pP(t)} \{ \bar{h}_t(\pP(t)) + P_{r, \gma}(\pP(t)) \}  + K_3 \| \gma - \gma_1\|_1\\
    &\overset{\eqref{lma_penalty_main_eq1}}{=}   h_t(\pP^\MDP(t))+ K_3 \| \gma - \gma_1\|_1,
\end{align*}
which concludes the proof of the Lemma.





\section{Stability and Performance Guarantees}

This Appendix provides proofs of Theorem \ref{thm_charac_stability_region} and \ref{thrm_mdp}. 
In addition to the original control problem $\OP$, we consider the  control problem $(\AOP_{\lmb, \gma, \qQ})$ where the controller makes the decision $\pP(t) \in \widetilde{\Dd}(\gma, \qQ)$ within the linearized feasible domain to support the arrival rate $\lmb$. Formally, we write:
 \begin{align*}
      (\AOP_{\lmb, \gma, \qQ}): \quad   \max_{\pP=\{\pP(t)\}_{t\geq0}}   &F(\pP) \triangleq \liminf_{T\to \infty}\frac{1}{T} \sum_{t=0}^{T-1}f(\pP(t))\\
         \text{such that}\quad &\pP(t) \in \widetilde{\Dd}(\gma, \qQ) \\
         &  \EE[\QQ(T)] = o(T).
\end{align*}
Here, we explicitly specify the parameters $\lmb, \gma$ and $ \qQ$ of the problem. Similarly, within the scope of this Appendix, we use the more tailored notation  $(\op_{\lmb, \gma, \qQ})$ (instead of just $\OP$) for the original control problem. 
Recall that for $q^c_i = 0$, i.e. the case of  no QoS requirement, the constraint \eqref{qos_constraint1} trivially holds for any value of $\gamma^c_i$. 
We next present some useful properties for studying stability and performance guarantees in the following Appendix.


\subsection{Useful Properties of the Approximate Stability Region and the Utility Optimization}
\label{appen_helpful_stability_utility}

First, we provide for the following Skorokhod mapping representation of the queue dynamics $\{\QQ(\tau_j)\}_{j\geq 0}$.

\begin{lemma}[\cite{UMW}]
\label{skorokhod}
Define $a^c_i(t_1, t_2) = \sum_{q= t_1}^{t_2-1} a^c_i(q)$ and $\mu^c_i(t_1, t_2) = \sum_{q= t_1}^{t_2-1} \mu^c_i(q)$ . The discrete-time  Skorokhod map representation of the  queue process $\{\QQ(t)\}_{t\geq 0}$ can be expressed as:
\begin{align*}
    Q^c_i(t) = \Big( \sup_{1\leq q \leq t} \big( a^c_i(t-q,t)- \mu^c_i(t-q,t) \big) \Big)^+, \forall c\in \Cc, i\in \Ii.
\end{align*}
\end{lemma}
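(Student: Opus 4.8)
The plan is to prove the identity by induction on $t$, unrolling the one-step queue recursion $Q^c_i(t+1) = \big(Q^c_i(t) + a^c_i(t) - \mu^c_i(t)\big)^+$ from \eqref{queue_dynamics}; this is the discrete-time Skorokhod reflection map, and since the result is classical (and cited from \cite{UMW}) I would include only the short proof for completeness. Throughout I fix a client $i$ and job type $c$, suppress the sub/superscripts, and adopt the standard conventions $Q(0) = 0$ and $a(s,s) = \mu(s,s) = 0$ (empty sums), so that a supremum over an empty index set has positive part $0$. The base cases are immediate: at $t = 0$ the right-hand side is $\big(\sup_{1 \le q \le 0}(\cdot)\big)^+ = 0 = Q(0)$, and at $t = 1$ it equals $\big(a^c_i(0,1) - \mu^c_i(0,1)\big)^+ = \big(a^c_i(0) - \mu^c_i(0)\big)^+ = Q^c_i(1)$.

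For the inductive step, assuming the formula holds at time $t$, I would re-index the time-$(t+1)$ supremum by $q \mapsto q-1$ and use additivity of the partial sums, $a^c_i(t-q,t) + a^c_i(t) = a^c_i(t-q,t+1)$ (and likewise for $\mu^c_i$), to obtain
\[
\sup_{1 \le q \le t+1}\!\big(a^c_i(t+1-q,\,t+1) - \mu^c_i(t+1-q,\,t+1)\big) \;=\; \big(a^c_i(t) - \mu^c_i(t)\big) + \sup_{0 \le q \le t}\!\big(a^c_i(t-q,t) - \mu^c_i(t-q,t)\big).
\]
The $q = 0$ term inside the last supremum is $0$, so that supremum equals $\big(\sup_{1 \le q \le t}(\cdot)\big)^+ = Q^c_i(t)$ by the inductive hypothesis; hence the time-$(t+1)$ supremum equals $Q^c_i(t) + a^c_i(t) - \mu^c_i(t)$. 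Taking positive parts of both sides and invoking \eqref{queue_dynamics} gives $\big(\sup_{1 \le q \le t+1}(\cdot)\big)^+ = \big(Q^c_i(t) + a^c_i(t) - \mu^c_i(t)\big)^+ = Q^c_i(t+1)$, which closes the induction.

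There is no genuine analytical obstacle here; the only points requiring care are the boundary/empty-sum conventions (the $q=0$ term and the empty supremum at $t=0$), the initialization $Q(0)=0$, and the observation that the one-step drift $a^c_i(t)-\mu^c_i(t)$ does not depend on the summation index $q$, so it can be pulled outside the supremum before the positive-part operator is applied at the end. If one does not wish to assume an initially empty system, the same induction carries through with an extra $Q^c_i(0)$ added to the $q=t$ term of the supremum; the stated form corresponds to $Q^c_i(0)=0$, the convention used throughout the paper.
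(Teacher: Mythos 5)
Your induction is correct and complete: the re-indexing $q\mapsto q-1$ together with the additivity $a^c_i(t-q,t)+a^c_i(t)=a^c_i(t-q,t+1)$ is exactly the right unrolling of the Lindley-type recursion \eqref{queue_dynamics}, the $q=0$ term correctly absorbs the positive part via the inductive hypothesis, and your handling of the $t=0$ base case and the initialization $Q^c_i(0)=0$ matches the convention the paper uses elsewhere (e.g.\ $\EE[\Ll(\QQ(0))]=0$). Note that the paper itself gives no proof of this lemma --- it is imported by citation from \cite{UMW} --- so your argument simply supplies the standard derivation of the discrete-time Skorokhod map representation that the cited reference contains, and it does so correctly.
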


The next Lemma shows the existence of an optimal policy for the problem $(\AOP_{\lmb, \gma, \qQ})$ that is staionary randomized.

\begin{lemma}
\label{lma_linear_control}
For any $\lmb \in   \widetilde{\Lmb}(\gma, \qQ)$, i.e. there exists $\pi$ such that $ \pP^\pi(t) \in  \widetilde{\Dd}(\gma, \qQ), \forall t\geq 0 \text{ and } \EE[\QQ^\pi(T)] = o(T)$, then the followings hold:
\begin{enumerate}[(i)]
    \item There is some  $\pP = \{p^c_{ki}(\tau) \}_{\tau\in[T_s], (k,i) \in E, c\in \Cc} \in  \widetilde{\Dd}(\gma, \qQ)$ such that $\lambda^c_i \leq \sum_{k\in N(i)} \sum_{\tau=0}^{T_s-1} r_{ki} 
 p^c_{ki}(\tau) , \forall i\in \Ii, c\in \Cc$.
    \item Furthermore, for any $\pP $ satisfying the conditions in (i), we consider a stationary randomized policy $\pi^*$ such that $p^{\pi^* c}_{ki}(t, \tau) = p^c_{ki}(\tau), \forall t\geq 0$. Then $\pi^*$ is a feasible policy that yields the utility that is not worse than that of $\pi$. Formally, the stationary randomized policy $\pi^*$ satisfies  $ \pP^{\pi^*}(t) \in  \widetilde{\Dd}(\gma, \qQ), \forall t\geq 0,\EE[\QQ^{\pi^*}(T)] = o(T)$ and $F(\pP^{\pi^*}) \geq F(\pP^\pi)$.
\end{enumerate}
\end{lemma}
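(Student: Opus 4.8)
The plan is to run the classical Lyapunov‑optimization time‑averaging argument \cite{neelybook1}, for which the first structural fact to secure is that $\widetilde{\Dd}(\gma,\qQ)$ is convex and compact. Convexity holds because $\widetilde{\Dd}(\gma,\qQ)$ is carved out of the cube $[0,1]^{T_s|E||\Cc|}$ by the linear constraints \eqref{prob_constraint}, \eqref{pmax1}, and by \eqref{linearized_constraint1}; the left-hand side of \eqref{linearized_constraint1}, namely $\sum_{k\in N(i)}\sum_{\tau}r_{ki}p^c_{ki}(t,\tau)-B^c_i(\Gamma^c_i,\pP(t))$, is concave in $\pP(t)$ since $B^c_i(\Gamma^c_i,\cdot)$ is a pointwise maximum of linear functions by \eqref{big_b1}, so its super-level set is convex (equivalently one may invoke the polyhedral description from Theorem \ref{thm_dual}). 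Boundedness and closedness then give compactness, which will supply the convergent subsequences used below.

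For part (i), I would start from $(x)^+\ge x$ applied to the queue recursion \eqref{queue_dynamics}, giving $Q^c_i(t+1)\ge Q^c_i(t)+a^c_i(t)-\mu^c_i(t)$; telescoping over $t\in[0,T-1]$ and taking expectations yields $\EE[Q^c_i(T)]\ge T\lambda^c_i-\sum_{t=0}^{T-1}\EE[\mu^c_i(t)]$. Using the tower property and $\EE[y^c_{ki}(t,\tau)\mid\pP^\pi(t)]=r_{ki}\,p^{\pi,c}_{ki}(t,\tau)$ rewrites $\EE[\mu^c_i(t)]=\sum_{k\in N(i)}\sum_{\tau}r_{ki}\,\EE[p^{\pi,c}_{ki}(t,\tau)]$. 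Setting $\bar\pP(T):=\tfrac1T\sum_{t=0}^{T-1}\EE[\pP^\pi(t)]$, which lies in $\widetilde{\Dd}(\gma,\qQ)$ by convexity, and rearranging gives $\sum_{k\in N(i)}\sum_{\tau=0}^{T_s-1}r_{ki}\,\bar{p}^c_{ki}(\tau;T)\ge\lambda^c_i-\EE[Q^c_i(T)]/T$. By compactness extract $T_j\to\infty$ with $\bar\pP(T_j)\to\pP\in\widetilde{\Dd}(\gma,\qQ)$; since $\EE[\QQ^\pi(T)]=o(T)$, letting $j\to\infty$ yields $\sum_{k\in N(i)}\sum_{\tau}r_{ki}p^c_{ki}(\tau)\ge\lambda^c_i$, which is (i).

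For part (ii), given $\pP$ satisfying (i) and the stationary policy $\pi^*$ with $\pP^{\pi^*}(t)=\pP$ for all $t$, the constraint $\pP^{\pi^*}(t)\in\widetilde{\Dd}(\gma,\qQ)$ is immediate, so the only real work is mean-rate stability and the utility inequality. For stability I would use the Skorokhod representation (Lemma \ref{skorokhod}): under $\pi^*$ the process $q\mapsto a^c_i(T-q,T)-\mu^c_i(T-q,T)$ is, in distribution, a random walk with bounded increments and non-positive drift $\lambda^c_i-\bar\mu^c_i\le 0$ where $\bar\mu^c_i=\sum_{k,\tau}r_{ki}p^c_{ki}(\tau)\ge\lambda^c_i$, so centering it and applying Doob's $L^2$ maximal inequality gives $\EE[Q^c_i(T)]=O(\sqrt T)=o(T)$, hence $\EE[\QQ^{\pi^*}(T)]=o(T)$. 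For the utility bound I would take $\pP$ to be a limit point of $\bar\pP(T_j)$ along a subsequence chosen so that, in addition, $\bar f^\pi(T_j):=\tfrac1{T_j}\sum_{t<T_j}\EE[f(\pP^\pi(t))]\to\liminf_T\bar f^\pi(T)=F(\pP^\pi)$ (in the standard expected-average-utility sense); concavity of $f$ and Jensen's inequality give $\bar f^\pi(T_j)\le f(\bar\pP(T_j))$, and continuity of $f$ gives $f(\bar\pP(T_j))\to f(\pP)$, so $F(\pP^{\pi^*})=f(\pP)\ge F(\pP^\pi)$; this same $\pP$ also satisfies (i) by the argument of the previous paragraph applied along $T_j$.

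The step I expect to be the main obstacle is the mean-rate-stability claim: the stationary policy only guarantees service rate at least — not strictly above — the arrival rate, so a naive quadratic-Lyapunov drift argument does not close because the drift can vanish; routing through the Skorokhod map plus a maximal inequality is what makes the $O(\sqrt T)=o(T)$ growth rigorous. A secondary point needing care is reconciling the ``for any $\pP$ satisfying (i)'' phrasing, which governs the two feasibility conclusions, with the fact that the inequality $F(\pP^{\pi^*})\ge F(\pP^\pi)$ is obtained for the specific limiting $\pP$ extracted above (any limit point of the expected time-averaged decisions along a utility-$\liminf$-achieving subsequence).
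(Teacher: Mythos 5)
Your proposal is correct in substance and, for part (i) and the utility comparison, follows essentially the paper's route: time-average the decisions, use convexity (equivalently the polyhedral form of Theorem \ref{thm_dual}) to keep the average inside $\widetilde{\Dd}(\gma,\qQ)$, extract a convergent subsequence by compactness, pass the queue lower bound (the paper obtains it from the Skorokhod map of Lemma \ref{skorokhod}, you from $(x)^+\ge x$ plus telescoping --- the same inequality) to the limit using $\EE[\QQ^\pi(T)]=o(T)$, and use Jensen plus continuity of $f$ to get $F(\pP^{\pi^*})\ge F(\pP^\pi)$. Your caveat about the ``for any $\pP$ satisfying (i)'' phrasing is shared by the paper itself: its proof likewise establishes feasibility and stability for an arbitrary $\pP$ satisfying (i) but proves the utility inequality only for the limiting $\pP$, which is all that is used downstream (Lemma \ref{lma_control_opti_convert}); so handling it your way is consistent with the paper.

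Where you genuinely diverge is the mean-rate-stability step, and your stated reason for diverging rests on a misconception. You claim a quadratic-Lyapunov argument ``does not close because the drift can vanish,'' but mean rate stability does not require strictly negative drift --- it only requires $\EE[\Ll(\QQ^{\pi^*}(T))]$ to grow at most linearly. This is exactly the paper's argument: by Lemma \ref{lma_driftbound} and part (i), the linear term in the drift bound is nonpositive, so $\Delta^{\pi^*}(t)\le B_1$; telescoping gives $\EE[\Ll(\QQ^{\pi^*}(T))]\le B_1 T$, and Lemma \ref{lma_cauchy} (H\"older/Jensen) yields $\EE[\|\QQ^{\pi^*}(T)\|_1]\le\sqrt{2|\Cc||\Ii|B_1T}=O(\sqrt{T})=o(T)$. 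Negative drift would only be needed for strong stability (bounded time-average backlog), which is not claimed here. Your alternative --- Skorokhod representation plus Doob's $L^2$ maximal inequality applied to the centered partial sums --- also delivers $O(\sqrt{T})$ and is workable, but it quietly needs extra structure: a martingale property for the centered arrival-minus-service increments (temporal independence of arrivals given the past) and some care with the backward windows $[T-q,T)$ in the Skorokhod map, whereas the drift route uses only $\EE[a^c_i(t)\mid\QQ(t)]=\lambda^c_i$ and boundedness, which the model already provides. So your stability argument is a valid but strictly heavier substitute, and the ``main obstacle'' you flag is not actually an obstacle.
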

\begin{proof}
Theorem \ref{thm_dual} provides the characterization of $\widetilde{\Dd}(\gma, \qQ)$ in linear form, which can be equivalently expressed as:
\begin{align}
    \widetilde{\Dd}(\gma, \qQ) = \{\pP(t) \in [0, 1]^{T_s |E| |\Cc|}: \text{ such that } A \begin{bmatrix}\pP(t) \\ \sS \\ \rR \end{bmatrix} \leq \bB  \}, \label{linear_charac}
\end{align}
where $\sS = \{s^c_i\}_{(i, c)\in \Ss}$ and $\rR = \{r^c_{ ki\tau}\}_{(i, c)\in \Ss, k\in N(i), \tau \in [T_s]}$ are variable vectors, $A$ is the coefficient matrix and $\bB$ is the boundary vector. 
For any $\lmb \in   \widetilde{\Lmb}(\gma, \qQ)$, i.e. there exists $\pi$ such that $ \pP^\pi(t) \in  \widetilde{\Dd}(\gma, \qQ), \forall t\geq 0 \text{ and } \EE[\QQ^\pi(T)] = o(T)$. From \eqref{linear_charac}, this means that there exists $\sS(t)$ and $\rR(t)$ such that $\forall t\geq 0$:
\begin{align}
    A \begin{bmatrix}\pP^\pi(t) \\ \sS(t) \\ \rR(t) \end{bmatrix} \leq \bB.  \label{linear_control1}
\end{align}

Now, we consider $\bar{\pP}(t) = \{\bar{p}^{c}_{ki}(t, \tau) \}_{\tau\in[T_s], (k,i) \in E, c\in \Cc}, \bar{\sS}(t)$ and $\bar{\rR}(t)$ as follows:
\begin{align}
   \bar{\pP}(t) &= \frac{1}{t} \sum_{h= 0}^{t-1} \pP^\pi(h) \label{linear_control2} \\
    \bar{\sS}(t) &= \frac{1}{t}  \sum_{h= 0}^{t-1}  \sS(h) \\ 
    \bar{\rR}(t) &= \frac{1}{t}  \sum_{h= 0}^{t-1}  \rR(h). 
\end{align}
Clearly, we have that:
\begin{align}
    A \begin{bmatrix}\bar{\pP}(t) \\ \bar{\sS}(t) \\ \bar{\rR}(t) \end{bmatrix}  = \frac{1}{t}\sum_{h=0}^{t-1} A \begin{bmatrix}\pP^\pi(h) \\ \sS(h) \\ \rR(h) \end{bmatrix} 
\overset{\eqref{linear_control1}}{ \leq} \bB. 
\end{align}
Therefore, $\bar{\pP}(t)   \in  \widetilde{\Dd}(\gma, \qQ), \forall t\geq 0$. Since $p^{\pi c}_{ki}(h, \tau) \geq 0$ and, by \eqref{prob_constraint}, $p^{\pi c}_{ki}(h, \tau) \leq 1$, we know that $\pP^\pi(t) \in [0, 1]^{T_s |E||\Cc|}$. Thus, the vector $\bar{\pP}(t) \in [0, 1]^{T_s |E||\Cc|}$ has all of its entries bounded. By the Bolzano–Weierstrass Theorem, there exists a convergent subsequence $\{\bar{\pP}(t_h) \}_{h= 1, 2,...} $ where $\lim_{h\to \infty} t_h = \infty$. Define the limit of the convergent subsequence by:
\begin{align}
    \pP = \lim_{h\to \infty} \bar{\pP}(t_h). \label{convergent_subseries1}
\end{align}
Now, we proceed to prove the the statement (i) of the Lemma. From Lemma \ref{skorokhod}, we first obtain that under policy $\pi$:
\begin{align*}
    \sum_{q=0}^{t_h-1} a^c_i(q) -\sum_{q=0}^{t_h-1} \mu^c_i(q)  = a^c_i(0, t_h)- \mu^c_i(0, t_h) \leq Q^c_i(t_h).
\end{align*}
Taking expectation of both sides in view of \eqref{service_unit1}, we have:
\begin{align*}
  t_h \lambda^c_i    -\sum_{q=0}^{t_h-1}  \sum_{k\in N(i)}  \sum_{\tau=0}^{T_s-1} r_{ki}  p^{\pi c}_{ki}(q, \tau)  \leq \EE[\QQ^\pi(t_h)].
\end{align*}
Dividing both sides by $t_h$ and using \eqref{linear_control2}, we have:
\begin{align*}
    \lambda^c_i    \leq   \sum_{k\in N(i)}  \sum_{\tau=0}^{T_s-1} r_{ki}  \bar{p}^{ c}_{ki}(t_h, \tau)  + \frac{\EE[\QQ^\pi(t_h)]}{t_h}.
\end{align*}
Taking $h\to \infty$ and using $\EE[\QQ^\pi(T)] = o(T)$ and \eqref{convergent_subseries1}, we obtain that:
\begin{align}
    \lambda^c_i \leq \sum_{k\in N(i)} \sum_{\tau=0}^{T_s-1} r_{ki}  p^c_{ki}(\tau), \label{linear_control5}
\end{align}
which proves the required statement (i). 

Next, we proceed to prove the statement (ii). By \eqref{convergent_subseries1} and that $\bar{\pP}(t)   \in  \widetilde{\Dd}(\gma, \qQ), \forall t\geq 0$, we have $\pP \in \widetilde{\Dd}(\gma, \qQ)$. Thus, $ \pP^{\pi^*}(t)= \pP \in  \widetilde{\Dd}(\gma, \qQ), \forall t\geq 0$.
Using Lemma \ref{lma_driftbound} and given $B_1 = |\Ii||\Cc| A_{max}^2+ T_s^2 |\Kk|^2$, we have the following drift bound under the action of $\pi^*$:
\begin{align}
\nonumber
  \Delta^{ \pi^*} (t) &\leq B_1 +   \sum_{c\in \Cc}\sum_{i\in I} Q^{\pi^* c}_i(t) \big( \lambda^c_i - \sum_{\tau = 0}^{T_s -1 }\sum_{k\in N(i)} r_{ki}  p^{ \pi^* c}_{ki}(t, \tau)  \big) \\
  &\overset{\eqref{linear_control5}}{\leq} B_1   \label{linear_control6}
\end{align}
Taking iterated expectation of the above in view of \eqref{lyapu_drift}, we obtain that:
\begin{align*}
    \EE[ \Ll(\QQ^{\pi^*}(t+1))] - \EE[ \Ll(\QQ^{\pi^*}(t))] \leq B_1.
\end{align*}
Telescoping for $t = 0 \to T-1$, we have:
\begin{align}
    \EE[ \Ll(\QQ^{\pi^*}(T))] \leq T B_1. \label{linear_control7}
\end{align}
From Lemma \ref{lma_cauchy}, we have: 
\begin{align}
\nonumber
     \EE[\|\QQ^{\pi^*}(T)\|_1] &\leq \sqrt{|\Cc| |\Ii| \cdot \EE[\Ll(\QQ^{\pi^*}(T))]}\\
     \nonumber
     &\overset{\eqref{linear_control7}}{\leq } \sqrt{|\Cc| |\Ii| B_1 T}\\
     \therefore  \EE[\|\QQ^{\pi^*}(T)\|_1] &= o(T),
\end{align}
which shows that the stationary randomized policy $\pi^*$ stabilizes the queues. It is left to show that the policy $\pi^*$ attains better objective than that of the policy $\pi$. By the concavity of the utility $f(.)$, we have:
\begin{align}
\nonumber
    f(\bar{\pP}(t_h)) \overset{\eqref{linear_control2}}{=} f\big( \frac{1}{t_h} \sum_{q= 0}^{t_h-1} \pP^\pi(q)\big) \geq \frac{1}{t_h} \sum_{q= 0}^{t_h-1} f (\pP^\pi(q))
\end{align}
Taking $h\to \infty$ and noting that $\lim_{h\to \infty} f( \bar{\pP}(t_h)) = f(\pP) = f(\pP^{\pi^*}(t))$ by the continuity of $f(.)$, we obtain that:
\begin{align}
\nonumber
    &f(\pP^{\pi^*}(t)) \geq F(\pP^\pi)\\
    \nonumber
    \therefore &F(\pP^{\pi^*} ) = \liminf_{T\to \infty}\frac{1}{T} \sum_{t=0}^{T-1}f(\pP^\pi(t)) \geq F(\pP^\pi),
\end{align}
which concludes the proof of the Lemma. 

    


\end{proof}

The following Lemma provides an equivalent form of the problem $(\AOP_{\lmb, \gma, \qQ})$.

\begin{lemma}
\label{lma_control_opti_convert}
The  optimal utility of the problem $(\AOP_{\lmb, \gma, \qQ})$ is equal to the optimal objective of the following optimization problem:
\begin{align}
\nonumber
      (\opti_{\lmb, \gma, \qQ}): \quad   \max_{\pP}   &f(\pP)\\
      \nonumber
         \text{such that}\quad &\pP \in \widetilde{\Dd}(\gma, \qQ) ,\\
          &\lambda^c_i \leq \sum_{k\in N(i)} \sum_{\tau=0}^{T_s-1}r_{ki}  p^c_{ki}(\tau) , \forall i\in \Ii, c\in \Cc.\label{lmb_constraint1}
\end{align}
\end{lemma}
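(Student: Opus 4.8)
The plan is to prove the claimed equality by establishing the two inequalities between the optimal utility of the dynamic control problem $(\AOP_{\lmb, \gma, \qQ})$ and the optimal objective of the static program $(\opti_{\lmb, \gma, \qQ})$, with all the heavy lifting already done by Lemma \ref{lma_linear_control}.

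\textbf{Lower bound.} First I would show that the optimal objective of $(\opti_{\lmb, \gma, \qQ})$ is at most the optimal utility of $(\AOP_{\lmb, \gma, \qQ})$. Take any $\pP$ feasible for $(\opti_{\lmb, \gma, \qQ})$, i.e. $\pP \in \widetilde{\Dd}(\gma, \qQ)$ with $\lambda^c_i \leq \sum_{k\in N(i)} \sum_{\tau=0}^{T_s-1} r_{ki} p^c_{ki}(\tau)$ for all $i \in \Ii, c \in \Cc$. Consider the stationary randomized policy $\pi^*$ defined by $\pP^{\pi^*}(t) = \pP$ for every $t \geq 0$. By the drift argument in the proof of Lemma \ref{lma_linear_control} (the constant policy satisfies the hypotheses of its part (ii) with $\pi = \pi^*$, and the $B_1$-drift bound together with the Cauchy--Schwarz-type inequality gives $\EE[\|\QQ^{\pi^*}(T)\|_1] \leq \sqrt{|\Cc||\Ii| B_1 T} = o(T)$), the policy $\pi^*$ is feasible for $(\AOP_{\lmb, \gma, \qQ})$. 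Moreover $f(\pP^{\pi^*}(t)) = f(\pP)$ is constant in $t$, so $F(\pP^{\pi^*}) = \liminf_{T\to\infty} \frac{1}{T}\sum_{t=0}^{T-1} f(\pP) = f(\pP)$. Hence every feasible objective value of $(\opti_{\lmb, \gma, \qQ})$ is attained by a feasible policy of $(\AOP_{\lmb, \gma, \qQ})$, which yields the lower bound.

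\textbf{Upper bound.} Conversely, let $\pi$ be any feasible policy for $(\AOP_{\lmb, \gma, \qQ})$, so $\pP^\pi(t) \in \widetilde{\Dd}(\gma, \qQ)$ for all $t$ and $\EE[\QQ^\pi(T)] = o(T)$; in particular $\lmb \in \widetilde{\Lmb}(\gma, \qQ)$. By Lemma \ref{lma_linear_control}, there exists $\pP \in \widetilde{\Dd}(\gma, \qQ)$ with $\lambda^c_i \leq \sum_{k\in N(i)} \sum_{\tau=0}^{T_s-1} r_{ki} p^c_{ki}(\tau)$ for all $i \in \Ii, c \in \Cc$, and the associated stationary randomized policy $\pi^*$ (with $\pP^{\pi^*}(t) = \pP$) satisfies $F(\pP^{\pi^*}) \geq F(\pP^\pi)$. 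Since such a $\pP$ is feasible for $(\opti_{\lmb, \gma, \qQ})$ and $F(\pP^{\pi^*}) = f(\pP)$ exactly as above, we obtain $F(\pP^\pi) \leq f(\pP) \leq$ the optimal objective of $(\opti_{\lmb, \gma, \qQ})$. Taking the supremum over all feasible $\pi$ gives the upper bound, and combining the two bounds proves the equality. The only points requiring any care are confirming that the constant policy stabilizes the queues (reusing the drift computation from Lemma \ref{lma_linear_control}) and that the $\liminf$ of a constant sequence equals that constant, so there is no genuine obstacle — the substantive work was front-loaded into Lemma \ref{lma_linear_control}.
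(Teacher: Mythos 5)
Your proof is correct and follows essentially the same route as the paper: both directions rest on Lemma \ref{lma_linear_control} (extracting a feasible static $\pP$ from a feasible policy, and reusing its internal $B_1$-drift computation to show the constant policy induced by any $\pP$ satisfying \eqref{lmb_constraint1} stabilizes the queues with $F(\pP^{\pi^*})=f(\pP)$). The paper merely packages your lower-bound inequality as a proof by contradiction against the optimal policy $\tilde{\pi}$, which is a cosmetic difference.
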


\begin{proof}
Consider the optimal policy $\tilde{\pi}$ to the problem $(\AOP_{\lmb, \gma, \qQ})$. By Lemma \ref{lma_linear_control}, there exists some $\pP \in \widetilde{\Dd}(\gma, \qQ)$, which satisfies $\lambda^c_i \leq \sum_{k\in N(i)} \sum_{\tau=0}^{T_s-1} p^c_{ki}(\tau) , \forall i\in \Ii, c\in \Cc$ and thus is a feasible solution to the above problem $(\opti_{\lmb, \gma, \qQ})$, and a stationary randomized policy $\pi^*$ such that $\pP^{\pi^*}(t) = \pP$ that is feasible for  the problem $(\AOP_{\lmb, \gma, \qQ})$ and achieves the utility $F(\pP^{\pi^*}) \geq F(\pP^{\tilde{\pi}})$. Since $\pi^*$ makes stationary randomized decision $\pP^{\pi^*}(t) = \pP$, we have $F(\pP^{\pi^*}) = \liminf_{T\to \infty} \frac{1}{T}\sum_{t=0}^{T-1} f(\pP) = f(\pP)$.    
As $\tilde{\pi}$ is optimal for $(\AOP_{\lmb, \gma, \qQ})$, we  have $F(\pP^{\pi^*}) \leq F(\pP^{\tilde{\pi}})$ and thus  $F(\pP^{\pi^*}) = F(\pP^{\tilde{\pi}}) = f(\pP)$. Assume for the sake of contradiction that $\pP$ is not an optimal solution to the problem  $(\opti_{\lmb, \gma, \qQ})$, i.e. there exists some optimal solution $\pP'$ to the problem $(\opti_{\lmb, \gma, \qQ})$ with $f(\pP') > f(\pP)$. Then, similarly to the proof of Lemma  \ref{lma_linear_control}, we can construct a stationary randomized policy $\tilde{\pi}'$ via $\pP^{\tilde{\pi}' }(t) = \pP'$ that is feasible for  the problem $(\AOP_{\lmb, \gma, \qQ})$ and achieves the utility of $F(\pP^{\tilde{\pi}' }) = f(\pP') > F(\pP^{\tilde{\pi}}) = f(\pP)$; this is contradiction because $\tilde{\pi}$ is the optimal policy to the problem $(\AOP_{\lmb, \gma, \qQ})$. Therefore, $\pP$ is the optimal solution to the problem $(\opti_{\lmb, \gma, \qQ})$. With $F(\pP^{\tilde{\pi}}) = f(\pP)$ being the optimal utility of the problem $(\AOP_{\lmb, \gma, \qQ})$, we conclude the proof of the Lemma.  
\end{proof}

Finally, the next Lemma establishes the variation on the optimal objective of $(\AOP_{\lmb, \gma, \qQ})$ for varying $\lmb$ and $\gma$.

\begin{lemma}
\label{lma_num_distort}
Let $\pP^*$ and ${\pP^*}'$ respectively be the optimal solution to the optimization problems  $(\opti_{\lmb, \gma, \qQ})$ and $(\opti_{\lmb', \gma', \qQ})$ for some $\lmb' = \psi \lmb$ (where $\psi > 1$ is some constant) and $\gma' \leq \gma$. Define $\eE = \{ \mathbbm{1}\{\lambda^c_i >0\}\}_{i\in \Ii, c\in \Cc}$.
If Assumption \ref{assum_slater}  holds  and there exists  some $\varepsilon'>0$ such that $\lmb + \varepsilon' \eE \in   \widetilde{\Lmb}(\gma, \qQ)$ and $\lmb' + \varepsilon' \eE \in   \widetilde{\Lmb}(\gma', \qQ)$,  the following holds:
\begin{align}
\nonumber
    |f(\pP^*) - f({\pP^*}')| \leq &  \frac{f_{max} - f_{min}}{\varepsilon'} \| \lmb -\lmb'\|_\infty \\
    &+ \frac{f_{max} - f_{min}}{\zeta}\| \gma - \gma'\|_1. \label{num_distort1}
\end{align}
Consequently, if $\pi$ and ${\pi}'$ are respectively optimal policies to the control problems $(\AOP_{\lmb, \gma, \qQ})$ and $(\AOP_{\lmb', \gma', \qQ})$, then we have:
\begin{align}
 \nonumber
| F(\pP^{\pi}) - F(\pP^{\pi'}) | \leq&  \frac{f_{max} - f_{min}}{\varepsilon'} \| \lmb -\lmb'\|_\infty \\
&+ \frac{f_{max} - f_{min}}{\zeta}\| \gma - \gma'\|_1.\label{num_distort2}
\end{align}
\end{lemma}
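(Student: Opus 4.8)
\noindent\emph{Proof plan.}
The plan is to prove \eqref{num_distort1} via two one-sided bounds, $f(\pP^*)-f({\pP^*}')\le \frac{f_{max}-f_{min}}{\varepsilon'}\|\lmb-\lmb'\|_\infty$ and $f({\pP^*}')-f(\pP^*)\le \frac{f_{max}-f_{min}}{\zeta}\|\gma-\gma'\|_1$, adding which gives the statement. Each bound is obtained by perturbing the optimizer of one problem into a feasible point of the other through a convex combination, and controlling the resulting objective loss by concavity and boundedness of $f$. It is convenient to abbreviate $R^c_i(\pP):=\sum_{k\in N(i)}\sum_{\tau=0}^{T_s-1}r_{ki}p^c_{ki}(\tau)$ (linear in $\pP$) and $g^c_i(\pP):=R^c_i(\pP)-B^c_i(\Gamma^c_i,\pP)$ (concave in $\pP$ by \eqref{big_b1}), so that the arrival constraint \eqref{lmb_constraint1} reads $\lambda^c_i\le R^c_i(\pP)$ and the linearized QoS constraint \eqref{linearized_constraint1} reads $g^c_i(\pP)\ge|\Kk|T_s\gamma^c_i$. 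I will also use that $\widetilde{\Dd}(\cdot,\qQ)$ is convex, that $\gma'\le\gma$ gives $\widetilde{\Dd}(\gma,\qQ)\subseteq\widetilde{\Dd}(\gma',\qQ)$, and that the feasible regions of $(\opti_{\lmb,\gma,\qQ})$ and $(\opti_{\lmb',\gma',\qQ})$ are nonempty (by the two $\widetilde{\Lmb}$-hypotheses, since $\widetilde{\Lmb}$ is downward closed by Theorem \ref{thm_charac_stability_region}) and compact, so the optimizers $\pP^*,{\pP^*}'$ exist.

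For the first bound: if $\|\lmb-\lmb'\|_\infty\ge\varepsilon'$ it is immediate from $f\in[f_{min},f_{max}]$, so assume $\theta_1:=\|\lmb-\lmb'\|_\infty/\varepsilon'\le1$. Applying Lemma \ref{lma_linear_control}(i) to $\lmb+\varepsilon'\eE\in\widetilde{\Lmb}(\gma,\qQ)$ gives $\bar\pP\in\widetilde{\Dd}(\gma,\qQ)$ with $\lambda^c_i+\varepsilon'\mathbbm{1}\{\lambda^c_i>0\}\le R^c_i(\bar\pP)$. Set $\pP_{\theta_1}=(1-\theta_1)\pP^*+\theta_1\bar\pP\in\widetilde{\Dd}(\gma,\qQ)\subseteq\widetilde{\Dd}(\gma',\qQ)$ by convexity. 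Linearity of $R^c_i$ with $R^c_i(\pP^*)\ge\lambda^c_i$ gives $R^c_i(\pP_{\theta_1})\ge\lambda^c_i+\theta_1\varepsilon'\mathbbm{1}\{\lambda^c_i>0\}\ge\psi\lambda^c_i=[\lmb']^c_i$, where the last step uses $\theta_1\varepsilon'=\|\lmb-\lmb'\|_\infty\ge(\psi-1)\lambda^c_i$ and is trivial when $\lambda^c_i=0$; hence $\pP_{\theta_1}$ is feasible for $(\opti_{\lmb',\gma',\qQ})$. Then optimality of ${\pP^*}'$, concavity of $f$, and $f(\pP^*)-f(\bar\pP)\le f_{max}-f_{min}$ give $f({\pP^*}')\ge f(\pP_{\theta_1})\ge(1-\theta_1)f(\pP^*)+\theta_1f(\bar\pP)\ge f(\pP^*)-\theta_1(f_{max}-f_{min})$.

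For the second bound: assume similarly $\theta_2:=\|\gma-\gma'\|_1/\zeta\le1$ (else trivial), and let $\pP^{sl}$ be the Slater point of Assumption \ref{assum_slater}, so $\pP^{sl}\in\widetilde{\Dd}(\gma,\qQ)$, $\lambda^c_i\le R^c_i(\pP^{sl})$ for all $i,c$, and $g^c_i(\pP^{sl})\ge|\Kk|T_s(\gamma^c_i+\zeta)$ for all $(i,c)\in\Ss$. Set $\pP_{\theta_2}=(1-\theta_2){\pP^*}'+\theta_2\pP^{sl}$. For $(i,c)\in\Ss$, concavity of $g^c_i$ with $g^c_i({\pP^*}')\ge|\Kk|T_s[\gma']^c_i$ gives $g^c_i(\pP_{\theta_2})\ge|\Kk|T_s\big([\gma']^c_i+\theta_2(\gamma^c_i-[\gma']^c_i)+\theta_2\zeta\big)\ge|\Kk|T_s\gamma^c_i$, since $\theta_2\zeta=\|\gma-\gma'\|_1\ge\gamma^c_i-[\gma']^c_i\ge(1-\theta_2)(\gamma^c_i-[\gma']^c_i)$; hence $\pP_{\theta_2}\in\widetilde{\Dd}(\gma,\qQ)$. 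The arrival constraint $R^c_i(\pP_{\theta_2})\ge\lambda^c_i$ holds for every $\theta_2$ because both ${\pP^*}'$ (feasible for $(\opti_{\lmb',\gma',\qQ})$, so $R^c_i({\pP^*}')\ge[\lmb']^c_i\ge\lambda^c_i$) and $\pP^{sl}$ satisfy it. So $\pP_{\theta_2}$ is feasible for $(\opti_{\lmb,\gma,\qQ})$, and optimality of $\pP^*$ with concavity/boundedness of $f$ gives $f(\pP^*)\ge f(\pP_{\theta_2})\ge f({\pP^*}')-\theta_2(f_{max}-f_{min})$. Adding the two one-sided bounds proves \eqref{num_distort1}, and \eqref{num_distort2} then follows from Lemma \ref{lma_control_opti_convert}, which identifies $F(\pP^\pi)=f(\pP^*)$ and $F(\pP^{\pi'})=f({\pP^*}')$.

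I expect the main obstacle to be bookkeeping rather than any deep idea: one must notice that whenever the natural mixing weight $\theta_1$ or $\theta_2$ would exceed $1$, the corresponding one-sided inequality is already forced by $f\in[f_{min},f_{max}]$, so the convex-combination construction is only ever run for $\theta\le1$; and one must keep straight which perturbation is ``free'' in each direction --- decreasing $\gma$ relaxes \eqref{linearized_constraint1} while increasing $\lmb$ tightens \eqref{lmb_constraint1}, and conversely --- so that the $\lmb$-direction contributes only the $\varepsilon'$-term and the $\gma$-direction only the $\zeta$-term. (The target bound is stated with $\|\gma-\gma'\|_1$ even though $\max_{(i,c)\in\Ss}(\gamma^c_i-[\gma']^c_i)$ would suffice, which is harmless since $\gma-\gma'\ge\zero$.)
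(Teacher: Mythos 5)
Your proof is correct, but it takes a genuinely different route from the paper. The paper splits $|f(\pP^*)-f({\pP^*}')|$ through an intermediate optimizer $\bar{\pP}^*$ of $(\opti_{\lmb,\gma',\qQ})$ and then handles the two perturbations with heavier machinery: the $\gma$-perturbation via an exact exterior-penalty reformulation (Lemmas \ref{lma_penalty_num} and \ref{penalty_diff}, invoking the linear-penalty result of \cite{penalty1}) and the $\lmb$-perturbation via Lagrangian duality, bounding the dual multipliers under the Slater constant $\varepsilon'$ and using the subgradient property of the optimal-value function (Theorems \ref{thm:dual_bound} and \ref{thm_perturbation_bound_no_v_in_statement}). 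You instead prove two one-sided bounds directly by a primal argument: mix each optimizer with an interior point (the point witnessing $\lmb+\varepsilon'\eE\in\widetilde{\Lmb}(\gma,\qQ)$ for the $\lmb$-direction, the Slater point of Assumption \ref{assum_slater} for the $\gma$-direction), use linearity of the arrival constraints, concavity of $R^c_i-B^c_i$, convexity of $\widetilde{\Dd}$, monotonicity $\widetilde{\Dd}(\gma,\qQ)\subseteq\widetilde{\Dd}(\gma',\qQ)$, and boundedness plus concavity of $f$ to control the loss, with the mixing weights $\theta_1,\theta_2$ capped at $1$ by the trivial $f_{max}-f_{min}$ bound. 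All the feasibility checks you perform (equality constraint \eqref{prob_constraint} preserved under convex combination, $\theta_1\varepsilon'=\|\lmb-\lmb'\|_\infty\ge(\psi-1)\lambda^c_i$, $\theta_2\zeta\ge(1-\theta_2)(\gamma^c_i-[\gma']^c_i)$, and $R^c_i({\pP^*}')\ge\psi\lambda^c_i\ge\lambda^c_i$) are valid, and the conversion to \eqref{num_distort2} via Lemma \ref{lma_control_opti_convert} matches the paper. What your approach buys is elementarity and self-containment — no Fenchel duality, no penalty-function lemmas — and in fact a slightly sharper conclusion (the maximum of the two one-sided bounds rather than their sum); note also that you use the hypothesis $\lmb'+\varepsilon'\eE\in\widetilde{\Lmb}(\gma',\qQ)$ only for nonemptiness of the second feasible set, whereas the paper needs it as a Slater margin for its dual bound. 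What the paper's approach buys is reuse: the same penalty and duality toolkit is needed anyway for Lemma \ref{lma_perturbed_obj} and the appendix perturbation theorems, so the authors amortize that machinery here.
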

\begin{proof}
We additionally consider $\bar{\pP}^*$ as the optimal solution to the optimization problem $(\opti_{\lmb, \gma', \qQ})$.
    Denote by $\Mm(\lmb, \gma, \qQ)$ the feasible domain of $(\opti_{\lmb, \gma, \qQ})$.  In order to prove \eqref{num_distort1}, we first have: 
\begin{align}
     |f(\pP^*) - f({\pP^*}')| \leq  |f(\pP^*) - f(\bar{\pP}^*)| +  |f(\bar{\pP}^*) - f({\pP^*}')|, \label{proof_lma_num_distort0}
\end{align}
and proceed to bound each term on the RHS of \eqref{proof_lma_num_distort0} in the following. 
    
 \textbf{Bounding $|f(\pP^*) - f(\bar{\pP}^*)|$:}   We consider  the  decompositions $ \Mm(\lmb, \gma, \qQ) = S(\lmb) \cap M(\gma)$ and $ \Mm(\lmb, \gma', \qQ) = S(\lmb) \cap M( \gma')$, where the convex set: $$  S(\lmb) =\big\{ \pP \in [0, 1]^{T_s |E| |\Cc|}\text{: such that \eqref{prob_constraint}, \eqref{pmax1} and \eqref{lmb_constraint1}}\big\} $$ is comprised of  the mutual linear constraints in $\Mm(\lmb, \gma, \qQ)$ and $\Mm(\lmb, \gma', \qQ)$ and the convex set $M( \gma)$ is defined as in \eqref{m_gamma_def1}.
We consider the indicator function $  \mathbbm{1}_{S(\lmb)}(\pP)$ as follows:
\begin{align}
 \label{indicator1.5}
     \mathbbm{1}_{S(\lmb)}(\pP) = \begin{cases} 0 \quad \text{, if $\pP\in S(\lmb)$}\\ +\infty,  \text{otherwise}\end{cases},
 \end{align}
Similar to the proof of Lemma \ref{lma_perturbed_obj}, we consider 
the  penalty function similar to \eqref{penalty} as follows:
\begin{align}
\nonumber
    &P_{r, \gma}(\pP)= r \sum_{(i, c)\in \Ss} \min \bigg\{ 0, \\
    &\frac{1}{|\Kk|T_s}\sum_{k \in N(i)} \sum_{\tau=0}^{T_s-1} r_{ki} p^c_{ki}( \tau) - \frac{1}{|\Kk|T_s}B^c_i(\Gamma^c_i, \pP) -  \gamma^c_i \bigg\}, \label{penalty_num}
\end{align}
 where $r>0$ is some hyper-parameter to be determined later. 
By defining the following new concave objective:
\begin{align}
    \bar{f}(\pP) = f(\pP) - \mathbbm{1}_{S(\lmb)}(\pP), \label{def_f_bar}
\end{align}
we have:
\begin{align}
\nonumber
    f(\pP^*) &= \max_{\pP \in \Mm(\lmb, \gma, \qQ)} f(\pP) \\
    &= \max_{\pP \in M( \gma)} \{ \bar{f}(\pP) = f(\pP) - \mathbbm{1}_{S(\lmb)}(\pP) \}, \label{proof_lma_num_distort} \\
    \nonumber
    f(\bar{\pP}^*) &= \max_{\pP \in \Mm(\lmb, \gma', \qQ)} f(\pP) \\
    &= \max_{\pP \in M( \gma')} \{ \bar{f}(\pP) = f(\pP) - \mathbbm{1}_{S(\lmb)}(\pP) \}, \label{proof_lma_num_distort2}
\end{align}
Furthermore, the next Lemma, whose proof is given in Appendix \ref{appen_lma_bound_f}, establishes a bound on the objective $\bar{f}(\pP)$ for any $\pP$. 
\begin{lemma}
\label{lma_bound_f}
    We have the following bound:
    \begin{align}
        \bar{f}(\pP) \leq f_{max}. \label{bound_f_eq}
    \end{align}
\end{lemma}
Next, leveraging results from \cite{penalty1}, we convert \eqref{proof_lma_num_distort} and \eqref{proof_lma_num_distort2} into unconstrained optimization problems by incorporating the penalty \eqref{penalty_num} into their objective for large enough $r$.

\begin{lemma}
\label{lma_penalty_num}
For  $r \geq  \frac{ f_{max} - f_{min}}{\zeta}$, we have:
\begin{align}
    f(\pP^*) = \max_{\pP} \{ \bar{f}(\pP) + P_{r,  \gma}(\pP) \}, \label{lma_penalty_num_eq1} \\
    f(\bar{\pP}^*) = \max_{\pP} \{ \bar{f}(\pP) + P_{r,  \gma'}(\pP) \}.\label{lma_penalty_num_eq2}
\end{align}
\end{lemma}
The proof of Lemma \ref{lma_penalty_num} is given in Appendix \ref{appen_lma_penalty_num}.  Now, by setting $r= \frac{ f_{max} - f_{min}}{\zeta}$ and from \eqref{lma_penalty_num_eq2} and Lemma \ref{penalty_diff}, we have:
\begin{align}
\nonumber
    f(\bar{\pP}^*)  &= \bar{f}(\bar{\pP}^*) + P_{r,\gma'}(\bar{\pP}^*) \\
    \nonumber
    &\leq  \bar{f}(\bar{\pP}^*) + P_{r, \gma}(\bar{\pP}^*)  + \frac{ f_{max} - f_{min} }{\zeta} \| \gma - \gma'\|_1 \\
    \nonumber
    &\leq  \max_{\pP} \{ \bar{f}(\pP )+ P_{r, \gma}(\pP) \}  +  \frac{ f_{max} - f_{min}}{\zeta} \| \gma - \gma'\|_1\\
    &\overset{\eqref{lma_penalty_num_eq1}}{=}   f({\pP^*})+ \frac{ f_{max} - f_{min}}{\zeta} \| \gma - \gma'\|_1. \label{proof_lma_num_distort3}
\end{align}
Similarly, we have:
\begin{align}
     f({\pP^*}) \leq   f(\bar{\pP}^*) + \frac{ f_{max} - f_{min}}{\zeta} \| \gma - \gma'\|_1.  \label{proof_lma_num_distort4}
\end{align}
Combining \eqref{proof_lma_num_distort3} and \eqref{proof_lma_num_distort4}, we obtain that:
\begin{align}
    |  f({\pP^*}) -   f(\bar{\pP}^*) | \leq \frac{ f_{max} - f_{min}}{\zeta} \| \gma - \gma'\|_1.  \label{proof_lma_num_distort5}
\end{align}

\textbf{Bounding $|f(\bar{\pP}^*) - f({\pP^*}')|$:} 
We consider the convex domain of the linear constraints \eqref{lmb_constraint1} as follows:
\begin{align*}
  M(\lmb) =  \{\pP &\in [0, 1]^{T_s |E| |\Cc|}: \text{ such that } \\   
     &\lambda^c_i \leq \sum_{k\in N(i)} \sum_{\tau=0}^{T_s-1}r_{ki}  p^c_{ki}(\tau) , \forall i\in \Ii, c\in \Cc: \lambda^c_i > 0.\}
\end{align*}
Note that any such constraint with  $\lambda^c_i=0$ is a redundant constraint (i.e. trivially satisfied given that $\pP \in [0, 1]^{T_s |E| |\Cc|}$) and is thus neglected. Then, we have the decompositions $ \Mm(\lmb, \gma', \qQ) =\widetilde{\Dd}(\gma', \qQ) \cap M(\lmb)$ and $ \Mm(\lmb', \gma', \qQ) = \widetilde{\Dd}(\gma', \qQ) \cap M( \lmb')$. We consider the indicator function $  \mathbbm{1}_{\widetilde{\Dd}(\gma', \qQ)}(\pP)$ as follows:
\begin{align}
 \label{indicator2}
     \mathbbm{1}_{\widetilde{\Dd}(\gma', \qQ)}(\pP) = \begin{cases} 0 \quad \text{, if $\pP\in \widetilde{\Dd}(\gma', \qQ)$}\\ +\infty,  \text{otherwise}\end{cases}.
 \end{align}
By defining the following new concave objective:
\begin{align}
    \tilde{f}(\pP) = f(\pP) - \mathbbm{1}_{\widetilde{\Dd}(\gma', \qQ)}(\pP),  \label{def_tilde_f}
\end{align}
we have:
\begin{align}
 \nonumber
    f(\bar{\pP}^*) &= \max_{\pP \in \Mm(\lmb, \gma', \qQ)} f(\pP) \\
    &= \max_{\pP \in M( \lmb)} \{ \tilde{f}(\pP) = f(\pP) - \mathbbm{1}_{\widetilde{\Dd}(\gma', \qQ)}(\pP) \}, \label{proof_lma_num_distort6}\\
\nonumber
    f({\pP^*}') &= \max_{\pP \in \Mm(\lmb', \gma', \qQ)} f(\pP) \\
    &= \max_{\pP \in M( \lmb')} \{ \tilde{f}(\pP) = f(\pP) - \mathbbm{1}_{\widetilde{\Dd}(\gma', \qQ)}(\pP) \}. \label{proof_lma_num_distort7} 
\end{align}
Furthermore, the next Lemma, whose proof is given in Appendix \ref{appen_lma_bound_f_tilde}, establishes a bound on the objective $\bar{f}(\pP)$ for any $\pP$. 
\begin{lemma}
\label{lma_bound_f_tilde}
    We have the following bound:
    \begin{align}
        \tilde{f}(\pP) \leq f_{max}. \label{bound_f_tilde_eq}
    \end{align}
\end{lemma}
Note that we can equivalently write \eqref{proof_lma_num_distort6} and \eqref{proof_lma_num_distort7} as solving convex minimization over linear constraints via: 
\begin{align}
    \max_{\pP \in M( \lmb)}  \tilde{f}(\pP) = - \min_{\pP \in M( \lmb)} - \tilde{f}(\pP), \label{proof_lma_num_distort8}\\
    \max_{\pP \in M( \lmb')}  \tilde{f}(\pP) = - \min_{\pP \in M( \lmb')} - \tilde{f}(\pP). \label{proof_lma_num_distort9}
\end{align}
Since $\lmb + \varepsilon' \eE \in   \widetilde{\Lmb}(\gma, \qQ)$ and $\lmb' + \varepsilon' \eE \in   \widetilde{\Lmb}(\gma', \qQ)$, the Slater condition holds with constant $\varepsilon'$ for $ M(\lmb)$ and $ M(\lmb')$, i.e. by Theorem \ref{thm_charac_stability_region}, there exists $\bar{\pP} \in  \widetilde{\Dd}(\gma', \qQ) $ and $\bar{\pP}' \in  \widetilde{\Dd}(\gma', \qQ) $ such that:
\begin{align}
    &\lambda^c_i  + \varepsilon'\leq \sum_{k\in N(i)} \sum_{\tau=0}^{T_s-1}r_{ki}  \bar{p}^c_{ki}(\tau) , \forall i\in \Ii, c\in \Cc: \lambda^c_i>0,  \label{proof_lma_num_distort10}  \\
    &{\lambda^c_i}' + \varepsilon' \leq \sum_{k\in N(i)} \sum_{\tau=0}^{T_s-1}r_{ki}  {\bar{p}^c_{ki}}{}'(\tau) , \forall i\in \Ii, c\in \Cc: \lambda^c_i>0.  \label{proof_lma_num_distort11}
\end{align}
Note that we have:
\begin{align}
    &-\tilde{f}(\bar{\pP}) \overset{\eqref{def_tilde_f}}{= } -f(\bar{\pP}) \leq -f_{min}, \label{proof_lma_num_distort12}\\
    &-\tilde{f}(\bar{\pP}') \overset{\eqref{def_tilde_f}}{= } -f(\bar{\pP}') \leq -f_{min} \label{proof_lma_num_distort13}
\end{align}
Applying Theorem \ref{thm_perturbation_bound_no_v_in_statement} into \eqref{proof_lma_num_distort8} and \eqref{proof_lma_num_distort9}, we obtain that:
\begin{align}
    |f(\bar{\pP}^*) - f({\pP^*}')| \leq \kappa \|\gma -\gma' \|_\infty \label{proof_lma_num_distort14}
\end{align}
with
\begin{align}
\nonumber
    \kappa &= \frac{1}{\varepsilon'} \max\big\{- \tilde{f}(\bar{\pP}) + \tilde{f}( \bar{\pP}^*), - \tilde{f}(\bar{\pP}') +\tilde{f}( {\pP^*}')  \big\}\\
    &\leq \frac{f_{max} - f_{min}}{\varepsilon'} , \label{proof_lma_num_distort15}
\end{align}
where the inequality follows from \eqref{bound_f_tilde_eq}, \eqref{proof_lma_num_distort12} and \eqref{proof_lma_num_distort13}. Plugging \eqref{proof_lma_num_distort15} into \eqref{proof_lma_num_distort14}, we obtain that:
\begin{align}
   |f(\bar{\pP}^*) - f({\pP^*}')| \leq   \frac{f_{max} - f_{min}}{\varepsilon'} \, \|\gma -\gma' \|_\infty.  \label{proof_lma_num_distort16}
\end{align}
Applying \eqref{proof_lma_num_distort5} and \eqref{proof_lma_num_distort16} to \eqref{proof_lma_num_distort0}, 
we conclude the required statement \eqref{num_distort1} of the Lemma. 
Now   note that by Lemma \ref{lma_control_opti_convert},  we have $F(\pP^\pi) = f(\pP^*)$ and $F(\pP^{\pi'}) = f({\pP^*}')$. Plugging these into \eqref{num_distort1}, we conclude the required statement \eqref{num_distort2} of the Lemma.
\end{proof}

\subsection{Proof of Theorem \ref{thm_charac_stability_region}}
\label{appen_thm_charac_stability_region}

The Theorem is the immediate consequence of Lemma \ref{lma_linear_control}. For brevity, we denote the set characterization of $ \widetilde{\Lmb}(\gma, \qQ)$ by:
\begin{align*}
    S \triangleq \{\lmb& = \{\lambda^c_i\}_{i\in \Ii, c\in \Cc}\in \Br_{\geq0}^{|\Ii||\Cc|}: \\
    &\lambda^c_i \leq \sum_{k\in N(i)} \sum_{\tau=0}^{T_s-1}r_{ki}  p^c_{ki}(\tau) , \forall i\in \Ii, c\in \Cc,\\
    &   \pP = \{p^c_{ki}(\tau) \}_{\tau\in[0, T_s-1], (k,i) \in E, c\in \Cc} \in  \widetilde{\Dd}(\gma, \qQ)\},  
\end{align*}
and proceed to show that $ \widetilde{\Lmb}(\gma, \qQ) = S$. For any $\lmb \in \widetilde{\Lmb}(\gma, \qQ)$, by statement (i) of Lemma \ref{lma_linear_control}, there exists some $\pP \in \widetilde{\Dd}(\gma, \qQ)$ such that: 
\begin{align}
 \lambda^c_i \leq \sum_{k\in N(i)} \sum_{\tau=0}^{T_s-1} r_{ki}  p^c_{ki}(\tau) , \forall i\in \Ii, c\in \Cc. \label{charac_stability_region1}
\end{align}
This implies  $\lambda \in S$ and thus $ \widetilde{\Lmb}(\gma, \qQ) \subseteq S$. Conversely, for any $\lmb \in S$, i.e. there exists $\pP\in  \widetilde{\Dd}(\gma, \qQ)$ satisfying  \eqref{charac_stability_region1}, by statement (ii) of Lemma \ref{lma_linear_control}, the policy $\pi^*$ with $\pP^{\pi^*}(t) = \pP$ is a feasible policy that stabilizes the queues, i.e. $\EE[\QQ^{\pi^*}(T)] = o(T)$. This implies $\lmb\in \widetilde{\Lmb}(\gma, \qQ)$ and thus $S \subseteq \widetilde{\Lmb}(\gma, \qQ)$. We hence conclude that $\widetilde{\Lmb}(\gma, \qQ) = S$.




\subsection{Proof of Theorem \ref{thrm_mdp}}
\label{appen_thrm_mdp}

Besides the optimal policy $\pi^*$ of the problem $(\op_{\lmb, \gma, \qQ})$, we also consider the following policies of their respective problems:
\begin{itemize}
    \item the policy $\pi_1$ that is optimal for the problem $(\AOP_{\lmb, \gma, \qQ})$
    \item the policy $\pi_2$ that is optimal for the problem $(\AOP_{\lmb, \gma - \frac{K_1}{\sqrt{T_s}} \one, \qQ})$
    \item the policy $\pi_\varepsilon$  that is optimal for the problem $(\AOP_{\lmb_\varepsilon, \gma, \qQ})$ with $\lmb_\varepsilon = \frac{1}{1-0.5 \varepsilon} \lmb \in  \frac{1-\varepsilon}{1-0.5 \varepsilon} \widetilde{\Lmb}(\gma, \qQ) $
\end{itemize}

Since under condition \eqref{assum_charac}, we have by Theorem \ref{thrm_approx_bound} that  $\widetilde{\Dd}(\gma, \qQ)  \subseteq  {\Dd}(\gma, \qQ) \subseteq  \widetilde{\Dd}(\gma - \frac{K_1}{\sqrt{T_s}} \one, \qQ)$, we have $\pi_1$ being a feasible policy for the problem $(\op_{\lmb, \gma, \qQ})$ (whose optimal policy is $\pi^*$) and $\pi^*$ being a feasible policy for the problem $(\AOP_{\lmb, \gma - \frac{K_1}{\sqrt{T_s}} \one, \qQ})$ (whose optimal policy is $\pi_2$). Therefore, we obtain that: 
\begin{align}
\label{thrm_mdp1}
    F(\pP^{\pi_1}) \leq F(\pP^{*}) \leq F(\pP^{\pi_2}). 
\end{align}
Next, we break down the objective gap $ F(\pP^*) - F(\pP^\MDP)$ as follows:
\begin{align}
\nonumber
     &F(\pP^*) - F(\pP^\MDP) = \big( F(\pP^*) -  F(\pP^{\pi_1})\big) \\
     &\quad + \big( F(\pP^{\pi_1}) -  F(\pP^{\pi_\varepsilon}) \big) + \big(  F(\pP^{\pi_\varepsilon}) -  F(\pP^\MDP) \big), \label{thrm_mdp2}
\end{align}
and bound the first two terms in the following Lemma, whose proof is given in Appendix \ref{appen_lma_final_bounds}.

\begin{lemma}
\label{lma_final_bounds}
    We have the following bounds:
\begin{align}
    F(\pP^*) -  F(\pP^{\pi_1}) &\leq O\bigg(\frac{1}{\sqrt{T_s}}\bigg), \label{lma_final_bounds_1}\\
    F(\pP^{\pi_1}) -  F(\pP^{\pi_\varepsilon}) &\leq O\bigg(\frac{1}{{T_s}}\bigg). \label{lma_final_bounds_2}
\end{align}
\end{lemma}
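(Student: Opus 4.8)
The plan is to derive both inequalities from the perturbation estimate of Lemma~\ref{lma_num_distort}, applied to two carefully chosen pairs of the auxiliary $\AOP$-problems. The key structural observation is that $\pi_1$ and $\pi_2$ are optimal for two $\AOP$-problems that share the \emph{same} arrival rate $\lmb$ and differ only in the guaranteed-delivery vector ($\gma$ versus $\gma-\tfrac{K_1}{\sqrt{T_s}}\one$), whereas $\pi_1$ and $\pi_\varepsilon$ are optimal for two $\AOP$-problems that share the \emph{same} $\gma$ and differ only in the arrival rate ($\lmb$ versus $\lmb_\varepsilon=\tfrac{1}{1-0.5\varepsilon}\lmb$). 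So each target gap is a pure $\gma$-perturbation or a pure $\lmb$-perturbation, and Lemma~\ref{lma_num_distort} isolates exactly the relevant term of its right-hand side (the other term vanishing because the corresponding perturbation is zero). The hypotheses of Lemma~\ref{lma_num_distort} — Assumption~\ref{assum_slater} plus existence of a Slater slack $\varepsilon'>0$ valid for both members of the pair — are supplied by the standing assumptions of Theorem~\ref{thrm_mdp}; here I would also record the two elementary monotonicity facts about $\widetilde{\Lmb}(\gma,\qQ)$ used below, namely that it is downward closed and that it grows as $\gma$ decreases, both immediate from Theorem~\ref{thm_charac_stability_region} together with the inclusion $\widetilde{\Dd}(\gma,\qQ)\subseteq\widetilde{\Dd}(\gma-\tfrac{K_1}{\sqrt{T_s}}\one,\qQ)$ of Theorem~\ref{thrm_approx_bound}.

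\textbf{Proof of \eqref{lma_final_bounds_1}.} First I would invoke the sandwich $F(\pP^{\pi_1})\le F(\pP^*)\le F(\pP^{\pi_2})$ from \eqref{thrm_mdp1}, which reduces the claim to bounding $F(\pP^{\pi_2})-F(\pP^{\pi_1})$. Applying Lemma~\ref{lma_num_distort} to $\pi_1$, optimal for $(\AOP_{\lmb,\gma,\qQ})$, and $\pi_2$, optimal for $(\AOP_{\lmb,\gma',\qQ})$ with $\gma'=\gma-\tfrac{K_1}{\sqrt{T_s}}\one\le\gma$ and the same $\lmb$ (so the case $\psi=1$, with the $\|\lmb-\lmb'\|_\infty$-term equal to $0$), gives $|F(\pP^{\pi_1})-F(\pP^{\pi_2})|\le\tfrac{f_{max}-f_{min}}{\zeta}\,\|\gma-\gma'\|_1$. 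Since $\|\gma-\gma'\|_1=\tfrac{|\Cc||\Ii|\,K_1}{\sqrt{T_s}}$, this is $O\!\big(\tfrac{1}{\sqrt{T_s}}\big)$, which is \eqref{lma_final_bounds_1}. (If one prefers to avoid invoking Lemma~\ref{lma_num_distort} at $\psi=1$, the required estimate is exactly the ``$|f(\pP^*)-f(\bar{\pP}^*)|$'' branch inside the proof of that lemma, which one may quote directly after converting to the static programs $(\opti_{\cdot,\cdot,\qQ})$ via Lemma~\ref{lma_control_opti_convert}.)

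\textbf{Proof of \eqref{lma_final_bounds_2}.} Now $\pi_1$ and $\pi_\varepsilon$ are optimal for $(\AOP_{\lmb,\gma,\qQ})$ and $(\AOP_{\lmb_\varepsilon,\gma,\qQ})$ with the same $\gma$, so Lemma~\ref{lma_num_distort} (with $\gma'=\gma$, $\psi=\tfrac{1}{1-0.5\varepsilon}>1$, the $\|\gma-\gma'\|_1$-term now vanishing) gives $F(\pP^{\pi_1})-F(\pP^{\pi_\varepsilon})\le\tfrac{f_{max}-f_{min}}{\varepsilon'}\,\|\lmb-\lmb_\varepsilon\|_\infty$ for any admissible Slater slack $\varepsilon'$. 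The numerator is routine: $\|\lmb-\lmb_\varepsilon\|_\infty=\tfrac{0.5\varepsilon}{1-0.5\varepsilon}\|\lmb\|_\infty\le\tfrac{0.5\varepsilon}{1-0.5\varepsilon}A_{max}=O(\varepsilon)=O\!\big(\tfrac{1}{\sqrt{T_s}}\big)$ since $\varepsilon=\tfrac{K_4}{\sqrt{T_s}}$. The decisive step is to take $\varepsilon'$ of order $\varepsilon T_s$. To this end I would start from the hypothesis $\lmb\in(1-\varepsilon)\widetilde{\Lmb}(\gma,\qQ)$, i.e.\ $\tfrac{\lmb}{1-\varepsilon}\in\widetilde{\Lmb}(\gma,\qQ)$, and combine it with the non-emptiness bound $\lambda^c_i\ge|\Kk|T_s\delta_{min}$ for every $(i,c)$ with $\lambda^c_i>0$ (from $\delta(\lmb)\ge\delta_{min}$) and with downward-closedness of $\widetilde{\Lmb}(\gma,\qQ)$; a short componentwise comparison then shows that $\varepsilon'\mydefn 0.5\,\varepsilon\,|\Kk|T_s\delta_{min}$ satisfies $0\le\lmb+\varepsilon'\eE\le\tfrac{\lmb}{1-\varepsilon}$ and $0\le\lmb_\varepsilon+\varepsilon'\eE\le\tfrac{\lmb}{1-\varepsilon}$ componentwise, hence $\lmb+\varepsilon'\eE,\ \lmb_\varepsilon+\varepsilon'\eE\in\widetilde{\Lmb}(\gma,\qQ)$. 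With $\varepsilon'=\Theta(\varepsilon T_s)=\Theta(\sqrt{T_s})$ this makes $\tfrac{1}{\varepsilon'}\,\|\lmb-\lmb_\varepsilon\|_\infty=O\!\big(\tfrac{1}{\sqrt{T_s}}\big)\cdot O\!\big(\tfrac{1}{\sqrt{T_s}}\big)=O\!\big(\tfrac{1}{T_s}\big)$, which is \eqref{lma_final_bounds_2}.

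\textbf{Main obstacle.} The numerators and the bookkeeping needed to invoke Lemma~\ref{lma_num_distort} are routine. The one genuinely delicate point is the denominator in \eqref{lma_final_bounds_2}: treating $\varepsilon'$ as an absolute constant would only yield $O\!\big(\tfrac{1}{\sqrt{T_s}}\big)$ and lose the claimed $O\!\big(\tfrac{1}{T_s}\big)$, so one must exploit that the \emph{available} Slater slack scales like $\varepsilon T_s$ — this is precisely where the $(1-\varepsilon)$-interiority of $\lmb$ and the load-factor floor $\delta_{min}$ enter. Verifying the two componentwise inequalities that certify $\varepsilon'=\Theta(\varepsilon T_s)$ (and checking that $T_s$ is large enough that $\varepsilon<1$, which is ensured by condition \eqref{assum_charac}) is the step requiring care.
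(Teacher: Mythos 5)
Your proposal is correct and follows essentially the same route as the paper: the sandwich $F(\pP^{\pi_1})\le F(\pP^*)\le F(\pP^{\pi_2})$ from \eqref{thrm_mdp1} plus Lemma~\ref{lma_num_distort} with a pure $\gma$-perturbation of size $\|\gma-\gma'\|_1=\frac{|\Cc||\Ii|K_1}{\sqrt{T_s}}$ for \eqref{lma_final_bounds_1}, and Lemma~\ref{lma_num_distort} with a pure $\lmb$-perturbation of size $O(\varepsilon)$ together with the enlarged Slater slack $\varepsilon'=0.5\varepsilon|\Kk|T_s\delta_{min}=\Theta(\sqrt{T_s})$ for \eqref{lma_final_bounds_2}, which is exactly the paper's Lemma~\ref{lma_distance_from_stability_region}. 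Your certification of $\lmb+\varepsilon'\eE,\ \lmb_\varepsilon+\varepsilon'\eE\in\widetilde{\Lmb}(\gma,\qQ)$ via componentwise comparison with $\frac{\lmb}{1-\varepsilon}$ and downward-closedness is just a rate-space rephrasing of the paper's argument (which works through the characterization of Theorem~\ref{thm_charac_stability_region} and the floor $\lambda^c_i\ge|\Kk|T_s\delta_{min}$), so the two proofs coincide in substance.
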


Applying \eqref{lma_final_bounds_1} and \eqref{lma_final_bounds_2} into \eqref{thrm_mdp2}, we  obtain that:
\begin{align}
    F(\pP^*) - F(\pP^\MDP) \leq O\bigg(\frac{1}{\sqrt{T_s}}\bigg) +\big(  F(\pP^{\pi_\varepsilon}) -  F(\pP^\MDP) \big).  \label{thrm_mdp3}
\end{align}
It is thus left to bound  $ F(\pP^{\pi_\varepsilon}) -  F(\pP^\MDP)$. Since $\lmb_\varepsilon = \frac{1}{1-0.5 \varepsilon} \lmb \subseteq \frac{1-\varepsilon}{1-0.5 \varepsilon} \widetilde{\Lmb}(\gma, \qQ) \subseteq  \widetilde{\Lmb}(\gma, \qQ)$, by Lemma \ref{lma_linear_control}, there exists an stationary randomized policy that is optimal for the problem $(\AOP_{\lmb_\varepsilon, \gma, \qQ})$. Without loss of generality (WLOG), we assume that the considered $\pi_\varepsilon$ is one such stationary randomized optimal policy that follows from Lemma \ref{lma_linear_control}. Thus, we have $p^{\pi_\varepsilon c}_{ki}(t, \tau) = p^c_{ki}(\tau) , \forall t\geq 0$ for some $\pP \in  \widetilde{\Dd}(\gma, \qQ)$  that supports the arrival rate $\lmb_\varepsilon = \frac{1}{1-0.5 \varepsilon} \lmb $, i.e. 
\begin{align}
\nonumber
  [\lambda_\varepsilon]^c_i=  \frac{1}{1-0.5 \varepsilon} \lambda^c_i \leq \sum_{k\in N(i)} \sum_{\tau=0}^{T_s-1} r_{ki} 
 p^c_{ki}(\tau) , \forall i\in \Ii, c\in \Cc \\
 \lambda^c_i  \leq (1-0.5\varepsilon) \sum_{\tau=0}^{T_s-1} r_{ki} 
 p^{\pi_\varepsilon c}_{ki}(t, \tau), \forall i\in \Ii, c\in \Cc. \label{thrm_mdp4}
\end{align}
Furthermore, since $\pP^{\pi_\varepsilon}(t)\in  \widetilde{\Dd}(\gma, \qQ) \subseteq  {\Dd}(\gma, \qQ)$ and  $ \pP^{\DP}(t) = \argmaxE_{ \pP(t) \in \Dd(\gma, \qQ) } h_t(\pP(t))$ from \eqref{drift_plus_penalty1}, we have:
\begin{align}
    h_t(\pP^\DP(t)) \geq h_t(\pP^{\pi_\varepsilon}(t)).  \label{thrm_mdp5}
\end{align}
Next, we proceed to analyze the drift-plus-penalty $\Delta^\MDP (t) - V f(\pP^\MDP(t))$ of the MDP policy.  We consider the drift bound \eqref{drifting1} from Lemma \ref{lma_driftbound} under the effect of the MDP policy.
Now, subtracting $V f(\pP^\MDP(t))$ from both sides of \eqref{drifting1} and using \eqref{drift_obj1}, we have:
\begin{align}
\nonumber
     \Delta^\MDP (t) -  V f(\pP^\MDP(t))&\leq B_1 +     \sum_{c\in \Cc}\sum_{i\in I} Q^c_i(t)  \lambda^c_i - h_t(\pP^\MDP(t))\\
     \nonumber
     &\leq  B_1 +     \sum_{c\in \Cc}\sum_{i\in I} Q^c_i(t)  \lambda^c_i - h_t(\pP^\DP(t))\\
     \nonumber
     &\quad + K_2  \|\QQ(t)\|_{\infty}\sqrt{T_s}  +  \frac{V K_2 (f_{max} -f_{min})}{ |\Kk| \sqrt{T_s}}, \\
     \nonumber
     &\overset{\eqref{thrm_mdp5} }{\leq}  B_1 +     \sum_{c\in \Cc}\sum_{i\in I} Q^c_i(t)  \lambda^c_i - h_t(\pP^{\pi_\varepsilon}(t))\\
     &\quad  + K_2  \|\QQ(t)\|_{\infty}\sqrt{T_s}  +  \frac{V K_2 (f_{max} -f_{min})}{ |\Kk| \sqrt{T_s}},
     \label{thrm_mdp6}
\end{align}
where the second inequality follows from Lemma \ref{lma_obj_gap}. We next bound the term $  \sum_{c\in \Cc}\sum_{i\in I} Q^c_i(t)  \lambda^c_i - h_t(\pP^{\pi_\varepsilon}(t))$ in the RHS of \eqref{thrm_mdp6}:
\begin{align}
\nonumber
    &\sum_{c\in \Cc}\sum_{i\in I} Q^c_i(t)  \lambda^c_i - h_t(\pP^{\pi_\varepsilon}(t))  \\
    \nonumber
    &\overset{\eqref{drift_obj1}}{=} \sum_{c\in \Cc}\sum_{i\in I} Q^c_i(t) ( \lambda^c_i -\sum_{\tau=0}^{T_s-1} r_{ki} 
 p^{\pi_\varepsilon c}_{ki}(t, \tau) ) - V f(\pP^{\pi_\varepsilon}(t))\\
 &\overset{\eqref{thrm_mdp4}}{\leq} -0.5 \varepsilon  \sum_{c\in \Cc}\sum_{i\in I} Q^c_i(t) \big( \sum_{\tau=0}^{T_s-1} r_{ki} 
 p^{\pi_\varepsilon c}_{ki}(t, \tau) \big)- V f(\pP^{\pi_\varepsilon}(t)). \label{thrm_mdp7}
\end{align}
From \eqref{thrm_mdp4} and noting that $ \delta(\lmb) = \min_{i\in \Ii, c\in \Cc: \lambda^c_i > 0}\big\{ \frac{\lambda^c_i}{|\Kk| T_s}\big\} \geq \delta_{min}$, we have:
\begin{align}
    \sum_{\tau=0}^{T_s-1} r_{ki} p^{\pi_\varepsilon c}_{ki}(t, \tau) \geq |\Kk| T_s \delta(\lmb) \geq |\Kk| T_s \delta_{min}, \forall i\in \Ii, c\in \Cc: \lambda^c_i > 0. \label{thrm_mdp8.1}
\end{align}
If $\| \QQ(t)\|_\infty= Q^{c_0}_{i_0}(t) > 0$, then $\lambda^{c_0}_{i_0} > 0$; otherwise, we have $a^{c_0}_{i_0}(t) = 0, \forall t\geq 0$ and thus $Q^{c_0}_{i_0}(t) = 0, \forall t\geq 0$ with probability 1. Then, we have the following for the case $\| \QQ(t)\|_\infty>0$:
\begin{align}
\nonumber
    \sum_{c\in \Cc}\sum_{i\in I} Q^c_i(t) \big( \sum_{\tau=0}^{T_s-1} r_{ki}  p^{\pi_\varepsilon c}_{ki}(t, \tau) \big) &\geq \| \QQ(t)\|_\infty \big( \sum_{\tau=0}^{T_s-1} r_{ki_0}  p^{\pi_\varepsilon c_0}_{ki_0}(t, \tau) \big) \\
    &\overset{ \eqref{thrm_mdp8.1}}{\geq} \| \QQ(t)\|_\infty |\Kk| T_s \delta_{min}. \label{thrm_mdp8}
\end{align}
For the case $\| \QQ(t)\|_\infty=0$, \eqref{thrm_mdp8} trivially holds. 
Plugging \eqref{thrm_mdp8} into \eqref{thrm_mdp7}, we obtain that:
\begin{align}
    \nonumber
       &\sum_{c\in \Cc}\sum_{i\in I} Q^c_i(t)  \lambda^c_i - h_t(\pP^{\pi_\varepsilon}(t)) \\
       \nonumber 
       &\leq  -0.5 \varepsilon T_s  \|\QQ(t) \|_\infty |\Kk|  \delta_{min}  -V f(\pP^{\pi_\varepsilon}(t)) \\
       &= - K_2  \|\QQ(t)\|_{\infty}\sqrt{T_s} -V f(\pP^{\pi_\varepsilon}(t)),\label{thrm_mdp9}
\end{align}
where the last line is by $\varepsilon =\frac{K_4}{\sqrt{T_s}} = \frac{2 K_2}{\sqrt{T_s}|\Kk|\delta_{min}}$. Plugging \eqref{thrm_mdp9} into \eqref{thrm_mdp6}, we obtain that: 
\begin{align}
\nonumber
      \Delta^\MDP (t) -  V f(\pP^\MDP(t))&\leq B_1 -V f(\pP^{\pi_\varepsilon}(t)) +  \frac{V K_2 (f_{max} -f_{min})}{ |\Kk| \sqrt{T_s}}. 
\end{align}
Noting that $\Delta^\MDP (t) = \EE\big[ \Ll(\QQ(t+1)) -  \Ll(\QQ(t)) \big| \QQ(t) \big]$ and taking iterated expectation of both sides of the above, we have: 
\begin{align*}
   &\EE\big[ \Ll(\QQ(t+1)) \big] -  \EE\big[\Ll(\QQ(t)) \big]\\
   &\leq B_1 + V (f(\pP^\MDP(t)) -f(\pP^{\pi_\varepsilon}(t)) ) +  \frac{V K_2 (f_{max} -f_{min})}{ |\Kk| \sqrt{T_s}}.
\end{align*}
Telescoping the above for $t=0 \to T-1$ and noting that $  \EE\big[ \Ll(\QQ(0))\big]=0$, we get:
\begin{align}
\nonumber
    \EE\big[ \Ll(\QQ(T))\big] &\leq B_1 T  + V \sum_{t=0}^{T-1} f(\pP^\MDP(t)) - V \sum_{t=0}^{T-1}f(\pP^{\pi_\varepsilon}(t)) \\
    &\quad +\frac{V T K_2 (f_{max} -f_{min})}{ |\Kk| \sqrt{T_s}} .\label{thrm_mdp11}
\end{align}
Using $f(\pP(t)) \in [f_{min}, f_{max}]$ and noting that $B_1 = |\Ii||\Cc| A_{max}^2+ T_s^2 |\Kk|^2$, we have:
\begin{align}
\nonumber
    \EE\big[ \Ll(\QQ(T))\big] &\leq B_1 T  + VT(f_{max} -f_{min})+ \frac{V T K_2 (f_{max} -f_{min})}{ |\Kk| \sqrt{T_s}}\\
    &= O\big(VT + T_s^2 T). \label{thrm_mdp12}
\end{align}
From Lemma \ref{lma_cauchy}, we have:
\begin{align*}
     \EE[\|\QQ(T)\|_1] \leq \sqrt{2|\Cc| |\Ii| \, \EE[\Ll(\QQ(T))]} \overset{\eqref{thrm_mdp12}}{=} O\big( \sqrt{VT} + T_s \sqrt{T}\big), 
\end{align*}
which is the required statement \eqref{thrm_mdp_eq1} of the Theorem. Now, dividing both sides of  \eqref{thrm_mdp11} by $VT$ and noting that $   \EE\big[ \Ll(\QQ(T))\big]\geq 0$, we have:
\begin{align*}
       \frac{\sum_{t=0}^{T-1}f(\pP^{\pi_\varepsilon}(t)) }{T}\leq \frac{\sum_{t=0}^{T-1} f(\pP^\MDP(t)) }{T} +\frac{B_1}{V} + \frac{ K_2 (f_{max} -f_{min})}{ |\Kk| \sqrt{T_s}}.
\end{align*}
Taking $\liminf$ for $T\to \infty$ of both sides and noting that  $B_1 = |\Ii||\Cc| A_{max}^2+ T_s^2 |\Kk|^2$, we obtain that:
\begin{align}
    F(\pP^{\pi_\varepsilon}(t)) \leq  F(\pP^\MDP(t)) + O\bigg(\frac{T_s^2}{V} + \frac{1}{\sqrt{T_s}}\bigg). \label{thrm_mdp13}
\end{align}
Plugging \eqref{thrm_mdp13} into \eqref{thrm_mdp3}, we get: 
\begin{align*}
      F(\pP^*) - F(\pP^\MDP) = O\bigg(\frac{T_s^2}{V} + \frac{1}{\sqrt{T_s}}\bigg),
\end{align*}
which is the required statement \eqref{thrm_mdp_eq2} of the Theorem.

\section{Necessity of Randomized Scheduling and Single Server Example}
\label{appen_rand_scheduling}

Consider a single server serving 101 clients with only one type of job. For brevity, we omit the superscript $c$ indexing the job type. 
 The channel statistics are the same, i.e. $r_{1i} = r_{max} = 0.8$ for all $ i\in[1, 101]$
 The orchestration time frame is $T_s = 300$ and the probabilistic constraints are specified by $\gamma_i = 6.72 \cdot 10^{-3}$ and $q_i = 2 \cdot 10^{-4}$ for all $i \in[1,101]$. Even under this simple setting with mild probabilistic QoS constraints, any deterministic policy fails to meet all the probabilistic constraints, as shown in the following Claim \ref{claim1}. 
 \begin{claim}
 \label{claim1}
     Under the considered setting and for any deterministic policy, at any time frame $t$, there exists some $i\in [1, 101]$ such that $P(\mu_i(t) > T_s \gamma_i) = 0$.
 \end{claim}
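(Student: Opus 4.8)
The plan is a short counting (pigeonhole) argument exploiting that a deterministic policy assigns each operational slot to exactly one client. Fix an arbitrary deterministic policy and a time frame $t$. By definition $p_{1i}(t,\tau)\in\{0,1\}$, and the normalization \eqref{prob_constraint} forces $\sum_{i=1}^{101} p_{1i}(t,\tau)=1$, so at each slot $\tau\in[T_s]$ exactly one client is served. Define $n_i=\sum_{\tau=0}^{T_s-1}p_{1i}(t,\tau)$, the number of slots allocated to client $i$ in frame $t$; then $\sum_{i=1}^{101} n_i = T_s = 300$.

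Next I would record the deterministic sample-path bound $\mu_i(t)=\sum_{\tau=0}^{T_s-1}y_{1i}(t,\tau)\le \sum_{\tau=0}^{T_s-1}p_{1i}(t,\tau)=n_i$, which holds with probability $1$ since $y_{1i}(t,\tau)\in\{0,1\}$ and $y_{1i}(t,\tau)=0$ whenever $p_{1i}(t,\tau)=0$. Then I would plug in the numbers: $T_s\gamma_i = 300\cdot 6.72\cdot 10^{-3} = 2.016$, and since $\mu_i(t)$ is integer-valued, the event $\{\mu_i(t)>T_s\gamma_i\}$ coincides with $\{\mu_i(t)\ge 3\}$. In particular, if $n_i\le 2$ then $\mu_i(t)\le n_i\le 2<3$ almost surely, so $P(\mu_i(t)>T_s\gamma_i)=0$.

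Finally, the pigeonhole step: if $n_i\ge 3$ held for every $i\in[1,101]$, we would get $\sum_{i=1}^{101} n_i \ge 3\cdot 101 = 303 > 300 = T_s$, a contradiction. Hence there exists $i^\ast\in[1,101]$ with $n_{i^\ast}\le 2$, and for that client $P(\mu_{i^\ast}(t)>T_s\gamma_{i^\ast})=0$, which proves the Claim (and, since $q_{i^\ast}=2\cdot 10^{-4}>0$, exhibits a violation of the probabilistic QoS constraint \eqref{qos_constraint1}). There is no real technical obstacle here; the only thing worth flagging is the choice of parameters — $\gamma_i$ is picked precisely so that $T_s\gamma_i$ lands strictly between $2$ and $3$, forcing at least three allocated slots per client, which is what makes $3\cdot 101>T_s$ bite. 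The same argument would also work for any deterministic policy that restricts each client to one InP per slot, since that only tightens the per-slot budget further.
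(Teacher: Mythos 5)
Your proof is correct and follows essentially the same route as the paper: pigeonhole over the $T_s=300$ slots shared by $101$ clients to find a client with at most $2$ allocated slots, then observe that $\mu_i(t)\le n_i\le 2 < T_s\gamma_i = 2.016$ forces $P(\mu_i(t)>T_s\gamma_i)=0$. The paper phrases the counting as $n_i(t)\le T_s/|I|<3$ and notes $\mu_i(t)\sim B(n_i(t),r_{max})$, but this is the same argument as your sample-path bound.
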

 \begin{proof}
   For any deterministic policy, within time frame $t$, let $n_i(t)$ be the number of time slots that the server activate link $(1, i)$ with client $i$. 
   As there are $T_s$ time slots, we have $\sum_{i=1}^{|I|} n_i(t) = T_s$. Thus, there exists some $i$ such that $n_i(t) \leq T_s / |I| <3$, or $n_i(t) \leq 2$ (as $n_i(t)$ is an integer). Within  time frame $t$, since the server attempts to serve client $i$ for $n_i(t)$ times, each of which is successful with probability $r_{max}$, the service that client $i$ receives  follows Binomial distribution as $\mu_i(t) \sim B(n_i(t), r_{max})$. Given that $n_i(t) \leq 2$, we obtain that  $P(\mu_i(t) > T_s \gamma_i) = P(\mu_i(t) > 2.016)= 0$.
 \end{proof}
We know that the MDP policy seeks for the scheduling decision $\pP(t)\in \widetilde{\Dd}(\gma, \qQ)$, which, if exists, by Theorem \ref{concen_bound} would meet all the probabilistic QoS constraints. The following claim shows that, in fact, such  feasible scheduling decision always exists.
\begin{claim}
    \label{claim2}
   The domain $\widetilde{\Dd}(\gma, \qQ)$ is non-empty.
\end{claim}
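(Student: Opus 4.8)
The plan is to exhibit an explicit feasible point in $\widetilde{\Dd}(\gma,\qQ)$ for this instance: the uniform stationary schedule $p_{1i}(t,\tau)=\tfrac{1}{101}$ for every client $i\in[1,101]$ and every slot $\tau\in[T_s]$, and to check that it satisfies every constraint in the definition \eqref{def_approx_region}. Membership in $[0,1]^{T_s|E||\Cc|}$ is clear, and \eqref{prob_constraint} holds because $\sum_{i=1}^{101}p_{1i}(t,\tau)=101\cdot\tfrac{1}{101}=1$. Since there is a single server, \eqref{pmax1} reads $p_{1i}(t,\tau)\le U_i$, which holds under the benign requirement $U_i\ge\tfrac{1}{101}$. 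All the real work is in verifying the linearized QoS constraint \eqref{linearized_constraint1}, one instance per client $i$, all identical by symmetry.

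First I would evaluate the confidence budget $\Gamma_i$ from \eqref{gamma_choice}: with $|\Kk|=1$, $T_s=300$ and $q_i=2\cdot 10^{-4}$ one gets $\Gamma_i=\sqrt{600\,\log\big((1-2\cdot10^{-4})^{-1}\big)}\approx 0.346$, so in particular $\Gamma_i<1$ and hence $\lfloor\Gamma_i\rfloor=0$. This trivializes the combinatorial maximum in \eqref{big_b1}: with $S_{ti}=\emptyset$ the first sum vanishes and $B_i(\Gamma_i,\pP(t))=\Gamma_i\,\max_{\tau}\hat p_{1i}(t,\tau)$; since $\hat p_{1i}(t,\tau)=\max\{r_{1i}p_{1i}(t,\tau),\,1-r_{1i}p_{1i}(t,\tau)\}\le 1$ always, we obtain the crude bound $B_i(\Gamma_i,\pP(t))\le\Gamma_i<0.35$.

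It then remains to check \eqref{linearized_constraint1} numerically. Its left side equals $|\Kk|T_s\gamma_i=300\cdot 6.72\cdot 10^{-3}=2.016$, while for the uniform schedule its right side equals $\sum_{\tau=0}^{T_s-1}r_{1i}p_{1i}(t,\tau)-B_i(\Gamma_i,\pP(t))=300\cdot\tfrac{0.8}{101}-B_i(\Gamma_i,\pP(t))\ge\tfrac{240}{101}-0.35>2.016$, so the constraint holds with strict slack. Hence the uniform schedule lies in $\widetilde{\Dd}(\gma,\qQ)$, which is therefore non-empty; moreover, by Theorem~\ref{concen_bound} this same randomized schedule satisfies the original probabilistic QoS constraint \eqref{qos_constraint1}, in sharp contrast with the impossibility established in Claim~\ref{claim1}. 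The computation is elementary, so there is no genuine obstacle; the only step meriting a little care is confirming $\lfloor\Gamma_i\rfloor=0$, which is what reduces \eqref{big_b1} to the single-index form used above (and if a wider margin is wanted, one may instead use the sharper $B_i(\Gamma_i,\pP(t))\le\Gamma_i\big(1-\tfrac{0.8}{101}\big)$).
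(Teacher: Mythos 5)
Your proposal is correct and uses exactly the same feasible point as the paper — the uniform randomized schedule $p_{1i}(t,\tau)=\tfrac{1}{101}$ — with the paper's "direct verification" simply carried out explicitly (including the useful observation that $\Gamma_i\approx 0.346<1$ collapses $B_i$ to a single term). The added numerical detail is a faithful expansion of the paper's argument, not a different route.
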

\begin{proof}
Consider the simple (randomized) scheduling decision \(\pP(t)\) given by
    $p_{1i}(t, \tau) = \frac{1}{|I|} = \frac{1}{101},  \forall \tau \in [T_s]$.
By direct verification, this choice of \(\pP(t)\) satisfies the constraints 
defining \(\widetilde{\Dd}(\gamma, \qQ)\), and hence the set is non-empty.
\end{proof}

\section{Application of the QoS framework: Throughput and Delay Guarantees}

\subsection{Proof of Corollary \ref{corol_throughput_bound}}
\label{appen_corol_throughput_bound}

The guarantee is obtained directly from the probabilistic QoS constraint. In particular, we have $\forall i\in \Ii, c\in \Cc$:
\begin{align*}
 \EE[\mu^c_i(t)] \geq |\Kk| T_s \gamma^c_i P\big(\mu^c_i(t) >  |\Kk| T_s \gamma^c_i \big)       \overset{\eqref{qos_constraint1}}{\geq} |\Kk| T_s \gamma^c_i q^c_i.
\end{align*}

\subsection{Proof of Theorem \ref{thrm_delay_bound}}
\label{appen_thrm_delay_bound}

From the queue dynamics \eqref{queue_dynamics}, we  obtain that $\forall i\in \Ii, c\in \Cc, t\geq0$:
\begin{align*}
 Q^c_i(t+1)^2  &\leq \big( Q^c_i(t) + a^c_i(t) -  \mu^c_i(t)\big)^2\\   
 &=  Q^c_i(t)^2 + (a^c_i(t) -  \mu^c_i(t))^2  + 2 Q^c_i(t) (a^c_i(t) -  \mu^c_i(t)).
\end{align*}
Taking conditional expectation of the above, we obtain that:
\begin{align}
\nonumber
&\EE\big[Q^c_i(t+1)^2 - Q^c_i(t)^2 \big|  Q^c_i(t)\big] \\
\nonumber
&\leq \EE[a^c_i(t)^2] - 2 \lambda^c_i \EE[ \mu^c_i(t)] + \EE[\mu^c_i(t)^2]+ 2  Q^c_i(t) (  \lambda^c_i  -  \EE[ \mu^c_i(t)]  )\\
\nonumber
&= \EE[a^c_i(t)^2] - 2 \lambda^c_i \EE[ \mu^c_i(t)] + Var[\mu^c_i(t)] + \EE[\mu^c_i(t)]^2 \\
\nonumber
&\quad+ 2  Q^c_i(t) (  \lambda^c_i  -  \EE[ \mu^c_i(t)]  )\\
\nonumber
&\leq   \EE[a^c_i(t)^2] - 2 \lambda^c_i  |\Kk| T_s \gamma^{c}_{i}q^c_i + T_s U_i +T_s^2  U_i^2\\
\nonumber
& \quad +  2  Q^c_i(t) \big(  \lambda^c_i  -    |\Kk| T_s \gamma^{c}_{i}q^c_i  \big), 
\end{align}
where the last inequality follows from \eqref{qos_throughput1} and \eqref{var_bound1}. Taking iterated expectation of the above, we get:
\begin{align*}
    \EE\big[Q^c_i(t+1)^2\big] -\EE\big[Q^c_i(t)^2\big] &\leq   \EE[a^c_i(t)^2] - 2 \lambda^c_i  |\Kk| T_s \gamma^{c}_{i}q^c_i + T_s U_i \\
& \quad +T_s^2  U_i^2+  2  \EE\big[ Q^c_i(t) \big] \big(  \lambda^c_i  -    |\Kk| T_s \gamma^{c}_{i}q^c_i  \big). 
\end{align*}
 Telescoping for $t=0 \to T-1$ and noting that $  \EE\big[ Q^c_i(0)\big]=0$, we obtain that:
 \begin{align*}
\nonumber
     0\leq  \EE\big[Q^c_i(T)^2\big] &\leq T \EE[a^c_i(t)^2]- 2 T \lambda^c_i  |\Kk| T_s \gamma^{c}_{i}q^c_i + T T_s U_i \\
     &\quad +T T_s^2  U_i^2+ 2 \big(  \lambda^c_i  -    |\Kk| T_s \gamma^{c}_{i}q^c_i  \big) \sum_{t=0}^{T-1} \EE\big[ Q^c_i(t) \big]\\
     \therefore \frac{1}{T}\sum_{t=0}^{T-1} \EE\big[ Q^c_i(t) \big]&\leq \frac{ \EE[a^c_i(t)^2]- 2  \lambda^c_i  |\Kk| T_s \gamma^{c}_{i}q^c_i +  T_s U_i + T_s^2  U_i^2}{2(  |\Kk| T_s \gamma^{c}_{i}q^c_i  - \lambda^c_i )}.
 \end{align*}
 Taking $\limsup$ for $T\to \infty$ of the above, we have:
 \begin{align}
 \nonumber
     \bar{Q}^c_i &= \limsup_{T\to \infty} \frac{1}{T} \sum_{t=0}^{T-1} \EE[Q^c_i(t)]\\
     &\leq \frac{ \EE[a^c_i(t)^2]- 2  \lambda^c_i  |\Kk| T_s \gamma^{c}_{i}q^c_i +  T_s U_i + T_s^2  U_i^2}{2(  |\Kk| T_s \gamma^{c}_{i}q^c_i  - \lambda^c_i )}.
 \end{align}
By Little's Theorem, the average queueing delay $W^c_i$ thus satisfies:
\begin{align*}
    W^c_i \leq  \frac{ \EE[a^c_i(t)^2]- 2  \lambda^c_i  |\Kk| T_s \gamma^{c}_{i}q^c_i +  T_s U_i + T_s^2  U_i^2}{2 \lambda^c_i (  |\Kk| T_s \gamma^{c}_{i}q^c_i  - \lambda^c_i )},
\end{align*}
which is the required statement \eqref{queue_delay_bound} of the Theorem. 
Finally, to obtain a sufficient condition for $W^c_i \leq W^*$, we set the RHS of \eqref{queue_delay_bound} to be less than or equal to $W^*$:
\begin{align*}
     &\frac{ \EE[a^c_i(t)^2]- 2  \lambda^c_i  |\Kk| T_s \gamma^{c}_{i}q^c_i +  T_s U_i + T_s^2  U_i^2}{2 \lambda^c_i (  |\Kk| T_s \gamma^{c}_{i}q^c_i  - \lambda^c_i )} \leq W^*\\
     \iff & 2W^*\lambda^c_i (  |\Kk| T_s \gamma^{c}_{i}q^c_i  - \lambda^c_i ) \geq \EE[a^c_i(t)^2]- 2  \lambda^c_i  |\Kk| T_s \gamma^{c}_{i}q^c_i \\
     &\quad \quad  \quad  \quad \quad \quad \quad \quad \quad \quad \quad  +  T_s U_i + T_s^2  U_i^2 \\
     \iff& 2\lambda^c_i |\Kk | T_s (W^*+1) \gamma^{c}_{i}q^c_i\geq \EE[a^c_i(t)^2]+  T_s U_i \\
     & \quad \quad  \quad  \quad \quad \quad \quad \quad \quad \quad + T_s^2  U_i^2 +2(\lambda^{c}_i)^2 W^*\\
    \iff &\gamma^c_i q^c_i   \geq \frac{ \EE[a^c_i(t)^2] + T_s^2 U_i^2 + T_s U_i + 2(\lambda^{c}_i)^2 W^*   }{2\lambda^c_i |\Kk | T_s (W^*+1)}, 
\end{align*}
which is the condition \eqref{queue_delay_suff1} as stated in the Theorem.

\section{ Omitted Proofs}

\subsection{Proof of Lemma \ref{lma_bound_h}}
\label{appen_lma_bound_h}
For any $ \pP(t) \in [0, 1]^{T_s |E| |\Cc|}$, if $\pP(t) \not\in S$, then $\bar{h}_t(\pP(t)) = -\infty$ by \eqref{indicator1} and the bound \eqref{bound_h1} trivially holds. Else if $\pP(t) \in S$, then $  \mathbbm{1}_S(\pP(t)) =0$ and thus $ \bar{h}_t(\pP(t)) = h_t(\pP(t))$. We have:
\begin{align*}
   \bar{h}_t(\pP(t))  &= \sum_{c\in \Cc}\sum_{i\in I} Q^c_i(t) \sum_{\tau = 0}^{T_s -1 }\sum_{k\in N(i)} r_{ki} p^c_{ki}(t, \tau) + V  f( \pP(t))\\
    &\leq  \| \QQ(t) \|_\infty  \sum_{c\in \Cc}\sum_{i\in I} \sum_{\tau = 0}^{T_s -1 }\sum_{k\in N(i)} p^c_{ki}(t, \tau)  + V f_{max}\\
    &\overset{\eqref{prob_constraint}}{=} \| \QQ(t) \|_\infty  T_s  |\Kk| + V f_{max}, 
\end{align*}
where in the above, we use $ \| \QQ(t) \|_\infty = \max_{i\in \Ii, c\in \Cc}  Q^c_i(t) $ and $f(\pP(t)) \in [f_{min}, f_{max}]$.

\subsection{Proof of Lemma \ref{penalty_diff}}
\label{appen_penalty_diff}

Applying Lemma \ref{min_ineq} and using \eqref{penalty}, we obtain that:
\begin{align*}
      | P_{r, \gma}(\pP(t)) -  P_{r, \gma'}(\pP(t))| &\leq r   \sum_{(i, c)\in \Ss} |\gamma^c_i  -{\gamma^{c}_i}'  |\\
      &= r  \|\gma  -\gma'  \|_1.
\end{align*}

\subsection{Proof of Lemma \ref{lma_penalty_main} }
\label{appen_lma_penalty_main}


Assumption \ref{assum_slater} ensures that  the Slater condition holds for  $M( \gma)$ with constant $\zeta$. 
Furthermore, since $\gamma^c_i = {\gamma^c_i}' + \frac{K_1}{\sqrt{T_s}} > {\gamma^c_i}' $,  the Slater condition  also holds for $M( \gma')$ with constant $\zeta$, as formally stated in Lemma \ref{lma_2nd_slater}. 
From the proof therein,  there exists some $\bar{\pP}(t) \in \widetilde{\Dd}(\gma, \qQ) \subseteq  \widetilde{\Dd}(\gma', \qQ) $ such that the Slater condition with constant $\zeta$ holds for both $M(\gma) $ and $ M(\gma')$. Thus, $\bar{h}_t(\bar{\pP}(t)) \overset{\eqref{def_h_bar_t}}{=} {h}_t(\bar{\pP}(t)) \geq V 
 f(\bar{\pP}(t)) \geq V f_{min}$.
Now, note that: 
\begin{align*}
    K_3&= \frac{\| \QQ(t) \|_\infty  |\Kk| T_s  }{\zeta} + \frac{V (f_{max} - f_{min})}{ \zeta} \\
    &\geq \frac{\| \QQ(t) \|_\infty  |\Kk| T_s  + V f_{max} - \bar{h}_t(\bar{\pP}(t))}{\zeta},
\end{align*}
where, by Lemma \ref{lma_bound_h}, $\| \QQ(t) \|_\infty  |\Kk| T_s  + V f_{max}$ serves as  an upper bound for the optimal values of either problems \eqref{proove_distort1} or \eqref{proove_distort1}.
By the result on linear exterior penalty \cite{penalty1}, for any $r\geq K_3$, the identities \eqref{lma_penalty_main_eq1} and \eqref{lma_penalty_main_eq2} hold.


\subsection{Proof of Lemma \ref{lma_bound_f}}
\label{appen_lma_bound_f}

For any $ \pP \in [0, 1]^{T_s |E| |\Cc|}$, if $\pP \not\in S(\lmb)$, then $\bar{f}(\pP) = -\infty$ by \eqref{indicator1.5} and the bound \eqref{bound_f_eq} trivially holds. Else if $\pP \in S(\lmb)$, then $  \mathbbm{1}_{S(\lmb)}(\pP) =0$ and thus $ \bar{f}(\pP) = f(\pP) \leq f_{max}$.


\subsection{Proof of Lemma \ref{lma_penalty_num}}
\label{appen_lma_penalty_num}

Assumption \ref{assum_slater} ensures that  the Slater condition holds for  $M( \gma)$ with constant $\zeta$. 
Furthermore, since $\gamma^c_i = {\gamma^c_i}' + \frac{K_1}{\sqrt{T_s}} > {\gamma^c_i}' $,  the Slater condition  also holds for $M( \gma')$ with constant $\zeta$, as formally stated in Lemma \ref{lma_2nd_slater}. 
From the proof therein,  there exists some $\bar{\pP} \in  \Mm(\lmb, \gma, \qQ) \subseteq  \Mm(\lmb, \gma', \qQ)$ such that the Slater condition with constant $\zeta$ holds for both $M(\gma) $ and $ M(\gma')$. Thus, $\bar{f}(\bar{\pP}) \overset{\eqref{def_f_bar}}{=} f(\bar{\pP}) \geq f_{min}$. Now, note that: 
\begin{align*}
    \frac{ f_{max} - f_{min}}{\zeta} \geq \frac{ f_{max} -  \bar{f}(\bar{\pP}) }{\zeta},
\end{align*}
where, by Lemma \ref{appen_lma_bound_f}, $f_{max}$ serves as an upper bound for the optimal values of either problems \eqref{proof_lma_num_distort} or \eqref{proof_lma_num_distort2}.
By the result on linear exterior penalty \cite{penalty1}, for any $r\geq \frac{f_{max} -f_{min}}{\zeta}$, the identities \eqref{lma_penalty_num_eq1} and \eqref{lma_penalty_num_eq2} hold.

\subsection{Proof of Lemma \ref{lma_bound_f_tilde}}
\label{appen_lma_bound_f_tilde}

For any $ \pP \in [0, 1]^{T_s |E| |\Cc|}$, if $\pP \not\in \widetilde{\Dd}(\gma', \qQ)$, then $\tilde{f}(\pP) = -\infty$ by \eqref{indicator2} and the bound \eqref{bound_f_tilde_eq} trivially holds. Else if $\pP \in \widetilde{\Dd}(\gma', \qQ)$, then $  \mathbbm{1}_{\widetilde{\Dd}(\gma', \qQ)}(\pP) =0$ and thus $ \tilde{f}(\pP) = f(\pP) \leq f_{max}$.

\subsection{Proof of Lemma \ref{lma_final_bounds}}
\label{appen_lma_final_bounds}


First, we have the following Lemma. 

\begin{lemma}
\label{lma_distance_from_stability_region}
Consider the quantity:
\begin{align}
    \varepsilon' =  0.5 \sqrt{T_s} K_4 |\Kk| \delta_{min} . \label{def_eps_prime}
\end{align}
Then, we have $\lmb + \varepsilon' \eE \in  \widetilde{\Lmb}(\gma, \qQ)$ and $\lmb_\varepsilon + \varepsilon' \eE \in  \widetilde{\Lmb}(\gma, \qQ)$, where $\eE = \{ \mathbbm{1}\{\lambda^c_i >0\}\}_{i\in \Ii, c\in \Cc}$. 
\end{lemma}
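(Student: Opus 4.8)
\textbf{Proof proposal for Lemma~\ref{lma_distance_from_stability_region}.}

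The plan is to exhibit a single scheduling vector $\pP \in \widetilde{\Dd}(\gma,\qQ)$ that supports both of the perturbed arrival-rate vectors $\lmb + \varepsilon'\eE$ and $\lmb_\varepsilon + \varepsilon'\eE$; by the characterization of $\widetilde{\Lmb}(\gma,\qQ)$ in Theorem~\ref{thm_charac_stability_region}, it then suffices to check that $\lambda^c_i + \varepsilon' \le \sum_{k\in N(i)}\sum_{\tau=0}^{T_s-1} r_{ki} p^c_{ki}(\tau)$ (and the analogous inequality with $[\lmb_\varepsilon]^c_i$) holds coordinatewise for those $(i,c)$ with $\lambda^c_i>0$. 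Since $\varepsilon = K_4/\sqrt{T_s}$, we have $\lmb_\varepsilon = \tfrac{1}{1-0.5\varepsilon}\lmb$, and because $\lmb \in (1-\varepsilon)\widetilde{\Lmb}(\gma,\qQ)$ by hypothesis in Theorem~\ref{thrm_mdp}, we get $\lmb_\varepsilon \in \tfrac{1-\varepsilon}{1-0.5\varepsilon}\widetilde{\Lmb}(\gma,\qQ) \subseteq \widetilde{\Lmb}(\gma,\qQ)$; the same argument (even more directly) gives $\lmb \in \widetilde{\Lmb}(\gma,\qQ)$. So for each of $\lmb$ and $\lmb_\varepsilon$ there is, by Lemma~\ref{lma_linear_control}(i), a corresponding $\pP \in \widetilde{\Dd}(\gma,\qQ)$ saturating the stability inequality; the task reduces to showing that there is enough \emph{slack} in that inequality to absorb the additive shift $\varepsilon' \eE$.

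First I would make the slack explicit. Recall the minimum load factor $\delta(\lmb) = \min_{i,c:\,\lambda^c_i>0}\{\lambda^c_i/(|\Kk| T_s)\} \ge \delta_{min}$. For the policy $\pi_\varepsilon$ that supports $\lmb_\varepsilon$ with a stationary per-time-frame decision $\pP$, the computation already carried out in \eqref{thrm_mdp8.1} shows $\sum_{\tau=0}^{T_s-1} r_{ki} p^c_{ki}(\tau) \ge |\Kk| T_s \delta_{min}$ for every $(i,c)$ with $\lambda^c_i>0$; but in fact the stationary policy supports $\lmb_\varepsilon = \tfrac{1}{1-0.5\varepsilon}\lmb$, which is strictly larger than $\lmb$ in every active coordinate, so $\sum_{\tau} r_{ki} p^c_{ki}(\tau) \ge [\lmb_\varepsilon]^c_i = \tfrac{\lambda^c_i}{1-0.5\varepsilon}$. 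Hence the slack over $\lmb$ itself is at least $\tfrac{\lambda^c_i}{1-0.5\varepsilon} - \lambda^c_i = \tfrac{0.5\varepsilon}{1-0.5\varepsilon}\lambda^c_i \ge 0.5\varepsilon\,\lambda^c_i \ge 0.5\varepsilon\, |\Kk| T_s \delta_{min}$. Substituting $\varepsilon = K_4/\sqrt{T_s}$ gives slack at least $0.5\,(K_4/\sqrt{T_s})\,|\Kk| T_s \delta_{min} = 0.5\sqrt{T_s}\,K_4\,|\Kk|\,\delta_{min} = \varepsilon'$, which is exactly the definition \eqref{def_eps_prime}. Therefore $\lambda^c_i + \varepsilon' \le \sum_{\tau} r_{ki} p^c_{ki}(\tau)$, so this same $\pP$ (still in $\widetilde{\Dd}(\gma,\qQ)$) witnesses $\lmb + \varepsilon'\eE \in \widetilde{\Lmb}(\gma,\qQ)$ via Theorem~\ref{thm_charac_stability_region}.

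For the second claim, $\lmb_\varepsilon + \varepsilon'\eE \in \widetilde{\Lmb}(\gma,\qQ)$, I would instead start from a policy supporting a rate vector strictly larger than $\lmb_\varepsilon$. Since $\lmb \in (1-\varepsilon)\widetilde{\Lmb}(\gma,\qQ)$, the vector $\tfrac{1}{1-\varepsilon}\lmb \in \widetilde{\Lmb}(\gma,\qQ)$, and $\tfrac{1}{1-\varepsilon}\lmb$ exceeds $\lmb_\varepsilon = \tfrac{1}{1-0.5\varepsilon}\lmb$ coordinatewise, with gap $\big(\tfrac{1}{1-\varepsilon} - \tfrac{1}{1-0.5\varepsilon}\big)\lambda^c_i = \tfrac{0.5\varepsilon}{(1-\varepsilon)(1-0.5\varepsilon)}\lambda^c_i \ge 0.5\varepsilon\,\lambda^c_i \ge 0.5\varepsilon\, |\Kk| T_s\delta_{min} = \varepsilon'/\sqrt{T_s}\cdot\sqrt{T_s}$, i.e.\ again at least $\varepsilon'$. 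Taking the corresponding stationary $\pP \in \widetilde{\Dd}(\gma,\qQ)$ that saturates stability for $\tfrac{1}{1-\varepsilon}\lmb$, Lemma~\ref{lma_linear_control}(i) gives $\sum_{\tau} r_{ki} p^c_{ki}(\tau) \ge [\tfrac{1}{1-\varepsilon}\lmb]^c_i \ge [\lmb_\varepsilon]^c_i + \varepsilon'$ for all active $(i,c)$, and Theorem~\ref{thm_charac_stability_region} concludes $\lmb_\varepsilon + \varepsilon'\eE \in \widetilde{\Lmb}(\gma,\qQ)$. The main obstacle I anticipate is bookkeeping the three nearly-equal fractions $\tfrac{1}{1-\varepsilon}$, $\tfrac{1}{1-0.5\varepsilon}$, $\tfrac{0.5\varepsilon}{1-0.5\varepsilon}$ carefully enough to land the clean constant $\varepsilon' = 0.5\sqrt{T_s}K_4|\Kk|\delta_{min}$ (one should verify $\varepsilon<1$, equivalently $T_s > K_4^2$, holds in the regime of interest, or state it as a standing assumption on $T_s$); the rest is a direct invocation of Theorem~\ref{thm_charac_stability_region} and Lemma~\ref{lma_linear_control}.
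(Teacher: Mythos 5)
Your proof is correct and takes essentially the same route as the paper: in both arguments one takes a stationary $\pP \in \widetilde{\Dd}(\gma,\qQ)$ supporting a slightly inflated rate vector ($\lmb_\varepsilon$, resp. $\tfrac{1}{1-\varepsilon}\lmb$), lower-bounds the resulting per-coordinate slack by $0.5\varepsilon\,|\Kk| T_s \delta_{min} = \varepsilon'$ using $\delta(\lmb)\ge\delta_{min}$, and concludes via the characterization in Theorem \ref{thm_charac_stability_region}. The only cosmetic difference is that the paper notes $\lmb \le \lmb_\varepsilon$ and therefore proves only the second inclusion, whereas you handle both separately; your flagged requirement $\varepsilon<1$ (i.e.\ $T_s$ large) is exactly the regime the paper implicitly assumes.
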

\begin{proof}
  As $\lmb \leq \lmb_\varepsilon$, it suffices to show that  $\lmb_\varepsilon + \varepsilon' \eE \in  \widetilde{\Lmb}(\gma, \qQ)$.  
  Since  $\lmb_\varepsilon = \frac{1}{1-0.5 \varepsilon} \lmb \subseteq \frac{1-\varepsilon}{1-0.5 \varepsilon} \widetilde{\Lmb}(\gma, \qQ)$ or equivalently $ \frac{1-0.5\varepsilon}{1- \varepsilon}\lmb_\varepsilon \in \widetilde{\Lmb}(\gma, \qQ)$, by Theorem \ref{thm_charac_stability_region} there exists some $\pP \in  \widetilde{\Dd}(\gma, \qQ)$ such that $\forall i\in \Ii, c\in \Cc$:
  \begin{align}
      \frac{1-0.5\varepsilon}{1- \varepsilon}&[\lmb_\varepsilon]^c_i \leq  \sum_{k\in N(i)} \sum_{\tau=0}^{T_s-1} r_{ki} p^c_{ki}(\tau) \label{proof_lma_distance_from_stability_region0} \\ 
      \nonumber 
      &[\lmb_\varepsilon]^c_i \leq \Big(1 -\frac{0.5\varepsilon}{1-\varepsilon} \Big) \,  \Big( \sum_{\tau=0}^{T_s-1} r_{ki} p^c_{ki}(\tau)\Big) \\
      \therefore  &[\lmb_\varepsilon]^c_i + 0.5\varepsilon  \Big( \sum_{\tau=0}^{T_s-1} r_{ki} p^c_{ki}(\tau)\Big) \leq   \sum_{\tau=0}^{T_s-1} r_{ki} p^c_{ki}(\tau) .\label{proof_lma_distance_from_stability_region1}
  \end{align}
From \eqref{proof_lma_distance_from_stability_region0} and noting that $ \delta(\lmb) = \min_{i\in \Ii, c\in \Cc: \lambda^c_i > 0}\big\{ \frac{\lambda^c_i}{|\Kk| T_s}\big\} \geq \delta_{min}$, we have:
\begin{align}
\nonumber
  \sum_{k\in N(i)} \sum_{\tau=0}^{T_s-1} r_{ki}  p^c_{ki}( \tau)&\geq    \frac{1-0.5\varepsilon}{1- \varepsilon}[\lmb_\varepsilon]^c_i =   \frac{1}{1- \varepsilon}\lambda^c_i \geq \geq |\Kk| T_s \delta(\lmb) \\
  &\geq |\Kk| T_s \delta_{min}, \forall i\in \Ii, c\in \Cc: \lambda^c_i > 0.\label{proof_lma_distance_from_stability_region2}
\end{align}
Applying \eqref{proof_lma_distance_from_stability_region2} to the LHS of \eqref{proof_lma_distance_from_stability_region1} and using $\varepsilon = \frac{K_4}{\sqrt{T_s}}$ and \eqref{def_eps_prime}, we obtain that:
\begin{align}
   [\lmb_\varepsilon]^c_i+ \varepsilon' \leq \sum_{k\in N(i)} \sum_{\tau=0}^{T_s-1} r_{ki}  p^c_{ki}( \tau), \forall i\in I, c\in \Cc: \lambda^c_i > 0.\label{proof_lma_distance_from_stability_region3}
\end{align}
Since $\pP \in  \widetilde{\Dd}(\gma, \qQ)$ satisfies \eqref{proof_lma_distance_from_stability_region3}, by the characterization of $\widetilde{\Lmb}(\gma, \qQ)$ as in Theorem \ref{thm_charac_stability_region}, we conclude that $\lmb_\varepsilon + \varepsilon' \eE \in  \widetilde{\Lmb}(\gma, \qQ)$.
\end{proof}


Back to the main proof, we now proceed to derive bounds for  $  F(\pP^*) -  F(\pP^{\pi_1})$ and $ F(\pP^{\pi_1}) -  F(\pP^{\pi_\varepsilon})$.

\textbf{Bounding $  F(\pP^*) -  F(\pP^{\pi_1})$:} From \eqref{thrm_mdp1}, we have:
\begin{align}
     F(\pP^*) -  F(\pP^{\pi_1}) \leq  F(\pP^{\pi_2}) -  F(\pP^{\pi_1}). \label{proof_appen_lma_final_bounds1}
\end{align}
Recall that $\pi_1$ and $\pi_2$ respectively are optimal policies for $(\AOP_{\lmb, \gma, \qQ})$ and 
 $(\AOP_{\lmb, \gma - \frac{K_1}{\sqrt{T_s}} \one, \qQ})$. 
 Furthermore, we have $\lmb + \varepsilon' \one \in  \widetilde{\Lmb}(\gma, \qQ)$ with $\varepsilon'$ defined in \eqref{def_eps_prime} from Lemma \ref{lma_distance_from_stability_region}.
 By invoking Lemma \ref{lma_num_distort}, we have:
 \begin{align}
 \nonumber
      F(\pP^{\pi_2}) -  F(\pP^{\pi_1}) &\leq \frac{f_{max} - f_{min}}{\zeta} \, \|\gma - (\gma - \frac{K_1}{\sqrt{T_s}} \one) \|_1 \\
      \nonumber
      &= \frac{f_{max} - f_{min}}{\zeta} \, \frac{K_1 |\Ii| |\Cc|}{\sqrt{T_s}}\\
      &= O\bigg(\frac{1}{\sqrt{T_s}}\bigg). \label{proof_appen_lma_final_bounds2}
 \end{align}
 Plugging \eqref{proof_appen_lma_final_bounds2} into \eqref{proof_appen_lma_final_bounds1}, we  conclude the required statement \eqref{lma_final_bounds_1} of the Lemma.

\textbf{Bounding $ F(\pP^{\pi_1}) -  F(\pP^{\pi_\varepsilon})$:} Recall that $\pi_1$ and $\pi_\varepsilon$ respectively are optimal policies for $(\AOP_{\lmb, \gma, \qQ})$ and 
 $(\AOP_{\lmb_\varepsilon, \gma, \qQ})$. Furthermore, we have $\lmb + \varepsilon' \one \in  \widetilde{\Lmb}(\gma, \qQ)$ and $\lmb_\varepsilon + \varepsilon' \one \in  \widetilde{\Lmb}(\gma, \qQ)$ with $\varepsilon' $ defined in \eqref{def_eps_prime} from Lemma \ref{lma_distance_from_stability_region}. By invoking Lemma \ref{lma_num_distort} and  using $\varepsilon = \frac{K_4}{\sqrt{T_s}}$, we have:
 \begin{align*}
 \nonumber
     F(\pP^{\pi_1}) -  F(\pP^{\pi_\varepsilon}) &\leq \frac{f_{max} - f_{min}}{\varepsilon'} \, \|\lmb - \lmb_\varepsilon \|_\infty \\
     &= \frac{f_{max} - f_{min}}{  0.5 \sqrt{T_s} K_4 |\Kk| \delta_{min} }  \, \|\lmb - \frac{1}{1-0.5\varepsilon} \lmb\|_\infty\\
     & = \frac{f_{max} - f_{min}}{    0.5 \sqrt{T_s} K_4 |\Kk| \delta_{min} }  \, \frac{0.5\varepsilon}{1-0.5\varepsilon} \| \lmb\|_\infty \\
     &= \frac{(f_{max} - f_{min})  \| \lmb\|_\infty }{   0.5 {T_s} K_4 |\Kk| \delta_{min}(1-0.5\varepsilon)} \\
     &= O\bigg(\frac{1}{{T_s}}\bigg),
 \end{align*}
which concludes the required statement \eqref{lma_final_bounds_2} of the Lemma.

\section{Supplementary Lemmas}


\begin{lemma}
\label{min_ineq}
For any $ x, y \in \Br$, we have:
\begin{align}
    | \min\{0, x\} - \min\{0, y\} | \leq |x-y|.
\end{align}    
\end{lemma}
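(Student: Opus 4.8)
\textbf{Proof proposal for Lemma \ref{min_ineq}.} The plan is to reduce the claim to the (reverse) triangle inequality by rewriting $\min\{0,t\}$ in closed form. First I would record the elementary identity $\min\{0,t\} = \tfrac12\big(t - |t|\big)$ for all $t\in\Br$, which is verified by checking the two cases $t\ge 0$ (both sides equal $0$) and $t<0$ (both sides equal $t$).

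Next, applying this identity to $x$ and to $y$ and subtracting gives
\begin{align*}
    \min\{0,x\} - \min\{0,y\} = \tfrac12\Big[(x-y) - \big(|x|-|y|\big)\Big].
\end{align*}
Taking absolute values and using the triangle inequality yields
\begin{align*}
    \big|\min\{0,x\} - \min\{0,y\}\big| \le \tfrac12\Big[|x-y| + \big||x|-|y|\big|\Big],
\end{align*}
and the reverse triangle inequality $\big||x|-|y|\big| \le |x-y|$ bounds the second term by $|x-y|$, so the right-hand side is at most $\tfrac12\big(|x-y| + |x-y|\big) = |x-y|$, which is the desired conclusion.

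There is no real obstacle here; the only point requiring a line of care is the verification of the closed-form identity for $\min\{0,t\}$ and the correct invocation of the reverse triangle inequality. As an alternative (if one prefers to avoid the closed form), I could instead argue by cases on the sign of $y$: when $y\ge 0$ one uses $\min\{0,x\}\le 0 = \min\{0,y\}$ together with $\min\{0,x\}\ge x-|x-y|$-type bounds, and when $y<0$ one uses $\min\{0,t\}\le t$ for all $t$; but the identity-based argument is shorter and cleaner, so that is the route I would take.
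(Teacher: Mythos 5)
Your proof is correct, but it takes a different route from the paper. You reduce the claim to the triangle and reverse triangle inequalities via the closed-form identity $\min\{0,t\}=\tfrac{1}{2}(t-|t|)$, whereas the paper argues directly by cases: it assumes WLOG $x\ge y$ and checks the three configurations $x\ge y\ge 0$, $x\ge 0\ge y$, and $0\ge x\ge y$, verifying the inequality in each (it is an equality in the last case and follows from $-y\le x-y$ in the middle case). Your identity-based argument is slightly slicker and packages the case analysis inside the standard properties of $|\cdot|$, which makes it easy to generalize (the same computation shows $t\mapsto\min\{0,t\}$ is $1$-Lipschitz, and an analogous identity handles $\max\{0,t\}$); the paper's case split is more elementary and self-contained, requiring nothing beyond the definition of $\min$ and absolute value. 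Both are complete proofs of this $1$-Lipschitz property, so either would serve the role the lemma plays in bounding the penalty-function variation in Lemma \ref{penalty_diff}.
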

\begin{proof}
WLOG, we assume that $x\geq y$. If $x\geq y\geq 0$, we have $ | \min\{0, x\} - \min\{0, y\} | = 0 \leq   |x-y|$. If $x \geq 0 \geq y$, we have $| \min\{0, x\} - \min\{0, y\} | = -y \leq x-y = |x-y|$. If $0\geq x \geq y$, we have $ | \min\{0, x\} - \min\{0, y\} | = |x-y|$.
\end{proof}

\begin{lemma}
\label{lma_cauchy}
 We have the following bound:
 \begin{align}
     \EE[\|\QQ(T)\|_1] \leq \sqrt{2|\Cc| |\Ii| \, \EE[\Ll(\QQ(T))]}  \label{cauchy}
 \end{align}
\end{lemma}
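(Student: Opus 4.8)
The plan is to chain together Cauchy--Schwarz and Jensen's inequality, exploiting the fact that the Lyapunov function is (half) the squared Euclidean norm of the queue vector. Concretely, observe first that by definition of $\Ll$ in \eqref{quadra_lyapu} we have $2\Ll(\QQ(T)) = \sum_{c\in\Cc}\sum_{i\in\Ii} Q^c_i(T)^2 = \|\QQ(T)\|_2^2$, so the right-hand side of \eqref{cauchy} is just $\sqrt{|\Cc||\Ii|\,\EE[\|\QQ(T)\|_2^2]}$.

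Next I would apply the Cauchy--Schwarz inequality to the sum of the $|\Cc||\Ii|$ nonnegative entries of $\QQ(T)$: since the queue lengths are nonnegative, $\|\QQ(T)\|_1 = \sum_{c\in\Cc}\sum_{i\in\Ii} Q^c_i(T) \le \sqrt{|\Cc||\Ii|}\,\big(\sum_{c\in\Cc}\sum_{i\in\Ii} Q^c_i(T)^2\big)^{1/2} = \sqrt{|\Cc||\Ii|}\,\|\QQ(T)\|_2 = \sqrt{2|\Cc||\Ii|\,\Ll(\QQ(T))}$. This is a deterministic, pathwise inequality.

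Finally I would take expectations of both sides and push the expectation inside the square root using Jensen's inequality (the map $x\mapsto\sqrt{x}$ is concave on $[0,\infty)$, and $\Ll(\QQ(T))\ge 0$), giving $\EE[\|\QQ(T)\|_1] \le \sqrt{2|\Cc||\Ii|}\;\EE\big[\sqrt{\Ll(\QQ(T))}\big] \le \sqrt{2|\Cc||\Ii|}\,\sqrt{\EE[\Ll(\QQ(T))]} = \sqrt{2|\Cc||\Ii|\,\EE[\Ll(\QQ(T))]}$, which is exactly \eqref{cauchy}. There is no real obstacle here; the only point requiring a moment of care is ensuring the nonnegativity of the $Q^c_i(T)$ (which holds by the queue dynamics \eqref{queue_dynamics}) so that both the Cauchy--Schwarz step on the $\ell_1$ norm and the Jensen step on $\sqrt{\cdot}$ are valid.
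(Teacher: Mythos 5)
Your proposal is correct and follows essentially the same route as the paper: a pathwise Cauchy--Schwarz (the paper invokes it as H\"older) step giving $\|\QQ(T)\|_1 \le \sqrt{2|\Cc||\Ii|\,\Ll(\QQ(T))}$, followed by Jensen to move the expectation inside the square root. The only cosmetic difference is that you apply Jensen to the concave map $x\mapsto\sqrt{x}$, whereas the paper applies it as $\EE[\|\QQ(T)\|_1]^2\le\EE[\|\QQ(T)\|_1^2]$; the two are interchangeable here.
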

\begin{proof}
     By Holder's inequality, we first have:
 \begin{align}
 \nonumber
 \|\QQ(T)\|_1 = \sum_{i\in \Ii, c\in \Cc} Q^c_i(T) &\leq \sqrt{|\Cc| |\Ii|  \cdot  \big(  \sum_{i\in \Ii, c\in \Cc} Q^c_i(T)^2   \big)}\\
 &= \sqrt{2 |\Cc| |\Ii|  \cdot  \Ll(\QQ(T)) }. \label{proof_lma_cauchy1}
 \end{align}
 By Jensen's inequality, we obtain that:
 \begin{align*}
      \EE\big[\|\QQ(T)\|_1\big]^2&\leq  \EE[\|\QQ(T)\|_1^2  ] \overset{\eqref{proof_lma_cauchy1}}{\leq} 2 |\Cc| |\Ii|  \, \EE[  \Ll(\QQ(T))]  \\
     \therefore  \EE[\|\QQ(T)\|_1] &\leq \sqrt{2|\Cc| |\Ii| \, \EE[\Ll(\QQ(T))]} ,
 \end{align*}
 which concludes the proof of the Lemma.
\end{proof}

\begin{lemma}
\label{lma_2nd_slater}
    Under Assumption \ref{assum_slater}, there exists $\pP(t) \in \widetilde{\Dd}(\gma', \qQ)$ such that $\forall i\in \Ii, c\in \Cc$:
\begin{align}
      |\Kk|T_s ({\gamma^c_i}' + {\zeta}) \leq \sum_{k\in N(i)} \sum_{\tau=0}^{T_s-1} p^c_{ki}(t, \tau) - B^c_i(\Gamma^c_i, \pP(t)).
\end{align}
\end{lemma}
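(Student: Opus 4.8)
The plan is to reuse, essentially verbatim, the Slater point supplied by Assumption \ref{assum_slater}, exploiting two facts: that $\gma' = \gma - \frac{K_1}{\sqrt{T_s}}\one$ is componentwise no larger than $\gma$, and that $K_1>0$ strictly. First I would fix $\zeta>0$ and $\pP \in \widetilde{\Dd}(\gma,\qQ)$ as guaranteed by Assumption \ref{assum_slater}, so that for every $(i,c)\in\Ss$,
\[
|\Kk|T_s(\gamma^c_i + \zeta) \;\le\; \sum_{k\in N(i)}\sum_{\tau=0}^{T_s-1} r_{ki}\, p^c_{ki}(t,\tau) \,-\, B^c_i(\Gamma^c_i, \pP(t)).
\]

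Next I would check that this same $\pP$ lies in $\widetilde{\Dd}(\gma',\qQ)$. By \eqref{def_approx_region}, membership in $\widetilde{\Dd}(\cdot,\qQ)$ is governed by \eqref{prob_constraint} and \eqref{pmax1} (neither of which involves the $\gma$-vector) together with the linearized constraint \eqref{linearized_constraint1}. The right-hand side of \eqref{linearized_constraint1}, namely $\sum_{k\in N(i)}\sum_\tau r_{ki}p^c_{ki}(t,\tau) - B^c_i(\Gamma^c_i,\pP(t))$, depends on $\qQ$ only through $\Gamma^c_i$, which is pinned down by \eqref{gamma_choice} and is unaffected by replacing $\gma$ with $\gma'$. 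Hence \eqref{linearized_constraint1} is monotone in $\gma$: lowering each $\gamma^c_i$ to ${\gamma^c_i}'\le\gamma^c_i$ only weakens the constraint, so $\widetilde{\Dd}(\gma,\qQ)\subseteq\widetilde{\Dd}(\gma',\qQ)$ and in particular $\pP\in\widetilde{\Dd}(\gma',\qQ)$.

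Finally I would establish the strict inequality with $\gma'$. From \eqref{k1_const}, $K_1>0$ (using $\|\qQ\|_\infty\in(0,1)$, which holds whenever $\Ss\ne\emptyset$, the only case where the assertion is non-vacuous), so ${\gamma^c_i}' = \gamma^c_i - \frac{K_1}{\sqrt{T_s}} < \gamma^c_i$ and therefore ${\gamma^c_i}' + \zeta < \gamma^c_i + \zeta$. Chaining this with the displayed Slater inequality of Assumption \ref{assum_slater} yields, for every $(i,c)\in\Ss$,
\[
|\Kk|T_s({\gamma^c_i}' + \zeta) \;\le\; |\Kk|T_s(\gamma^c_i + \zeta) \;\le\; \sum_{k\in N(i)}\sum_{\tau=0}^{T_s-1} r_{ki}\, p^c_{ki}(t,\tau) \,-\, B^c_i(\Gamma^c_i, \pP(t)),
\]
which is the asserted inequality; for indices $(i,c)\notin\Ss$ the corresponding constraint is absent from $M(\gma')$, cf.\ \eqref{m_gamma_def1}, so nothing more is needed. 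This $\pP$ thus witnesses the Slater condition for $M(\gma')$ with the same constant $\zeta$, completing the argument.

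There is no substantive obstacle here: every step is an immediate consequence of the monotonicity of \eqref{linearized_constraint1} in $\gma$ and the strict positivity of $K_1$. The only point meriting an explicit sentence is the observation that $B^c_i(\Gamma^c_i,\pP(t))$ — and hence the right-hand side of \eqref{linearized_constraint1} — does not change when $\gma$ is replaced by $\gma'$, since $\Gamma^c_i$ is a function of $q^c_i$ (via \eqref{gamma_choice}) and of $T_s$, but not of $\gma$; this is what makes the inclusion $\widetilde{\Dd}(\gma,\qQ)\subseteq\widetilde{\Dd}(\gma',\qQ)$ genuine rather than merely formal.
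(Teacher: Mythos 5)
Your proposal is correct and takes essentially the same route as the paper's proof: pick the Slater point $\pP$ from Assumption \ref{assum_slater}, note that $\widetilde{\Dd}(\gma,\qQ)\subseteq\widetilde{\Dd}(\gma',\qQ)$ so the same point is feasible for the perturbed domain, and conclude the inequality directly from \eqref{slater} since ${\gamma^c_i}'\leq\gamma^c_i$. Your added remarks (that $\Gamma^c_i$ and $B^c_i(\Gamma^c_i,\pP(t))$ do not depend on $\gma$, making the inclusion a genuine monotonicity statement, and the handling of indices outside $\Ss$) merely spell out what the paper leaves implicit.
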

\begin{proof}
Since Assumption \ref{assum_slater} holds, there exists $\pP(t) \in \widetilde{\Dd}(\gma, \qQ) \subseteq \widetilde{\Dd}(\gma', \qQ) $  that  satisfies \eqref{prob_constraint}, \eqref{pmax1} and \eqref{slater}. Noting that $\gma' = \gma - \frac{K_1}{\sqrt{T_s}} \one $, the inequality \eqref{slater} directly implies that:
\begin{align*}
      |\Kk|T_s ( {\gamma^c_i}' + \zeta ) \leq \sum_{k\in N(i)} \sum_{\tau=0}^{T_s-1} p^c_{ki}(t, \tau) - B^c_i(\Gamma^c_i, \pP(t)),
\end{align*}
which concludes the proof of the Lemma.
\end{proof}


\begin{lemma}
\label{lma_lower_bound_service}
For $\pP \in  \widetilde{\Dd}(\gma, \qQ)$, we have the following:
\begin{align}
    \sum_{k\in N(i)} \sum_{\tau=0}^{T_s-1} r_{ki}  p^c_{ki}( \tau)\geq |\Kk| T_s \gamma_{min} , \forall (i, c)\in \Ss. \label{lower_bound_service}
\end{align}
\end{lemma}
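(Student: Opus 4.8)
The plan is to read the claimed bound directly off the linearized constraint built into the definition of $\widetilde{\Dd}(\gma, \qQ)$, after discarding the non-negative upper-confidence term.

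First I would invoke the definition \eqref{def_approx_region} of the linearized feasible domain: since $\pP \in \widetilde{\Dd}(\gma, \qQ)$, the constraint \eqref{linearized_constraint1} holds, i.e.\ for every $(i,c)\in\Ss$,
\[
    |\Kk| T_s \gamma^c_i \;\le\; \sum_{k\in N(i)} \sum_{\tau=0}^{T_s-1} r_{ki}\, p^c_{ki}(\tau) \;-\; B^c_i(\Gamma^c_i, \pP).
\]
Next I would argue that $B^c_i(\Gamma^c_i, \pP) \ge 0$. Indeed, by its definition \eqref{big_b1}, $B^c_i(\Gamma^c_i, \pP)$ is a maximum over non-empty index selections of a sum of the terms $\hat p^c_{ki}(t,\tau) = \max\{r_{ki}p^c_{ki}(t,\tau),\,1-r_{ki}p^c_{ki}(t,\tau)\}$ with non-negative weights; since $\hat p^c_{ki}(t,\tau) \ge \tfrac12 > 0$ and $\Gamma^c_i > 0$ (because $(i,c)\in\Ss$ means $q^c_i>0$, so $\Gamma^c_i = \sqrt{2|\Kk|T_s\log((1-q^c_i)^{-1})}>0$ by \eqref{gamma_choice}), the quantity $B^c_i(\Gamma^c_i, \pP)$ is in fact strictly positive, hence certainly non-negative.

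Dropping the non-negative term $B^c_i(\Gamma^c_i, \pP)$ from the right-hand side of the displayed inequality then yields $\sum_{k\in N(i)} \sum_{\tau=0}^{T_s-1} r_{ki}\, p^c_{ki}(\tau) \ge |\Kk| T_s \gamma^c_i$ for each $(i,c)\in\Ss$, and since $\gamma_{min} = \min_{(i,c)\in\Ss}\gamma^c_i \le \gamma^c_i$, the claimed bound \eqref{lower_bound_service} follows. This argument is essentially immediate from the construction of $\widetilde{\Dd}(\gma,\qQ)$; the only point worth a sentence is the sign of the UCB term $B^c_i$, which is clear once one observes that each $\hat p^c_{ki}$ is positive, so there is no real obstacle here.
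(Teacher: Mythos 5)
Your proposal is correct and follows exactly the paper's own argument: read off the linearized constraint \eqref{linearized_constraint1} from the definition of $\widetilde{\Dd}(\gma,\qQ)$, drop the non-negative UCB term $B^c_i(\Gamma^c_i,\pP)$, and use $\gamma^c_i \geq \gamma_{min}$. Your extra remark on why $B^c_i \geq 0$ (each $\hat p^c_{ki} \geq \tfrac12$ and $\Gamma^c_i > 0$ for $(i,c)\in\Ss$) just makes explicit a step the paper leaves implicit.
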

\begin{proof}
Since   $\pP \in  \widetilde{\Dd}(\gma, \qQ)$, we have $\forall (i, c)\in \Ss$:
      \begin{align*}
  \nonumber
      & |\Kk|T_s \gamma^c_i \leq \sum_{k\in N(i)} \sum_{\tau=0}^{T_s-1} r_{ki}  p^c_{ki}( \tau) - B^c_i(\Gamma^c_i, \pP) \\
      \nonumber
       \therefore  &\sum_{k\in N(i)} \sum_{\tau=0}^{T_s-1} r_{ki}  p^c_{ki}( \tau)\geq |\Kk| T_s \gamma^c_{i} \geq |\Kk| T_s \gamma_{min},
  \end{align*}
  which concludes the proof of the Lemma. 
\end{proof}


\section{Properties of Convex Optimization over Linear Constraints}

Let $ f:\mathbb{R}^n \to \mathbb{R}\cup\{+\infty\} $ be a proper, convex, and lower semicontinuous function, and let $ A\in\mathbb{R}^{m\times n} $ and $ \bB\in\mathbb{R}^m $. Consider the primal problem:
\[
\begin{array}{ll}
\text{(Primal)} \quad \min\limits_{x\in\mathbb{R}^n} & f(\xX)\\[1mm]
\quad \text{subject to} & A\xX \le \bB.
\end{array}
\]
Suppose there exists a strictly feasible point \(\bar{\xX}\) satisfying:
\begin{align}
A\bar{\xX} &\le \bB - \zeta\,\mathbf{1} \label{slate_appen1}
\end{align}
for some $\zeta > 0$, i.e. the Slater condition holds.
We consider an optimal solution \( \xX^* = \argminE_{\xX \in \mathbb{R}^n: A\xX \leq \bB} f(\xX)\). 

\begin{lemma}
\label{lem:fenchel_duality}
Let $f^*(.)$ be the Frenchel conjugate of $f(.)$.  The dual problem of (Primal) can be written in its Fenchel form as
\[
\text{(Dual)} \quad \sup_{\lmb\ge 0} \left\{ g(\lmb) \triangleq -\bB^\top \lmb - f^*\bigl(-A^\top\lmb\bigr) \right\},
\]
and strong duality holds; that is, the optimal value of (Primal) equals the optimal value of (Dual).
\end{lemma}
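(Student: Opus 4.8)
The plan is to handle the two assertions separately: the Fenchel form of the dual is a one‑line conjugation identity, while strong duality is the genuine content and will be established via the perturbation (value) function, which is precisely where the Slater condition \eqref{slate_appen1} enters.

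\textbf{Step 1 (Fenchel form and weak duality).} For $\lmb\ge 0$ form the Lagrangian $L(\xX,\lmb)=f(\xX)+\lmb^\top(A\xX-\bB)$; its infimum over $\xX$ is the Lagrange dual function, and by the definition of the conjugate $f^*$,
\[
\inf_{\xX\in\mathbb{R}^n}L(\xX,\lmb)=-\bB^\top\lmb+\inf_{\xX}\{f(\xX)+(A^\top\lmb)^\top\xX\}=-\bB^\top\lmb-f^*(-A^\top\lmb)=g(\lmb),
\]
so the Lagrange dual $\sup_{\lmb\ge0}g(\lmb)$ is exactly (Dual). Weak duality is immediate: for primal‑feasible $\xX$ we have $\lmb^\top(A\xX-\bB)\le0$, hence $g(\lmb)\le L(\xX,\lmb)\le f(\xX)$; taking the infimum over feasible $\xX$ and the supremum over $\lmb\ge0$ gives $\sup_{\lmb\ge0}g(\lmb)\le f(\xX^*)=:p^*$, which is finite since $f$ is proper, $\xX^*$ is optimal, and the Slater point $\bar{\xX}$ is feasible.

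\textbf{Step 2 (value function and Slater).} Define $v:\mathbb{R}^m\to\mathbb{R}\cup\{\pm\infty\}$ by $v(\uU)=\inf\{f(\xX):A\xX-\bB\le\uU\}$, so $v(\zero)=p^*$. Averaging near‑optimal points shows $v$ is convex, and monotonicity of the feasible set in $\uU$ shows $v$ is nonincreasing in each coordinate. The Slater point satisfies $A\bar{\xX}-\bB\le-\zeta\one$, hence is feasible for every perturbation with $\|\uU\|_\infty<\zeta$, so $v(\uU)\le f(\bar{\xX})<+\infty$ on that ball. Together with $v(\zero)=p^*$ finite, this makes $v$ proper and bounded above near $\zero$, so $\zero\in\mathrm{int}(\mathrm{dom}\,v)$ and $v$ is subdifferentiable at $\zero$. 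Then choose $-\lmb^*\in\partial v(\zero)$: since $v$ is coordinatewise nonincreasing, every subgradient is coordinatewise $\le\zero$, so $\lmb^*\ge\zero$. For any $\xX$, taking $\uU=A\xX-\bB$ gives $v(\uU)\le f(\xX)$, while the subgradient inequality gives $v(\uU)\ge v(\zero)-(\lmb^*)^\top\uU$; combining and rearranging, $f(\xX)+(A^\top\lmb^*)^\top\xX\ge p^*+\bB^\top\lmb^*$ for all $\xX$. Taking the infimum over $\xX$ yields $-f^*(-A^\top\lmb^*)\ge p^*+\bB^\top\lmb^*$, i.e. $g(\lmb^*)\ge p^*$; with weak duality this forces $g(\lmb^*)=p^*$, so the dual value equals $p^*$ and is attained at $\lmb^*$, which is strong duality.

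The main obstacle is the subdifferentiability of $v$ at $\zero$ in Step 2: this is the only place \eqref{slate_appen1} is essential, and it rests on the standard fact that a proper convex function bounded above on a neighborhood of an interior point of its effective domain is locally Lipschitz there, hence subdifferentiable. I would either cite this from convex analysis (e.g. Rockafellar) or insert its short proof; everything else is routine manipulation of conjugates and subgradients. An alternative route avoiding $v$ is the classical separating‑hyperplane argument applied to the epigraph‑type set $\{(\uU,t):\exists\,\xX\text{ with }A\xX-\bB\le\uU,\ f(\xX)\le t\}$, using Slater to guarantee that the separating functional has a nonzero $t$‑component; it produces the same $\lmb^*$.
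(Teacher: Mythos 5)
Your proof is correct. Step 1 is exactly the paper's argument: form the Lagrangian, recognize the inner infimum as $-f^*(-A^\top\lmb)$, and read off weak duality. Where you diverge is Step 2: the paper does not prove strong duality at all, it simply invokes the standard Slater-based result from convex optimization (citing Boyd and Vandenberghe), whereas you give a self-contained perturbation-function argument -- convexity and coordinatewise monotonicity of $v(\uU)=\inf\{f(\xX):A\xX-\bB\le\uU\}$, the Slater point forcing $v$ to be bounded above on an $\ell_\infty$-ball of radius $\zeta$ around $\zero$ so that $\zero\in\mathrm{int}(\mathrm{dom}\,v)$ and $v$ is subdifferentiable there, and then a subgradient $-\lmb^*\in\partial v(\zero)$ serving as a dual optimizer. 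This buys two things the citation does not make explicit: a fully elementary proof (modulo the standard fact that a proper convex function bounded above near an interior point of its domain is subdifferentiable there, which you correctly flag as the one ingredient to cite or prove), and attainment of the dual optimum with $\lmb^*\ge\zero$, which is exactly the object the paper's subsequent dual-bound theorem presupposes. The minor point to tighten is the claim that $v$ is proper: you should note that if $v$ took the value $-\infty$ anywhere, convexity together with finiteness of $v(\zero)$ and the upper bound $f(\bar{\xX})$ on the ball would force $v(\zero)=-\infty$, a contradiction; with that sentence added the argument is complete.
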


\begin{proof}
Define the Lagrangian for (P) by:
\[
\mathcal{L}(\xX,\lmb) = f(\xX) + \lmb^\top (A\xX - \bB), \quad \lmb \ge 0.
\]
The dual function is then given by:
\[
g(\lmb) = \inf_{\xX\in\mathbb{R}^n} \Bigl\{ f(\xX) + \lmb^\top (A\xX - \bB) \Bigr\} 
= \inf_{\xX\in\mathbb{R}^n} \Bigl\{ f(\xX) + \lmb^\top A\xX \Bigr\} - \bB^\top \lmb.
\]
Recall the definition of the Fenchel conjugate of \(f(.)\):
\[
f^*(\yY) = \sup_{\xX\in\mathbb{R}^n} \Bigl\{ \xX^\top \yY - f(\xX) \Bigr\}.
\]
Thus, we have:
\[
\inf_{\xX\in\mathbb{R}^n} \Bigl\{ f(\xX) + \lmb^\top A\xX \Bigr\}
= -\sup_{\xX\in\mathbb{R}^n} \Bigl\{ -\lmb^\top A\xX - f(x) \Bigr\}
= -f^*\bigl(-A^\top\lmb\bigr).
\]
Thus, the dual function becomes:
\[
g(\lmb) = -\bB^\top\lmb - f^*\bigl(-A^\top\lmb\bigr).
\]
The dual problem is to maximize \( g(\lmb) \) over \( \lmb \ge 0 \), namely,
\[
\sup_{\lmb\ge 0} \Bigl\{ -\bB^\top\lmb - f^*\bigl(-A^\top\lmb\bigr) \Bigr\}.
\]

Under the stated assumptions (convexity of \(f\), lower semicontinuity, and the Slater condition \(A\bar{\xX} < b\)), standard results in convex optimization \cite{Boyd_Vandenberghe_2004} guarantee that strong duality holds. Hence, the optimal values of the primal and dual problems are equal.
\end{proof}

\begin{theorem}[Bound on the Optimal Dual Solution]
\label{thm:dual_bound}
 Let \( \lmb^*\ge 0 \) be an optimal  solution to (Dual).  We have:
\begin{align*}
\|\lmb^*\|_1 \leq \frac{f(\bar{\xX}) - f(\xX^*)}{\zeta}.
\end{align*}
\end{theorem}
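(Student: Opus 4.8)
The plan is to exploit weak duality together with the Slater point. By Lemma~\ref{lem:fenchel_duality}, strong duality holds, so the optimal dual value equals the optimal primal value; in particular $g(\lmb^*) = f(\xX^*)$. The key idea is to lower-bound $g(\lmb^*)$ using the strictly feasible point $\bar{\xX}$ from \eqref{slate_appen1}, which will force $\|\lmb^*\|_1$ to be small.

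First I would recall the weak-duality / Lagrangian inequality: for the Lagrangian $\mathcal{L}(\xX,\lmb) = f(\xX) + \lmb^\top(A\xX - \bB)$, the dual function satisfies $g(\lmb^*) = \inf_{\xX} \mathcal{L}(\xX,\lmb^*) \le \mathcal{L}(\bar{\xX}, \lmb^*) = f(\bar{\xX}) + (\lmb^*)^\top(A\bar{\xX} - \bB)$. Next I would plug in the Slater condition $A\bar{\xX} \le \bB - \zeta \one$: since $\lmb^* \ge 0$, we get $(\lmb^*)^\top(A\bar{\xX} - \bB) \le (\lmb^*)^\top(-\zeta\one) = -\zeta \|\lmb^*\|_1$, using that $\|\lmb^*\|_1 = \one^\top \lmb^*$ for a nonnegative vector. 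Combining, $g(\lmb^*) \le f(\bar{\xX}) - \zeta\|\lmb^*\|_1$.

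Then I would invoke strong duality (Lemma~\ref{lem:fenchel_duality}) to write $f(\xX^*) = g(\lmb^*) \le f(\bar{\xX}) - \zeta \|\lmb^*\|_1$, and rearrange to obtain $\|\lmb^*\|_1 \le \frac{f(\bar{\xX}) - f(\xX^*)}{\zeta}$, which is exactly the claimed bound. I should note $\zeta > 0$ so the division is valid, and that the right-hand side is automatically nonnegative since $\xX^*$ is a minimizer over a set containing $\bar{\xX}$, so $f(\xX^*) \le f(\bar{\xX})$; this keeps the bound consistent with $\|\lmb^*\|_1 \ge 0$.

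There is no real obstacle here — the statement is a standard consequence of strong duality and complementary-slackness-flavored reasoning. The only point requiring a little care is making sure the inequality $(\lmb^*)^\top(A\bar{\xX} - \bB) \le -\zeta\|\lmb^*\|_1$ uses both $\lmb^* \ge 0$ and the coordinatewise bound \eqref{slate_appen1} correctly, and that I am applying $g(\lmb^*) = \inf_\xX \mathcal{L}(\xX, \lmb^*)$ with the correct sign conventions matching the Fenchel form in Lemma~\ref{lem:fenchel_duality}. The whole argument is three lines once these conventions are pinned down.
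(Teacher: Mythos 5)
Your proposal is correct and matches the paper's proof essentially step for step: evaluate the dual function at the Slater point via $g(\lmb^*)=\inf_{\xX}\mathcal{L}(\xX,\lmb^*)\le \mathcal{L}(\bar{\xX},\lmb^*)$, use $\lmb^*\ge 0$ together with $A\bar{\xX}\le \bB-\zeta\one$ to extract $-\zeta\|\lmb^*\|_1$, and close with strong duality. No differences worth noting.
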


\begin{proof}

Since strong duality holds, we have:
\begin{align*}
    f(\xX^*)  = g(\lmb^*) = \inf_{\xX\in\mathbb{R}^n} \mathcal{L}(\xX,\lmb^*) &\leq \mathcal{L}(\bar{\xX},\lmb^*) = f(\bar{\xX}) + \lmb^\top (A \bar{\xX} - \bB)\\
\therefore f(\bar{\xX}) - f(\xX^*) &\ge \sum_{i=1}^m \lambda^*_i \Bigl( b_i - [A\bar{\xX}]_i \Bigr) \\
&\overset{\eqref{slate_appen1}}{\geq } \zeta \sum_{i=1}^m \lambda^*_i = \zeta \,\|\lmb^*\|_1.
\end{align*}
Thus, it follows that $\|\lmb^*\|_1 \le \frac{f(\bar{\xX}) -f(\xX^*)}{\zeta}$.

\end{proof}


\begin{theorem}
\label{thm_perturbation_bound_no_v_in_statement}
Besides $\xX^* = \argminE_{\xX \in \mathbb{R}^n: A\xX \leq \bB} f(\xX)$, we let  \( \tilde{\xX}^*   = \argminE_{A\xX \leq \tilde{\bB}} f(\xX)\)  be  the solution to the primal problem with the perturbed boundary vector $\tilde{\bB} \in\mathbb{R}^m  $. Assume that the Slater condition holds with constant $\zeta$ for both right-hand sides $\bB$  and $\tilde{\bB}$, i.e. $A\bar{\xX} \le \tilde{\bB} - \zeta\,\mathbf{1}$ and $A\bar{\xX}' \le {\bB} - \zeta\,\mathbf{1}$ for some $\bar{\xX}$ and $\bar{\xX}'$. 
Let $\kappa = \frac{1}{\zeta}\max\big\{f(\bar{\xX}) - f(\xX^*), f(\bar{\xX}') - f(\tilde{\xX}^*) \big\}$. We have the following bound:
\[
0\leq  |f(\xX^*) - f(\tilde{\xX}^*)|  \le \kappa \,  \|\tilde{\bB} - \bB\|_\infty.
\]
\end{theorem}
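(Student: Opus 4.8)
The plan is to combine Lagrangian strong duality (Lemma~\ref{lem:fenchel_duality}) with the dual-norm bound of Theorem~\ref{thm:dual_bound}, applied separately to the two programs, and then take a symmetric two-sided comparison. The conceptual reason this works is that, by strong duality, the optimal-value function $V(\cdot)=\inf_{A\xX\le\,\cdot}f(\xX)$ is a pointwise supremum of functions affine in the right-hand side, hence convex, with $-\lmb^*$ a subgradient of $V$ at the corresponding right-hand side; the claimed estimate is then a controlled-subgradient bound, the subgradient magnitude being controlled by Theorem~\ref{thm:dual_bound}. The lower bound $0\le|f(\xX^*)-f(\tilde\xX^*)|$ is trivial, so only the upper bound needs work.

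Concretely, I would first fix an optimal dual solution $\lmb^*\ge\mathbf{0}$ of $\min_{A\xX\le\bB}f(\xX)$. By Lemma~\ref{lem:fenchel_duality}, strong duality gives $f(\xX^*)=\inf_{\xX}\{f(\xX)+(\lmb^*)^\top(A\xX-\bB)\}$. Evaluating the infimand at $\tilde\xX^*$, which satisfies $A\tilde\xX^*\le\tilde\bB$, and splitting $A\tilde\xX^*-\bB=(A\tilde\xX^*-\tilde\bB)+(\tilde\bB-\bB)$, the first bracket is $\le\mathbf{0}$ and pairs nonnegatively against $\lmb^*\ge\mathbf{0}$, so after Holder's inequality
\[
f(\xX^*)\;\le\; f(\tilde\xX^*)+(\lmb^*)^\top(\tilde\bB-\bB)\;\le\; f(\tilde\xX^*)+\norm{\lmb^*}_1\,\norm{\tilde\bB-\bB}_\infty .
\]
Theorem~\ref{thm:dual_bound}, invoked with the Slater point that is interior to the constraint set $\{A\xX\le\bB\}$, bounds $\norm{\lmb^*}_1$ by the corresponding Slater gap divided by $\zeta$, which is one of the two quantities inside the max defining $\kappa$; hence $f(\xX^*)-f(\tilde\xX^*)\le\kappa\,\norm{\tilde\bB-\bB}_\infty$.

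Second, I would repeat the argument with the roles of $(\bB,\xX^*)$ and $(\tilde\bB,\tilde\xX^*)$ interchanged: take an optimal dual solution $\tilde\lmb^*\ge\mathbf{0}$ of $\min_{A\xX\le\tilde\bB}f(\xX)$, evaluate its Lagrangian at the $\bB$-feasible point $\xX^*$, and apply Theorem~\ref{thm:dual_bound} with the Slater point interior to $\{A\xX\le\tilde\bB\}$ to get $f(\tilde\xX^*)-f(\xX^*)\le\kappa\,\norm{\tilde\bB-\bB}_\infty$. Combining the two one-sided estimates yields $|f(\xX^*)-f(\tilde\xX^*)|\le\kappa\,\norm{\tilde\bB-\bB}_\infty$. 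I expect the only delicate point to be the bookkeeping: one must correctly pair each Slater point with the program whose feasible set it lies strictly inside, so that the quantity produced by Theorem~\ref{thm:dual_bound} coincides with the intended term of $\max\{f(\bar\xX)-f(\xX^*),\,f(\bar\xX')-f(\tilde\xX^*)\}$. Strong duality itself poses no obstacle, since the Slater conditions assumed for both right-hand sides, together with the standing properness, convexity, and lower semicontinuity of $f$, guarantee it, and the finiteness of $\kappa$ is ensured because each Slater gap is nonnegative and finite.
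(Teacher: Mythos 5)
Your proposal is correct and follows essentially the same route as the paper: strong duality (Lemma~\ref{lem:fenchel_duality}), the $\ell_1$ bound on the optimal dual variable (Theorem~\ref{thm:dual_bound}), Holder's inequality, and a symmetric two-sided comparison — your direct evaluation of the Lagrangian of one program at the optimizer of the other is just an unpacked form of the paper's value-function subgradient inequality. The bookkeeping issue you flag (pairing each Slater point with the program it is strictly feasible for) is real but is a labeling quirk already present in the paper's own statement of $\kappa$ versus its proof, and does not affect the validity of your argument.
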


\begin{proof}


Let $\tilde{\lmb}^*$ be the optimal dual solution to the problem $\min_{A\xX \leq \tilde{\bB}} f(\xX) $.
Define the optimal value function given the boundary vector $\bB$ as:
\[
v(\bB)=\min\{\,f(\xX):\ A\xX \le \bB\,\}.
\]
Since \( v(\bB) \) is convex in \( \bB \) and \( \lambda^* \) is an optimal dual solution for the problem with right-hand side \( \bB \), it follows that \( \lambda^* \) is a subgradient of \( v \) at \( \bB \). By  the convexity of \( v \), we have:
\begin{align}
\nonumber
&v(\tilde{\bB}) \ge v(\bB) + \lambda^{*\top} (\tilde{\bB} - \bB).
\end{align}
Similarly, we have:
\begin{align}
\nonumber
&v(\bB) \ge v(\tilde{\bB}) + \tilde{\lambda}^{*\top} ( \bB - \tilde{\bB}).
\end{align}
Combining the above two inequalities, we obtain that:
\begin{align}
    &\bigl|v(\tilde{\bB}) - v(\bB)\bigr| \le \max\{ \lambda^{*\top} ( \bB - \tilde{\bB} ), \tilde{\lambda}^{*\top} (\tilde{\bB}-\bB )\}. \label{perturbation_bound_no_v_in_statement_1}
\end{align}
By Holder's inequality and Theorem \ref{thm:dual_bound}, we have:
\begin{align}
\label{perturbation_bound_no_v_in_statement_2}
  \bigl|\lambda^{*\top} (\tilde{\bB} - \bB)\bigr| \leq  \|\lambda^*\|_1 \, \| \tilde{\bB} - \bB \|_\infty  \leq \frac{f(\bar{\xX}) -f(\xX^*)}{\zeta} \,  \|\bB-\tilde{\bB}\|_\infty,\\
   \bigl|\tilde{\lambda}^{*\top} (\tilde{\bB} - \bB)\bigr| \leq  \|\tilde{\lambda}^*\|_1 \, \| \tilde{\bB} - \bB \|_\infty  \leq \frac{f(\bar{\xX}') -f(\tilde{\xX}^*)}{\zeta} \,  \|\bB-\tilde{\bB}\|_\infty. \label{perturbation_bound_no_v_in_statement_3}
\end{align}
Applying \eqref{perturbation_bound_no_v_in_statement_2} and \eqref{perturbation_bound_no_v_in_statement_3} to \eqref{perturbation_bound_no_v_in_statement_1} and noting that  \( f(\xX^*) = v(\bB) \) and \( f(\tilde{\xX}^*) = v(\tilde{\bB}) \), we conclude the required statement.
\end{proof}

\end{document}